\newtheorem{thm}{Theorem}
\newtheorem{lem}[thm]{Lemma}
\newtheorem{cor}[thm]{Corollary}
\newtheorem{prop}[thm]{Proposition}
\newtheorem{obs}[thm]{Observation}
\newtheorem{clm}{Claim}
\theoremstyle{definition}
\newtheorem{defn}{Definition}
\theoremstyle{remark}
\newtheorem*{rem}{Remark}
\newcommand{\procref}[1]{\ProcNameSty{\ref{#1}}}
\newcommand{\T}{\mathcal{T}}
\newcommand{\stc}{s_{\mathrm{tc}}}
\newcommand{\htc}{h_{\mathrm{tc}}}
\newcommand{\None}{\textsc{None}}
\newcommand{\tree}[2][]{T\ifthenelse{\equal{#1}{}}{}{_{#1}}^{(#2)}}
\newcommand{\TreeChildSequence}{\FuncSty{TCS}}
\newcommand{\TreeChildSequencePruned}{\FuncSty{TCS2}}
\newcommand{\UpdateR}{\FuncSty{UpdateR}}
\newcommand{\cc}[3]{\textit{count}(#1,#2,#3)}
\newcommand{\markj}[1]{#1}
\newcommand{\yuki}[1]{#1}
\newcommand{\leo}[1]{#1}
\newcommand{\reduce}[2]{#1/#2}
\begin{document}

\title{A Practical Fixed-Parameter Algorithm for Constructing Tree-Child
  Networks from Multiple Binary Trees\thanks{Leo van Iersel, Remie Janssen, Mark Jones and Yukihiro Murakami were supported by the Netherlands Organization for Scientific Research (NWO), including Vidi grant 639.072.602, and van Iersel also by the 4TU Applied Mathematics Institute. Norbert Zeh was supported by the Natural Sciences and Engineering Research Council of Canada.}}
\author{Leo van Iersel\footnote{Delft Institute of Applied Mathematics, Delft University of Technology, Van Mourik Broekmanweg 6, 2628 XE, Delft, The Netherlands, \texttt{\{L.J.J.vanIersel,R.Janssen-2,M.E.L.Jones,Y.Murakami\}@tudelft.nl}.}\and Remie Janssen\footnotemark[2]\and Mark Jones\footnotemark[2]\and Yukihiro
  Murakami\footnotemark[2]\and Norbert Zeh\footnote{Faculty of Computer Science, Dalhousie University, 6050 University Ave, Halifax, NS B3H 1W5, Canada, \texttt{nzeh@cs.dal.ca}.}}
  
\maketitle

\begin{abstract}\noindent
  We present the first fixed-parameter algorithm for constructing a tree-child
  phylogenetic network that displays an arbitrary number of binary input trees
  and has the minimum number of reticulations among all such networks.
  The algorithm uses the recently introduced framework of cherry picking
  sequences and runs in $O((8k)^k \mathrm{poly}(n, t))$ time, where $n$ is the
  number of leaves of every tree, $t$ is the number of trees, and $k$ is the
  reticulation number of the constructed network.
  Moreover, we provide an efficient parallel implementation of the algorithm
  and show that it can deal with up to~$100$ input trees on a standard desktop
  computer, thereby providing a major improvement over previous phylogenetic
  network construction methods.
\end{abstract}

\section{Introduction}

Evolutionary histories are usually described by phylogenetic trees or networks.
A phylogenetic tree describes how a collection of studied taxa (e.g., species,
strains or languages) have evolved over time by divergence events, often
also called \emph{speciation} events.
A phylogenetic network can additionally describe events where lineages merge,
such as hybridization or lateral gene transfer, which are called
\emph{reticulation} events.
A central goal of computational phylogenetics is to develop methods for
reconstructing phylogenetic networks from various types of inputs.

One of the most fundamental problems in this area, \textsc{Hybridization
  Number}, is to find a phylogenetic network with the minimum number of
reticulation events among all networks that contain a given collection
of phylogenetic trees.
The network is said to \emph{display} each of the input trees.
Each of these trees represents the evolution, through speciation events
and mutation, of a particular gene.
Reticulation events such as hybridization or lateral gene transfer can lead
to discordance between gene trees.
The requirement that each gene tree should be contained in the constructed
network ensures that the network provides the required paths along which
each gene could be passed from ancestors to descendants in a manner consistent
with its gene tree.
Following the parsimony principle, a network with the minimum number of
reticulations that displays all inputs trees offers a simplest possible
model of the evolution of a set of taxa consistent with the given gene
trees.
Hence the goal to compute a \leo{phylogenetic} network with as few reticulations
as possible.
Since not all discordance between gene trees is due to reticulation events,
such a network provides only an estimate of the actual number of reticulation
events.
Nevertheless, hybridization networks have proven to be a valuable tool in the
study of the evolution of different sets of taxa.
Computing hybridization networks with the minimum number of reticulations,
however, has proven to be a major challenge.

Initial research focused on the special case that the input consists of only
two trees, in which case there exists a nice mathematical characterization of
the problem in terms of maximum agreement forests (MAFs)
\cite{baroni2005bounding}.
This characterization has shown to be extremely useful for the development of
fixed-parameter algorithms for phylogenetic network construction problems
on two trees \cite{bordewich2007computing,whidden2013fixed,chen2012algorithms},
with the currently fastest algorithm for \textsc{Hybridization Number} running
in $O(3.18^kn)$ time~\cite{whidden2013fixed}.

When the input consists of more than two trees, the problem becomes
significantly harder.
Kernelization is still possible~\cite{van2013quadratic,van2016kernelizations}.
However, existing algorithms for solving kernelized instances,
\textsc{Treetistic}~\cite{treetistic}, PIRN~\cite{wu2010close},
PIRNs~\cite{mirzaei2016fast} and
\textsc{Hybroscale}~\cite{albrecht2015computing,albrecht2014computing}, are
limited to (very) small numbers of input trees and/or (very) small numbers of
reticulation events.
None of these algorithms is fixed-parameter tractable (FPT) unless combined
with kernelization.
A bounded-search FPT algorithm with running time $O(c^k \textrm{poly}(n))$
for the special case of three input trees was proposed
in~\cite{van2016hybridization} ($n$ is the number of taxa, $k$ the number
of reticulations), but the constant~$c$ is much too big for the algorithm to be
useful in practice.

The main bottleneck hindering the development of practical algorithms seemed to
be the missing mathematical characterization for the problem on more than two
trees, analogous to the MAF characterization for two trees.
Such a characterization, in terms of cherry picking sequences, was developed
recently and is very different from the MAF characterization for two trees.
The first characterization in terms of cherry picking sequences was developed
for the restricted class of temporal networks~\cite{humphries2013cherry}.
Subsequently, it was generalized to the larger class of tree-child
networks~\cite{LinzSemple2017}, in which each non-leaf vertex is required to
have at least one non-reticulate child.
However, Humphries, Linz, and Semple~\cite{humphries2013cherry} provide only a
theoretical FPT result based on kernelization for temporal networks, and Linz
and Semple~\cite{LinzSemple2017} do not present any algorithmic results.
Hence, the fixed-parameter tractability of the tree-child version of
\textsc{Hybridization Number} remained open, as well as the development of
practical FPT algorithms based on the new characterization.

Our contribution is to fill this algorithmic gap.
We show that there exists an FPT algorithm for \textsc{Hybridization Number}
restricted to tree-child networks on an arbitrary collection of binary input
trees.
Its running time is $O((8k)^k \cdot \mathrm{poly}(n, t))$, where $n$ is the
number of taxa, $t$ is the number of trees, and $k$ is the number of
reticulations in the computed network.
We verify experimentally that, combined with two heuristic improvements that
both preserve the correctness of the algorithm, it can solve fairly complex
instances of tree-child \textsc{Hybridization Number}.
These two heuristics are cluster reduction~\cite{bordewich2007clustering}
and a redundant branch elimination technique introduced in this paper.
The implementation used in our experiments is available from
\verb|https://github.com/nzeh/tree_child_code|.

The main practical benefit of our algorithm is that it can handle many more
input trees than existing methods.
Indeed, in experiments on synthetic inputs, the running time grows roughly
linearly in the number of trees and taxa.
On the other hand, the running time still has a large exponential dependency on
the number of reticulation events~$k$.
Nevertheless, as long as~$k$ is small (at most 7--12), our algorithm can solve
inputs with up to~$100$ input trees and $200$ taxa.
In our experiments on real-world data, we observed that these data sets have
substantially more structure than random synthetic data sets, which makes
cluster reduction and redundant branch elimination more effective and allowed
our algorithm to solve inputs with \leo{up to 8 trees and
50 reticulations}.
As the number of trees increases, however, the inputs become less
``clusterable'', which reduces the number of reticulations our algorithm can
handle.

We also compared our algorithm directly to \textsc{Hybroscale}.
For instances consisting of two input trees, \textsc{Hybroscale} is much faster
because it exploits the MAF characterization for this case.
When the number of input trees is at least three, our algorithm turns
out to be much faster than \textsc{Hybroscale}, which could handle only very
few instances with more than five trees.

We restrict our attention to tree-child networks for two reasons.
First, although Linz and Semple~\cite{LinzSemple2017} also provided a
characterization of unrestricted hybridization networks in terms of cherry
picking sequences, this characterization is based on adding leaves;
since it is not known where to add these leaves, this characterization does not
seem to be directly useful for developing FPT algorithms.
Furthermore, we observed in our experiments that the optimal tree-child
network for a set of trees often has the same number of reticulations as
an optimal unrestricted hybridization network.
Hence, the restriction to tree-child networks allows us to deal with larger
numbers of input trees without changing the problem substantially.

The remainder of this paper is organized as follows:
Section~\ref{sec:preliminaries} formally defines the key concepts including
the \textsc{Hybridization Number} and \textsc{Tree-Child Hybridization}
problems.
Section~\ref{sec:algorithm} presents our FPT algorithm for
\textsc{Tree-Child Hybridization}.
Section~\ref{sec:redundant-branch-elimination} presents our redundant
branch elimination heuristic for speeding up the algorithm in practice.
This section also shows that redundant branch elimination preserves the
correctness of the computed cherry picking sequence.
Section~\ref{sec:experiments} presents some details of our implementation
of the algorithm and discusses our experimental results.
We present some concluding remarks in Section~\ref{sec:conclusions}.

\section{Preliminaries and Definitions}

\label{sec:preliminaries}

\subsection{Phylogenetic Trees and Networks}

Throughout this paper, we denote by $X$ a finite non-empty set of taxa.
A \emph{phylogenetic network on a subset $X' \subseteq X$} is a directed
acyclic graph $N$ whose nodes satisfy the following properties:
There is a single node of in-degree $0$ and out-degree $2$, called the
\emph{root}; the nodes of in-degree $1$ and out-degree~$0$ are bijectively
labelled with elements from $X'$ (the \emph{leaves}); all other nodes either
have in-degree $1$ and out-degree~$2$ (the \emph{tree nodes}) or have
out-degree $1$ and in-degree at least $2$ (the \emph{reticulations}).
This is illustrated in Figure~\ref{fig:network-and-trees}a.
A \emph{phylogenetic tree on $X'$} is a phylogenetic network on $X'$ without
reticulations; see Figure~\ref{fig:network-and-trees}b.
Given a directed edge $uv$ in a phylogenetic network or tree, we say that $u$
is a \emph{parent} of $v$ and $v$ is a \emph{child} of $u$.
If $|X'| = 1$ then a phylogenetic network or tree on $X'$ consists of a single
node labelled with the unique element of $X'$.

For brevity, we usually refer to phylogenetic networks and phylogenetic trees
as \emph{networks} and \emph{trees}, respectively.
When we feel the need to state the label set $X'$ of a phylogenetic tree
explicitly, especially when we want to emphasize that a set of trees all share
the same leaf set, we do refer to this tree as an \emph{$X'$-tree}.

Given a directed edge $uv$ in a network $N$, we call $uv$ a \emph{reticulation
  edge} if $v$ is a reticulation;
otherwise, $uv$ is a \emph{tree edge}.
A \emph{tree path} in $N$ is a directed path composed of only tree edges.
A tree path is shown in red in Figure~\ref{fig:network-and-trees}a.
The \emph{reticulation number} of $N$ is the number of reticulation edges in
$N$ minus the number of reticulations.
Alternatively, the reticulation number is the number of edges that need to be
deleted from the network to obtain a tree.

The \emph{restriction} of an $X$-tree $T$ to a subset $X' \subseteq X$ is
the smallest subtree of $T$ that contains all edges on paths between leaves in
$X'$.
If $T$ is an $X$-tree and $T'$ is the restriction of $T$ to some subset
$X' \subseteq X$, we write $T' \subseteq T$.
We also write $T \setminus T'$ to denote the difference $X \setminus X'$ of
the label sets of the two trees.

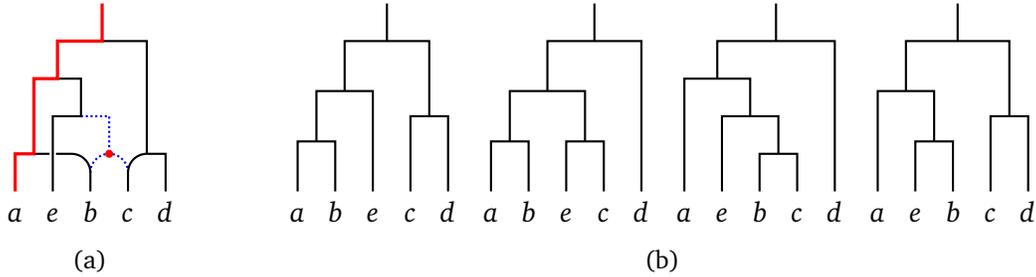
\begin{figure}[t]
  \hspace{\stretch{1}}%
  \subcaptionbox{}{\begin{tikzpicture}[x=0.5cm,y=0.5cm,
      node/.style={fill,circle,inner sep=0pt,minimum size=3pt}]
      \coordinate (a) at (0,0); \path (a) +(90:1) coordinate (pa);
      \coordinate (e) at (1,0); \path (e) +(90:2) coordinate (pe);
      \coordinate (b) at (2,0); \path (b) +(90:0.5) coordinate (pb);
      \coordinate (c) at (3,0); \path (c) +(90:0.5) coordinate (pc);
      \coordinate (d) at (4,0); \path (d) +(90:1) coordinate (pd);
      \coordinate (ab) at (0.5,1); \path (ab) +(90:2) coordinate (pab);
      \coordinate (bc) at (2.5,1); \path (bc) +(90:1) coordinate (pbc);
      \coordinate (cd) at (3.5,1); \path (cd) +(90:3) coordinate (pcd);
      \coordinate (bce) at (barycentric cs:pbc=0.5,pe=0.5); \path (bce) +(90:1) coordinate (pbce);
      \coordinate (abce) at (barycentric cs:pab=0.5,pbce=0.5); \path (abce) +(90:1) coordinate (pabce);
      \coordinate (abcde) at (barycentric cs:pabce=0.5,pcd=0.5); \path (abcde) +(90:1) coordinate (pabcde);
      \foreach \v in {a,b,c,d,e} {
        \node [anchor=north,text height=height("b")] at (\v) {$\v$};
      }
      \draw [thick,densely dotted,blue] (pc) arc (0:180:0.5);
      \draw [thick] (ab) -- ++(0:1) arc (90:0:0.5) -- (b);
      \draw [thick] (d) -- (pd) -- ++(180:0.5) arc (90:180:0.5) -- (c);
      \draw [line width=2.4pt,white] (e) -- (pe);
      \draw [blue,densely dotted,thick] (bce) -- (pbc) -- (bc);
      \draw [thick] (abce) -- (pbce) -- (bce) -- (pe) -- (e);
      \draw [thick] (abcde) -- (pcd) -- (cd);
      \draw [very thick,red] (pabcde) -- (abcde) -- (pabce) -- (abce) -- (pab) -- (ab) -- (pa) -- (a);
      \node [node,red] at (bc) {};
    \end{tikzpicture}}%
  \hspace{\stretch{1}}%
  \subcaptionbox{}{\begin{tikzpicture}[x=0.5cm,y=0.5cm]
      \coordinate (a) at (0,0); \path (a) +(90:1.33) coordinate (pa);
      \coordinate (b) at (1,0); \path (b) +(90:1.33) coordinate (pb);
      \coordinate (e) at (2,0); \path (e) +(90:2.67) coordinate (pe);
      \coordinate (c) at (3,0); \path (c) +(90:2) coordinate (pc);
      \coordinate (d) at (4,0); \path (d) +(90:2) coordinate (pd);
      \coordinate (ab) at (barycentric cs:pa=0.5,pb=0.5); \path (ab) +(90:1.34) coordinate (pab);
      \coordinate (abe) at (barycentric cs:pe=0.5,pab=0.5); \path (abe) +(90:1.33) coordinate (pabe);
      \coordinate (cd) at (barycentric cs:pc=0.5,pd=0.5); \path (cd) +(90:2) coordinate (pcd);
      \coordinate (abcde) at (barycentric cs:pabe=0.5,pcd=0.5); \path (abcde) +(90:1) coordinate (pabcde);
      \foreach \v in {a,b,c,d,e} {
        \node [anchor=north,text height=height("b")] at (\v) {$\v$};
      }
      \draw [thick] (a) -- (pa) -- (pb) -- (b);
      \draw [thick] (c) -- (pc) -- (pd) -- (d);
      \draw [thick] (ab) -- (pab) -- (pe) -- (e);
      \draw [thick] (abe) -- (pabe) -- (pcd) -- (cd);
      \draw [thick] (abcde) -- (pabcde);
    \end{tikzpicture}
    \begin{tikzpicture}[x=0.5cm,y=0.5cm]
      \coordinate (a) at (0,0); \path (a) +(90:1.33) coordinate (pa);
      \coordinate (b) at (1,0); \path (b) +(90:1.33) coordinate (pb);
      \coordinate (e) at (2,0); \path (e) +(90:1.33) coordinate (pe);
      \coordinate (c) at (3,0); \path (c) +(90:1.33) coordinate (pc);
      \coordinate (d) at (4,0); \path (d) +(90:4) coordinate (pd);
      \coordinate (ab) at (barycentric cs:pa=0.5,pb=0.5); \path (ab) +(90:1.34) coordinate (pab);
      \coordinate (ce) at (barycentric cs:pc=0.5,pe=0.5); \path (ce) +(90:1.34) coordinate (pce);
      \coordinate (abce) at (barycentric cs:pab=0.5,pce=0.5); \path (abce) +(90:1.33) coordinate (pabce);
      \coordinate (abcde) at (barycentric cs:pabce=0.5,pd=0.5); \path (abcde) +(90:1) coordinate (pabcde);
      \foreach \v in {a,b,c,d,e} {
        \node [anchor=north,text height=height("b")] at (\v) {$\v$};
      }
      \draw [thick] (a) -- (pa) -- (pb) -- (b);
      \draw [thick] (e) -- (pe) -- (pc) -- (c);
      \draw [thick] (ab) -- (pab) -- (pce) -- (ce);
      \draw [thick] (abce) -- (pabce) -- (pd) -- (d);
      \draw [thick] (abcde) -- (pabcde);
    \end{tikzpicture}
    \begin{tikzpicture}[x=0.5cm,y=0.5cm]
      \coordinate (a) at (0,0); \path (a) +(90:3) coordinate (pa);
      \coordinate (e) at (1,0); \path (e) +(90:2) coordinate (pe);
      \coordinate (b) at (2,0); \path (b) +(90:1) coordinate (pb);
      \coordinate (c) at (3,0); \path (c) +(90:1) coordinate (pc);
      \coordinate (d) at (4,0); \path (d) +(90:4) coordinate (pd);
      \coordinate (bc) at (barycentric cs:pb=0.5,pc=0.5); \path (bc) +(90:1) coordinate (pbc);
      \coordinate (bce) at (barycentric cs:pbc=0.5,pe=0.5); \path (bce) +(90:1) coordinate (pbce);
      \coordinate (abce) at (barycentric cs:pa=0.5,pbce=0.5); \path (abce) +(90:1) coordinate (pabce);
      \coordinate (abcde) at (barycentric cs:pabce=0.5,pd=0.5); \path (abcde) +(90:1) coordinate (pabcde);
      \foreach \v in {a,b,c,d,e} {
        \node [anchor=north,text height=height("b")] at (\v) {$\v$};
      }
      \draw [thick] (b) -- (pb) -- (pc) -- (c);
      \draw [thick] (e) -- (pe) -- (pbc) -- (bc);
      \draw [thick] (a) -- (pa) -- (pbce) -- (bce);
      \draw [thick] (abce) -- (pabce) -- (pd) -- (d);
      \draw [thick] (abcde) -- (pabcde);
    \end{tikzpicture}
    \begin{tikzpicture}[x=0.5cm,y=0.5cm]
      \coordinate (a) at (0,0); \path (a) +(90:2.67) coordinate (pa);
      \coordinate (e) at (1,0); \path (e) +(90:1.33) coordinate (pe);
      \coordinate (b) at (2,0); \path (b) +(90:1.33) coordinate (pb);
      \coordinate (c) at (3,0); \path (c) +(90:2) coordinate (pc);
      \coordinate (d) at (4,0); \path (d) +(90:2) coordinate (pd);
      \coordinate (be) at (barycentric cs:pb=0.5,pe=0.5); \path (be) +(90:1.34) coordinate (pbe);
      \coordinate (abe) at (barycentric cs:pa=0.5,pbe=0.5); \path (abe) +(90:1.33) coordinate (pabe);
      \coordinate (cd) at (barycentric cs:pc=0.5,pd=0.5); \path (cd) +(90:2) coordinate (pcd);
      \coordinate (abcde) at (barycentric cs:pabe=0.5,pcd=0.5); \path (abcde) +(90:1) coordinate (pabcde);
      \foreach \v in {a,b,c,d,e} {
        \node [anchor=north,text height=height("b")] at (\v) {$\v$};
      }
      \draw [thick] (e) -- (pe) -- (pb) -- (b);
      \draw [thick] (c) -- (pc) -- (pd) -- (d);
      \draw [thick] (a) -- (pa) -- (pbe) -- (be);
      \draw [thick] (abe) -- (pabe) -- (pcd) -- (cd);
      \draw [thick] (abcde) -- (pabcde);
    \end{tikzpicture}}%
  \hspace{\stretch{1}}
  \caption{\label{fig:network-and-trees}(a) A phylogenetic network that
    is not tree-child because both children of the red node are reticulations.
    Its reticulation number is 2.
    A tree path from the root to the leaf labelled $a$ is shown in red.
    (b) The four phylogenetic trees displayed by the network in (a).
    For example, the first tree can be obtained by deleting the dotted edges
    in (a).
    The red and black edges constitute an embedding of this tree into the
    network.}
\end{figure}

Let $N'$ be a subgraph (e.g., a path) of the network $N$.
Any edge $uv \in N$ such that $u \in N'$ and $v \notin N'$ is called a
\emph{pendant edge} of $N'$; $v$ is a \emph{pendant node} of $N'$.
When $N$ is a tree, we say the subtree rooted at $v$ is a \emph{pendant
  subtree} of $N'$.

\begin{rem}
  We note that phylogenetic networks as defined in this paper have out-degree
  at most $2$ on all nodes.
  This is consistent with the definitions used by Linz and
  Semple~\cite{LinzSemple2017}.
  As noted by Linz and Semple, restricting network nodes to have out-degree at
  most $2$ does not result in any loss of generality.
  In particular, for the problems discussed in this paper, any instance that has
  a network with out-degree greater than $2$ as a solution also has a network
  with out-degree at most $2$ as a solution.

  While phylogenetic trees may in general have unbounded out-degree,
  we require phylogenetic trees to have maximum out-degree $2$ in this paper,
  that is, we restrict our attention to what are normally called ``binary''
  trees.
  It is an open question whether our algorithm can be extended to
  input trees of unbounded out-degree.
  We note that Linz and Semple's result relating tree-child networks to
  tree-child sequences imposes no restriction on the out-degree of
  phylogenetic trees but does not offer any algorithm to find an optimal
  tree-child sequence or network even for binary trees.
\end{rem}

\subsection{Minimum Tree-Child Hybridization}

Given a network $N$ on a set of taxa $X$ and a tree $T$ on a subset $X'
\subseteq X$, we say that $N$ \emph{displays} $T$ if $T$ can be obtained from a
subgraph of $N$ by suppressing nodes of out-degree and in-degree $1$
\markj{(a node $v$ with out-degree and in-degree $1$ is \emph{suppressed} by deleting $v$ and replacing the edges $uv$ and $vw$ with the single edge $uw$)}.
Equivalently, $N$~displays $T$ if there exists a function $f$, called an
\emph{embedding} of $T$ into $N$, that maps nodes of $T$ to nodes of $N$, and
edges of $T$ to directed paths in $N$, such that
\begin{itemize}[noitemsep]
 \item Every leaf of $T$ is mapped to the leaf of $N$ with the same label;
 \item For each edge $uv$ in $T$, the path $f(uv)$ is a directed path in $N$
   from $f(u)$ to $f(v)$; and
 \item For any two distinct edges $e$ and $e'$ of $T$, the paths $f(e)$ and
   $f(e')$ are edge-disjoint.
\end{itemize}
For any embedding $f$ and any node or edge $x$, we call $f(x)$ the \emph{image}
of $x$ (under $f$).
This definition extends naturally to arbitrary subgraphs $T' \subseteq T$ by
defining the image $f(T')$ of $T'$ to be the union of the images of all nodes and
edges in $T'$.
For a set of trees $\T = \{T_1, \dots, T_t\}$, we say that $N$ \emph{displays}
$\T$ if $N$ displays every tree $T_i \in \T$.
For example, the network in Figure~\ref{fig:network-and-trees}a displays all
trees in Figure~\ref{fig:network-and-trees}b.
An embedding of the first tree into the network is shown.

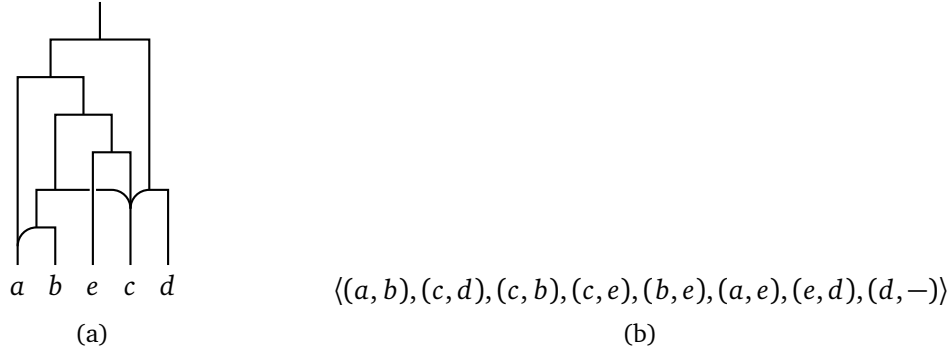
\begin{figure}[t]
  \hspace{\stretch{1}}%
  \subcaptionbox{}{\begin{tikzpicture}[x=0.5cm,y=0.5cm,
      node/.style={fill,circle,inner sep=0pt,minimum size=3pt}]
      \coordinate (a) at (0,0); \path (a) +(90:0.5) coordinate (pa) +(90:5) coordinate (ppa);
      \coordinate (b) at (1,0); \path (b) +(90:1) coordinate (pb);
      \coordinate (e) at (2,0); \path (e) +(90:3) coordinate (pe);
      \coordinate (c) at (3,0); \path (c) +(90:3) coordinate (pc);
      \coordinate (d) at (4,0); \path (d) +(90:2) coordinate (pd);
      \coordinate (ab) at (0.5,1); \path (ab) +(90:1) coordinate (pab);
      \coordinate (abc) at (1,2); \path (abc) +(90:2) coordinate (pabc);
      \coordinate (ce) at (barycentric cs:pc=0.5,pe=0.5); \path (ce) +(90:1) coordinate (pce);
      \coordinate (abce) at (barycentric cs:pabc=0.5,pce=0.5); \path (abce) +(90:1) coordinate (pabce);
      \coordinate (cd) at (3.5,2); \path (cd) +(90:4) coordinate (pcd);
      \coordinate (aabce) at (barycentric cs:ppa=0.5,pabce=0.5); \path (aabce) +(90:1) coordinate (paabce);
      \coordinate (abcde) at (barycentric cs:paabce=0.5,pcd=0.5); \path (abcde) +(90:1) coordinate (pabcde);
      \foreach \v in {a,b,c,d,e} {
        \node [anchor=north,text height=height("b")] at (\v) {$\v$};
      }
      \draw [thick] (pa) arc (180:90:0.5) -- (pb) -- (b);
      \draw [thick] (ab) -- (pab) -- ++(0:2) arc (90:0:0.5);
      \draw [thick] (d) -- (pd) -- ++(180:0.5) arc (90:180:0.5);
      \draw [line width=2.4pt,white] (e) -- (pe);
      \draw [thick] (e) -- (pe) -- (pc) -- (c);
      \draw [thick] (abc) -- (pabc) -- (pce) -- (ce);
      \draw [thick] (a) -- (ppa) -- (pabce) -- (abce);
      \draw [thick] (aabce) -- (paabce) -- (pcd) -- (cd);
      \draw [thick] (abcde) -- (pabcde);
    \end{tikzpicture}}%
  \hspace{\stretch{1}}%
  \subcaptionbox{}{$\langle (a,b), (c,d), (c,b), (c,e), (b,e), (a,e), (e,d), (d,-) \rangle$}%
  \hspace{\stretch{1}}
  \caption{\label{fig:tree-child-networks}(a) An optimal tree-child network for
    the four trees in Figure~\ref{fig:network-and-trees}b.
    Note that this network has reticulation number~3, one more than the
    non-tree-child hybridization network for these trees in
    Figure~\ref{fig:network-and-trees}a.
    The tree-child cherry picking sequences corresponding to this network
    is shown in (b).}
\end{figure}

The {\sc Minimum Hybridization} problem takes as input a set $\T$ of
phylogenetic trees and an integer $k$, and asks for a network displaying $\T$
and with reticulation number at most $k$, if such a network exists.
In this paper, we focus on a restricted version of {\sc Minimum Hybridization},
described below.

A network $N$ is \emph{tree-child} if every non-leaf node of $N$ has at least
one child that is a tree node.
Note that this is equivalent to requiring that every node in $N$ has a tree
path to a leaf.
The network in Figure~\ref{fig:network-and-trees}a is not tree-child because
the children of the red node are both reticulations.
A tree-child network displaying the trees in Figure~\ref{fig:network-and-trees}b
is shown in Figure~\ref{fig:tree-child-networks}a.

\begin{trivlist}
\item \textsc{Minimum Tree-Child Hybridization}
\item \textbf{Input:} A set $\T = \{T_1, \dots, T_t\}$
  of phylogenetic trees on $X$ and an integer $k$.
\item \textbf{Output:} A tree-child phylogenetic network $N$ on $X$ that
  displays $\T$ and has at most $k$ reticulations, if such a network exists;
  {\sc None} otherwise.
\end{trivlist}

For a set $\T = \{T_1, T_2, \ldots, T_t\}$ of $X$-trees,
let $h(\T)$ denote the \emph{hybridization number} of $\T$, 
that is, the minimum reticulation number of all networks that display $\T$.
Similarly,
let $\htc(\T)$ denote the \emph{tree-child hybridization number} of $\T$,
that is, the minimum reticulation number of all tree-child networks that
display $\T$.

\subsection{Cherry Picking Sequences}

For any tree $T$ on $X' \subseteq X$ and any two taxa $x,y \in X'$, 
we say that $\{x,y\}$ is a \emph{cherry} of $T$ if the leaves labelled with $x$
and $y$ are siblings in $T$.
Observe that any tree with two or more leaves contains at least one cherry.
A pair $\{x,y\}$ is a cherry of a set of trees $\T$ if it is a cherry of at
least one tree in $\T$.
It is a \emph{trivial cherry} of $\T$ if $\{x,y\}$ is a cherry of every tree in
$\T$ that contains both $x$ and $y$.

Linz and Semple~\cite{LinzSemple2017} gave a characterization of tree-child
hybridization number in terms of \emph{cherry picking sequences}, which we
define next. \markj{Informally, a cherry picking sequence is a sequence of pairs of leaves, describing a sequence of operations on a set of trees $\T$. In particular a pair of the form $(x,y)$ denotes the operation of removing leaf $x$ from any tree in $\T$ that has $\{x,y\}$ as a cherry, while a pair of the form $(x, -)$ is used when at least one tree in $\T$ has been reduced to the single leaf $x$.}

\markj{Formally, a} \emph{cherry picking sequence} is a sequence
\begin{equation*}
  S = \langle(x_1, y_1), (x_2, y_2), \ldots, (x_r, y_r), (x_{r+1}, -),
  (x_{r+2}, -), \ldots, (x_s, -)\rangle
\end{equation*}
with $\{x_1, x_2, \ldots, x_s, y_1, y_2, \ldots, y_r\} \subseteq X$.
We write $|S|$ to denote the length $s$ of $S$.
It may be that $s=r$, in which case the last element is $(x_r,y_r)$, that is,
there are no pairs of the form $(x_j, -)$.
We call such a sequence a \emph{partial} cherry picking sequence.
A sequence is \emph{full} if $s > r$ and $\{x_1, \dots x_s\} = X$.
For any $1\leq i \leq j \leq s$, we denote by $S_{i,j}$ the subsequence
$\langle(x_i,y_i), \dots, (x_j,y_j)\rangle$ (where $y_h$ is replaced with $-$
for $h > r$). 
Given two sequences $S = \langle(x_1, y_1),  \ldots, (x_r, y_r) \rangle$ and
$S' = \langle(x'_1, y'_1),  \ldots, (x'_{r'}, y'_{r'}), (x'_{r'+1}, -), \ldots,
(x'_{s'}, -)\rangle$, we denote by $S \circ S'$ the sequence $\langle(x_1,
y_1),  \ldots, (x_r, y_r), (x'_1, y'_1),  \ldots, (x'_{r'}, y'_{r'}),
(x'_{r'+1}, -), \ldots, (x'_{s'}, -)\rangle$.
We say that $S \circ S'$ is an \emph{extension} of $S$, and that $S$ is a
\emph{prefix} of $S \circ S'$.
If $S' \neq \langle\rangle$, then we call $S$ a \emph{proper} prefix of $S
\circ S'$.

For a tree $T$ on $X' \subseteq X$, the sequence $S$ defines a sequence of
trees $\langle T^{(0)}, T^{(1)}, \dots, T^{(r)}\rangle$ as follows:
\begin{itemize}[noitemsep]
 \item $T^{(0)} = T$;
 \item If $\{x_j, y_j\}$ is a cherry of $\tree{j-1}$, then $\tree{j}$ is
    obtained from $\tree{j-1}$ by removing $x_j$ and suppressing $y_j$'s
    parent.
    Otherwise, $\tree{j} = \tree{j-1}$.
\end{itemize}
For notational convenience, we refer to $\tree{r}$ as $\reduce{T}{S}$, the
tree obtained by \emph{applying} the sequence $S$ to $T$.
In addition, for a set of trees $\T = \{T_1, \dots, T_t\}$,
we write $\T^{(j)}$ to denote the set $\{\tree[1]{j}, \dots, \tree[t]{j}\}$,
and $\reduce{\T}{S}$ to denote the set $\{\reduce{T_1}{S}, \dots,
  \reduce{T_r}{S}\}$.

A full cherry picking sequence $S = \langle(x_1, y_1), (x_2, y_2), \ldots,
(x_r, y_r), (x_{r+1}, -), (x_{r+2}, -), \ldots, (x_s, -)\rangle$ is a
cherry picking sequence \emph{for a set of trees $\T$} if every tree in
$\reduce{T}{S}$ has a single leaf and that leaf is in $\{x_{r+1}, \dots,
  x_s\}$.
(Note in particular that every cherry picking sequence for a set
of $X$-trees is full.)
The \emph{weight} $w(S)$ of $S$ is defined to be $|S| - |X|$.

A cherry picking sequence $S$ is \emph{tree-child} if $s \leq r+1$ and $y_j
\neq x_i$ for all $1 \le i < j \le s$.
(Thus, if $S$ is a tree-child cherry picking sequence for $\T$, then $T/S$
consists of the single leaf $x_s$ for \emph{every} tree $T \in \T$.)
The tree-child cherry picking sequences for the set of trees in
Figure~\ref{fig:network-and-trees}b corresponding to the two tree-child
networks in Figures~\ref{fig:tree-child-networks}a,b are shown in
Figure~\ref{fig:tree-child-networks}c,d.
If $S$ is a tree-child cherry picking sequence, we refer to the leaves
$\{x_1, \dots x_r\}$ as \emph{forbidden leaves} with respect to $S$, since
they are forbidden to appear as the second element of any cherry $(x_j, y_j)$
with $j > r$ in any tree-child extension of $S$.
We say that $S \circ S'$ is an \emph{optimal tree-child extension} of $S$ if $S
\circ S'$ is a tree-child sequence for $\T$ and every extension $S \circ S''$
of $S$ that is a tree-child sequence for $\T$ satisfies $w(S\circ S'') \ge w(S
\circ S')$.
For the purposes of algorithmic construction of sequences, we adopt the
convention that $S\circ \None = \None$ for any sequence $S$ and that
$w(\None) = \infty$.

Let $\stc(\T)$ be the minimum weight of all tree-child sequences for $\T$.
Linz and Semple showed that the problem of finding the tree-child hybridization
number of a set $\T$ of $X$-trees is equivalent to finding the minimum weight
of a tree-child cherry picking sequence for $\T$:

\begin{thm}[Linz and Semple \cite{LinzSemple2017}]
  \label{thm:cherry-picking-is-hybridization}
  Let $X$ be a set of taxa, and $\T = \{T_1, T_2, \ldots, T_t\}$ a collection
  of phylogenetic $X$-trees.
  Then
  \begin{equation*}
    \stc(\T) = \htc(\T).
  \end{equation*}
\end{thm}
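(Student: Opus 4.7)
The plan is to establish the equality by proving both inequalities $\stc(\T) \le \htc(\T)$ and $\htc(\T) \le \stc(\T)$ via explicit constructions that map between tree-child networks and tree-child cherry picking sequences while preserving the parameter of interest.

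For $\stc(\T) \le \htc(\T)$, I would start with a tree-child network $N$ displaying $\T$ with $k = \htc(\T)$ reticulations and build a tree-child cherry picking sequence $S$ for $\T$ with $w(S) \le k$ by induction on the number of leaves of $N$. The key structural observation is that every tree-child network with at least two leaves contains either (i) a \emph{cherry}: two leaves $x,y$ with a common parent, or (ii) a \emph{reticulated cherry}: a reticulation $r$ whose (unique) child is a leaf $x$, and one of whose parents has another child that is a leaf $y$. In case (i), I append $(x,y)$ to $S$ and delete the leaf $x$ (suppressing its parent) from $N$; the resulting network is still tree-child, displays $\reduce{\T}{\langle(x,y)\rangle}$, and has the same reticulation number. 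In case (ii), I append $(x,y)$ to $S$ and delete the reticulation edge from $y$'s parent to $r$; this yields a tree-child network with reticulation number $k-1$ displaying $\reduce{\T}{\langle(x,y)\rangle}$. When $N$ is a single leaf $x_s$, I append $(x_s,-)$. The tree-child conditions $s \le r+1$ and $y_j \ne x_i$ for $i < j$ are maintained because a forbidden leaf $y$ cannot be deleted later (in (i) it is retained, in (ii) only $x$ is eligible for later picking as a reticulation child), and the weight accounting gives $w(S) = |S| - |X| = $ (number of case-(ii) steps) $\le k$.

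For $\htc(\T) \le \stc(\T)$, I would take an optimal tree-child cherry picking sequence $S$ for $\T$ and process it in reverse to build a tree-child network $N_S$ with reticulation number $w(S)$. Begin with the single leaf $x_s$, and for $i$ from $r$ down to $1$: if $x_i$ does not yet appear in the current partial network, attach a new pendant edge labelled $x_i$ as a sibling of $y_i$ (subdividing $y_i$'s incoming edge); otherwise, add a new reticulation parent above $x_i$ and connect it via a subdivision of $y_i$'s incoming edge. I would then verify three claims: first, that $N_S$ is tree-child, using the condition $y_j \ne x_i$ for $i < j$ to ensure that at each step the node above $y_i$ retains a tree path to the leaf $y_i$, so no tree node loses its tree child; second, that $N_S$ displays each $T_j \in \T$, which I would prove by maintaining, throughout the construction, an embedding of the partially reduced tree $T_j^{(i)}$ into the partially built network; third, that the number of reticulations created is precisely the number of pairs $(x_i,y_i)$ in which $x_i$ was already present, which equals $|S| - |X| = w(S)$.

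The main obstacle in both directions is the bookkeeping needed to relate the cherry operation on trees to edge operations on the network while preserving the tree-child property. In the forward direction, the subtle point is showing that a reticulated cherry always exists when no ordinary cherry common to all trees is available, and that reducing it does not destroy the tree-child status of any ancestor nor the displaying of trees in which $\{x,y\}$ was not a cherry (these trees are simply unchanged by the reduction, but one must check the embedding still exists in the smaller network). In the backward direction, the crucial step is constructing the embeddings compatibly with all reverse steps simultaneously, which is exactly where the forbidden-leaf constraint earns its keep: if $y_i$ were allowed to be some earlier $x_j$, the embedding of some $T_j^{(i)}$ might be forced through a reticulation edge whose image was already occupied, breaking edge-disjointness.
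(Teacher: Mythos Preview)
The paper does not prove this theorem; it is quoted from Linz and Semple~\cite{LinzSemple2017} and used as a black box. The only related content the paper supplies is the backward construction in the appendix (Procedure~\procref{alg:tree-child-network-from-sequence}), which is exactly your second direction: process the sequence in reverse, subdividing the edge above $y_i$ and either hanging a fresh leaf $x_i$ there or, if $x_i$ already exists, creating (or reusing) a reticulation above it. So your $\htc(\T)\le\stc(\T)$ argument matches the paper's appendix algorithm and, via Proposition~\ref{prop:network-from-sequence}, the paper implicitly endorses exactly this route.

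Your forward direction (network $\Rightarrow$ sequence via repeatedly stripping a cherry or reticulated cherry) is also the standard Linz--Semple argument and is correct in outline. One point you gloss over deserves a sentence of care: when you hit an ordinary cherry $\{a,b\}$ in case~(i), you are not free to delete either leaf---you must delete a leaf that is already forbidden if one of the two is, since otherwise a previously forbidden $x$ could later surface as the second coordinate of a pair and violate the tree-child condition. The reason this always works is that a leaf becomes forbidden only while it still sits below a reticulation (case~(ii)), and once its last reticulation edge is removed it participates in at most one ordinary cherry before being deleted; in particular, both leaves of an ordinary cherry cannot be simultaneously forbidden in a tree-child network. With that refinement your sketch is complete and coincides with the original proof.
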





\section{Finding an Optimal Tree-Child Sequence}

\label{sec:algorithm}

In this section, we show that {\sc Minimum Tree-Child Hybridization} is
fixed-parameter tractable with respect to $k$.
Our proof is based on Linz and Semple's characterization of tree-child
hybridization number in terms of tree-child cherry picking sequences (see
Theorem~\ref{thm:cherry-picking-is-hybridization}).
As such, our main technical contribution is to give a fixed-parameter
algorithm, \procref{alg:tree-child-sequence}, for the problem of finding a
tree-child cherry picking sequence of weight at most $k$, if such a sequence
exists.
By the following proposition, a corresponding tree-child network can then be
found in polynomial time.
 
\begin{prop}[Linz and Semple \cite{LinzSemple2017}]
  \label{prop:network-from-sequence}
  There exists a linear-time algorithm that, given a set $\T$ of
  $X$-trees and a tree-child cherry picking sequence $S$ for $\T$, computes a
  tree-child network $N$ displaying $\T$ with $h(N) \leq w(S)$.
\end{prop}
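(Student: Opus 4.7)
The plan is to construct $N$ directly from $S$ by processing the sequence in reverse. I would initialize $N$ with a single leaf labelled $x_s$ and maintain, for each taxon already in $N$, a designated ``top of the chain'' node $\tau(\cdot)$. Then, for $j=r,r-1,\ldots,1$, update $N$ to ``undo'' the cherry-pick $(x_j,y_j)$: introduce a new tree node $p_j$, together with a new leaf $x_j$ if $x_j$ has not yet been introduced, and attach $p_j$ as the new parent of $\tau(y_j)$ and of the current top node for $x_j$. When the top node being attached to already has a parent in $N$ (because the taxon was visited in a previous reverse step), the attachment is routed through a new reticulation node rather than via a direct edge; the precise position at which the reticulation is inserted is chosen so that the tree-child property is preserved.

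With the construction in hand, the proof has three parts. (a) $N$ is a tree-child network: the key invariant is that, at the moment $(x_j,y_j)$ is processed, the chain descending from $\tau(y_j)$ consists entirely of tree nodes. This follows by induction on $j$ from the tree-child condition $y_j \neq x_i$ for $i<j$, which ensures that $\tau(y_j)$'s chain is never converted into a reticulation by a later reverse step. Hence every newly created $p_j$ has a tree-node child, and existing tree-child structure is preserved. (b) $N$ displays $\T$: I would maintain, for each $T_i \in \T$, an embedding of the partially reduced tree $T_i^{(j)}$ into $N_j$, and show inductively that when $(x_j,y_j)$ is processed in reverse, this embedding extends to one of $T_i^{(j-1)}$ into $N_{j-1}$. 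If $\{x_j,y_j\}$ is a cherry of $T_i^{(j-1)}$, the new parent of $x_j$ and $y_j$ in that tree maps to $p_j$, with its child edges routed along the newly added tree and (if present) reticulation edges; otherwise the old embedding remains valid with only a trivial extension along the new edge $p_j \to \tau(y_j)$. (c) The reticulation bound $h(N) \le w(S)$ follows from a counting argument: a reticulation is added only when a taxon has been seen more than once in the reverse processing, and the total number of such ``repeated visits'' is exactly $s - |X| = w(S)$.

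The main obstacle is in (a), namely specifying where each reticulation is inserted so that the tree-child property is preserved across all reverse steps, and in (b), verifying that the embedding updates remain consistent for every tree simultaneously, in particular when $(x_j,y_j)$ is a cherry in some trees but not in others. The running time is linear in the input size: reading $\T$ and $S$ takes linear time, and each reverse step performs $O(1)$ node and edge operations when $\tau$ is stored as an array indexed by taxa.
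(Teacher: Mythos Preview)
The paper does not prove this proposition: it is quoted from Linz and Semple, and the paper only reproduces the construction as pseudocode in the appendix (procedure \FuncSty{TreeChildNetworkFromSequence}), remarking that linear time is easy to verify. Your reverse-processing idea and your three-part outline---(a) tree-child, (b) displays $\T$, (c) the reticulation count equals $w(S)$---are exactly what underlies that construction, and parts (b) and (c) are sketched correctly.

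There is, however, a genuine issue with your placement of the new tree node on the $y_j$ side. The paper's (i.e., Linz and Semple's) algorithm always splits the parent edge of the \emph{leaf} $y_j$, so each new $p_j$ is inserted \emph{directly above the leaf}, below all previously created $p_{j'}$ with $j' > j$ and $y_{j'} = y_j$. Your description instead inserts $p_j$ above a moving ``top of the chain'' node $\tau(y_j)$, which places $p_j$ \emph{above} all earlier insertions. These are not equivalent, and your version can produce the wrong network. For the sequence $\langle (a,b),(c,b),(b,-)\rangle$, which reduces the tree with cherries $\{a,b\}$ and then $\{c,b\}$ (i.e., the tree on $\{a,b,c\}$ with $\{a,b\}$ as its cherry), the paper's algorithm correctly reconstructs that tree, whereas inserting above a moving $\tau(b)$ yields the tree with cherry $\{b,c\}$, which does not display the original. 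So either $\tau(y_j)$ must remain pinned at the leaf $y_j$---in which case ``top of the chain'' is a misnomer and your invariant in (a) should speak of the chain \emph{above} $y_j$, not below $\tau(y_j)$---or the construction needs to be corrected to insert below previous $p_{j'}$'s rather than above them.
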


For completeness, the pseudocode of this algorithm,
\procref{alg:tree-child-network-from-sequence}, is given in the appendix.
(Linz and Semple do not state a running time for this algorithm, but it is easy
to observe that their algorithm takes linear time in $n = |X|$, given that there
are at most $n$ reticulations.)

Our algorithm for computing a tree-child cherry picking sequence of length
at most $k$ has the following structure:
Starting with the set of trees $\T$ and the empty sequence $S =
\langle\rangle$, the algorithm repeats the following as long as $\T / S$ still
has a cherry.
If $\T / S$ has a trivial cherry $\{x, y\}$ such that $y$ is not forbidden
with respect to $S$, it adds $(x, y)$ to the end of $S$.
If $\T / S$ has no trivial cherry, we show that $\T / S$ has at most $4k$
unique cherries or $\htc(\T) > k$.
The algorithm makes one recursive call for each pair $(x, y)$ such that
$\{x, y\}$ is a cherry of $\T / S$, starting each recursive call by adding $(x,
y)$ to the end of $S$.
(Note that every cherry $\{x, y\}$ of $\T / S$ gives rise to two recursive
calls, one for the pair $(x, y)$ and one for the pair $(y, x)$.)
As this kind of branching step cannot occur more than $k$ times in a sequence
of weight at most $k$, this gives a search tree for our algorithm of depth $k$
and branching number at most $8k$.

\begin{procedure}[t]
  \caption{TCS()$(\T,S,k)$\label{alg:tree-child-sequence}}
  \KwIn{A collection of phylogenetic trees $\T$, a partial tree-child cherry
    picking sequence $S$, and an integer $k$}
  \KwOut{An optimal solution of $(\T, S)$ if $(\T, S)$ has a solution of weight
    at most $k$; $\None$ otherwise} 
  \While{there exists a trivial cherry $\{x,y\}$ of $\T/S$ with $y$ not
    forbidden with respect to $S$\label{lin:LoopTrivialCherryReductionStart}}{ 
    $S \gets S \circ \langle (x,y)
    \rangle$\;\label{lin:LoopTrivialCherryReductionEnd}
  }
  $\T' \gets \T / S$\;\label{lin:PruneCherries}
  \uIf{$\T'$ contains a cherry $\{x,y\}$ with $x,y$ both forbidden with respect
    to $S$}{
    \Return{$\None$\;}\label{lin:FailForbiddenCherry}
  }\Else{
   $n' \gets |\{x \in X: x$ is a leaf of a tree in
     $\T'\}|$\;\label{lin:DefineNumberRemainingLeaves}
   $k' \gets |S| - |X| + n'$\;\label{lin:DefineCurrentWeight}
   $C \gets \{(x,y) \mid \text{$\{x,y\}$ is a cherry of some tree in
       $\T'$}\}$\;\label{lin:DefineCherrySet}
   \uIf{$|C| = 0$}{
     \Return $S \circ \langle(x,-)\rangle$, where $x$ is the last remaining
     leaf in all trees\;\label{lin:SolutionFound}
   }\uElseIf{$|C| > 8k$ or $k' \geq k$\label{lin:FailureConditions}}{
     \Return{$\None$\;}\label{lin:FailTooManyCherries}\label{lin:FailDugTooDeep}
   }\Else{
     $S_{\textrm{opt}} \gets \None$\;\label{lin:RecursiveSolutionStart}
     \ForEach{$(x,y) \in C$ with $y$ not forbidden with respect to $S$}{    
       $S_{\textrm{temp}} \gets \TreeChildSequence(\T, S \circ
       \langle(x,y)\rangle, k)$\;\label{lin:RecursiveCall}
       \If{$w(S_{\textrm{temp}}) < w(S_{\textrm{opt}})$}{
         $S_{\textrm{opt}} \gets S_{\textrm{temp}}$\;
       }
     }
     \Return{$S_{\textrm{opt}}$\;}\label{lin:RecursiveSolutionEnd}
   }
 } 
\end{procedure}
  
%
   
In the remainder of this section, we prove the correctness of
procedure \procref{alg:tree-child-sequence} and analyze its running time.
This is summarized in the following theorem
\markj{(we denote by $\lg$ the logarithmic function with base~$2$)}.

\begin{thm}
  \label{thm:main}
  Given a collection $\T$ of $t$ $X$-trees with $|X| = n$, it takes $O((8k)^k
  nt \lg t + nt \lg nt)$ time to decide whether $\T$ has tree-child
  hybridization number at most $k$ and, if so, compute a corresponding
  tree-child cherry picking sequence.
\end{thm}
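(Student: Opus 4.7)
The plan is to prove Theorem~\ref{thm:main} by establishing two properties of \procref{alg:tree-child-sequence}: correctness and the stated running time. By Theorem~\ref{thm:cherry-picking-is-hybridization}, computing $\htc(\T)$ reduces to computing $\stc(\T)$, and by Proposition~\ref{prop:network-from-sequence} the corresponding network can be recovered in linear time from an optimal sequence, so it suffices to analyze the procedure. I would prove correctness by induction on $k - w(S)$, the remaining weight budget at a recursive call, showing that $\TreeChildSequence(\T, S, k)$ returns an optimal tree-child extension of $S$ of weight at most $k$ if one exists and $\None$ otherwise.

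Three structural claims drive the correctness argument. First, the trivial-cherry loop preserves optimality: if $\{x,y\}$ is a trivial cherry of $\T/S$ and $y$ is not forbidden by $S$, then some optimal tree-child extension of $S$ begins with $(x,y)$. I would prove this by an exchange argument, taking an arbitrary optimal extension and pushing the pair that eliminates $x$ (which must exist, and must be either $(x,y)$ or $(x, z)$ for some $z$) to the front, using that $\{x,y\}$ is a cherry in every tree of $\T/S$ containing $x$ to show the rearrangement does not increase weight. Second, the forbidden-cherry check is sound: if $\T/S$ has a cherry $\{x,y\}$ with both $x$ and $y$ forbidden, then any tree-child extension would have to eliminate this cherry by a pair $(z,w)$ with $\{z,w\} = \{x,y\}$, forcing a forbidden leaf into the second coordinate, contradicting the tree-child property. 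Third, when neither the trivial-cherry nor forbidden-cherry case applies, every optimal extension must begin with some pair $(x,y) \in C$ with $y$ not forbidden, so the exhaustive branching in lines~17--19 finds an optimum.

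The main obstacle is the bounded-branching claim in line~13, that $|C| \le 8k$ whenever $\T/S$ has no trivial cherry and admits a tree-child extension of weight at most $k$. I would prove the strengthened form: the number of unordered cherries of $\T/S$ is at most $4k$, so $|C| \le 8k$ because each unordered cherry contributes two ordered pairs. The intuition is that each non-trivial cherry of $\T/S$ must be ``paid for'' by a reticulation in any displaying tree-child network, since the cherry is present in some trees but not in others that still contain both leaves. The careful counting requires tracking how a single reticulation in the network can account for at most a bounded constant (here~$4$) of the non-trivial cherries, which I would establish by analyzing the local structure of embeddings around reticulations in Linz and Semple's framework. Given this bound, the recursion has depth at most $k$ (since each branching call strictly increases the committed weight $k'$) and branching factor at most $8k$, yielding at most $(8k)^k$ leaves in the search tree.

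For the running time, I would maintain $\T/S$ incrementally across recursive calls rather than recomputing it in line~\ref{lin:PruneCherries}: store, for each leaf in each tree, a pointer to its sibling, and maintain a dictionary of cherries keyed by sibling pairs with multiplicities over the $t$ trees. Applying one pair $(x,y)$ updates $O(t)$ entries and can identify newly created trivial cherries in $O(t \log t)$ amortized time using balanced search trees, so each search-tree node does $O(nt \log t)$ work across its trivial-cherry loop and cherry enumeration. Multiplying by $(8k)^k$ gives the first term. The initial construction of the data structure, which includes identifying initial cherries across all $t$ trees on $n$ leaves, costs $O(nt \log nt)$, giving the second term and completing the proof.
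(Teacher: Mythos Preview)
Your overall architecture matches the paper's: safety of trivial-cherry reduction, soundness of the forbidden-cherry check, exhaustive branching over $C$, depth bounded by the committed weight $k'$, and the $O(nt\lg t)$ cost per search-tree node with $O(nt\lg nt)$ initialization. Those parts are essentially as in the paper.

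The genuine gap is in the $4k$ cherry bound. Your charging argument (each non-trivial cherry of $\T/S$ is paid for by a reticulation in a displaying tree-child network) is exactly what the paper does in Lemma~\ref{lem:4k-cherries}, but that lemma is proved only for the base case $S=\langle\rangle$, where all trees share the leaf set $X$. The proof there fixes a tree-child network $N$ displaying $\T$ and embeddings $f_i\colon T_i\to N$, and the key step (Claim~\ref{clm:pendants-are-reticulations}) argues that every pendant node $w$ of the image of a cherry is a reticulation by following a tree path from $w$ to some leaf $\ell$ of $N$ and using that $\ell$ is a leaf of $T_i$. When $S\ne\langle\rangle$, the trees in $\T/S$ have different leaf sets; the tree path from $w$ may end at a leaf that has already been removed from $T_i/S$, and the argument collapses. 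So you cannot simply ``analyze the local structure of embeddings around reticulations'' in a network displaying $\T/S$.

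The paper closes this with a reduction you do not mention (Lemma~\ref{lem:maintain-4k-cherries}): from $\T^{(j)}=\T/S_{1,j}$ with no trivial cherries, construct a set $\hat\T^{(j)}$ of genuine $X$-trees by grafting onto each $T_i^{(j)}$ a caterpillar on its missing leaves $x_{i_1},\ldots,x_{i_\ell}$, in the order they were deleted. One checks that $\hat\T^{(j)}$ has exactly the same cherries as $\T^{(j)}$, still has no trivial cherries, and admits a tree-child sequence of weight at most $k$ (obtained by cyclically rotating $S$ and replacing the second coordinates of the first $j$ pairs by $x_{r+1}$). This reduces to the base case and is what justifies the $|C|\le 8k$ cutoff at every node of the search tree, not just the root. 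Without it, your induction cannot proceed past the first branching step.

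A minor point: the charge is two cherries per reticulation \emph{edge} (of which there are at most $2k$), not four per reticulation node; your ``here~$4$'' happens to give the right product but obscures the mechanism.
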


Combined with Proposition~\ref{prop:network-from-sequence}, this proves the
following corollary.

\begin{cor}
  \label{cor:tc-network}
  Given a collection $\T$ of $t$ $X$-trees with $|X| = n$, it takes
  $O((8k)^k nt \lg t + nt \lg nt)$ time to decide whether $\T$ has tree-child
  hybridization number at most $k$ and, if so, compute a corresponding
  tree-child hybridization network that displays $\T$.
\end{cor}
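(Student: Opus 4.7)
The plan is to combine Theorem~\ref{thm:main} with Theorem~\ref{thm:cherry-picking-is-hybridization} and Proposition~\ref{prop:network-from-sequence} in a straightforward pipeline. First I would invoke Theorem~\ref{thm:main} on the input $(\T, k)$ to either obtain a tree-child cherry picking sequence $S$ for $\T$ with $w(S) \le k$, or to certify that no such sequence exists; this costs $O((8k)^k nt \lg t + nt \lg nt)$ time.

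In the former case, I would then feed $S$ into the algorithm of Proposition~\ref{prop:network-from-sequence}, which in $O(n)$ time returns a tree-child network $N$ on $X$ displaying $\T$ with $h(N) \le w(S) \le k$. Since $O(n)$ is absorbed by $O((8k)^k nt \lg t + nt \lg nt)$, the total running time matches the claim, and $N$ is the desired network.

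In the latter case, I would argue that $\T$ has no tree-child network with at most $k$ reticulations, as follows. If such a network $N$ existed, then by Theorem~\ref{thm:cherry-picking-is-hybridization} we would have $\stc(\T) = \htc(\T) \le h(N) \le k$, so some tree-child cherry picking sequence for $\T$ would have weight at most $k$, contradicting the output of Theorem~\ref{thm:main}. Hence the algorithm may correctly return \None.

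There is no real obstacle here: the corollary is a direct composition of previously established results, and the only thing to verify is that the extra $O(n)$ cost of the network construction is dominated by the cherry-picking running time, which it trivially is.
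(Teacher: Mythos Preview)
Your proposal is correct and matches the paper's approach exactly: the corollary is stated as an immediate consequence of Theorem~\ref{thm:main} combined with Proposition~\ref{prop:network-from-sequence}. Note that your appeal to Theorem~\ref{thm:cherry-picking-is-hybridization} in the \None\ case is harmless but redundant, since Theorem~\ref{thm:main} is already phrased in terms of the tree-child hybridization number rather than sequence weight.
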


It is easy to see that procedure \procref{alg:tree-child-sequence} returns a
sequence $S$ only if it is a valid tree-child cherry picking sequence for $\T$.
Thus, it suffices to show that if a partial tree-child cherry picking sequence
$S$ has an extension $S \circ S'$ of weight at most $k$ that is a cherry
picking sequence for $\T$, then the invocation $\TreeChildSequence(\T,S,k)$
finds a shortest such extension.
In the remainder of this section, we call an extension $S \circ S'$ of a
partial tree-child cherry picking sequence $S$ a \emph{solution of $(\T, S)$}
if $S \circ S'$ is a cherry picking sequence for $\T$; $S \circ S'$ is an
\emph{optimal} solution of $(\T, S)$ if there is no 
\markj{solution of $(\T, S)$}
that is shorter than $S \circ S'$.

We split the proof of Theorem~\ref{thm:main} into two parts:
First, we show that we deal with trivial cherries correctly:
if $(\T, S)$ has a solution of weight at most $k$ and $\T' = \reduce{\T}{S}$
has a trivial cherry $\{x, y\}$ such that $y$ is not forbidden with respect to
$S$, then $(\T, S \circ \langle (x, y) \rangle)$ has a solution of weight
at most $k$ and any optimal solution of $(\T, S \circ \langle (x, y) \rangle)$
is also an optimal solution of $(\T, S)$.
Thus, adding trivial cherries to $S$ as \procref{alg:tree-child-sequence} does
in
lines~\ref{lin:LoopTrivialCherryReductionStart}--\ref{lin:LoopTrivialCherryReductionEnd}
is safe.
Section~\ref{sec:reducing-trivial-cherries} presents this first part of our
proof.
Second, we show that if $\T'$ has no trivial cherries, then either the trees in
$\T'$ have at most $4k$ unique cherries or $(\T, S)$ has no solution of weight
at most $k$.
Thus, aborting the search if $|C| > 8k$ (since $C$ contains two pairs
for each cherry of $\T'$), as we do in line~\ref{lin:FailTooManyCherries}, is
correct.
The proof of this bound on the number of unique cherries is divided into two
parts.
In Section~\ref{sec:counting-cherries-base-case}, we show that this bound holds
if $S = \langle\rangle$, that is, if all trees in $\T'$ are $X$-trees.
In Section~\ref{sec:counting-cherries-general-case}, we extend this result to
arbitrary partial tree-child cherry picking sequences~$S$.
Section~\ref{sec:wrapping-up} then completes the proof of
Theorem~\ref{thm:main}.

\subsection{Pruning Trivial Cherries}

\label{sec:reducing-trivial-cherries}

Our algorithm begins by repeatedly pruning trivial cherries in
lines~\ref{lin:LoopTrivialCherryReductionStart}--\ref{lin:LoopTrivialCherryReductionEnd};
that is, as long as there exists a trivial cherry $\{x,y\}$ in $\T/S$ with
$y$ not forbidden with respect to $S$, the
algorithm extends $S$ by adding the pair $(x,y)$ to $S$.
In this section, we show that this is safe:
if $(\T, S)$ has solution of weight at most $k$, then so does $(\T, S \circ
\langle (x, y) \rangle)$, and any optimal solution of
$(\T, S \circ \langle (x, y) \rangle)$ is an optimal solution of $(\T, S)$.
We begin with some simple observations.

\begin{prop}
  \label{prop:basic-sequence-properties}
  Let $S = \langle(x_1, y_1), (x_2, y_2), \ldots, (x_r, y_r), (x_{r+1},
  -)\rangle$ be a tree-child cherry picking sequence for a set of $X$-trees
  $\T$.
  Then the following properties hold for all $j \in [r]$:%
  \footnote{We use $[m]$ to denote the set of integers $\{1, \ldots, m\}$ and
    $[m]_0$ to denote the set of integers $\{0, \ldots, m\}$.}

  \begin{enumerate}[label=(\roman{*}),noitemsep,leftmargin=*,widest=iii]
    \item\label{item:yNotForbidden} If $y \in X$ is not forbidden with respect
      to $S_{1,j}$, then $y$ is a leaf in every tree in $\T^{(j)}$.
    \item\label{item:xyCherry} If $\{x,y\}$ is a cherry of\/ $\T^{(j)}$, then
      either $(x,y)$ or $(y,x)$ is a pair in $S_{j+1,r}$.
    \item\label{item:trivialCherry} If $\{x_j,y_j\}$ is a trivial cherry of\/
      $\T^{(j-1)}$, then $x_j$ is not in any tree in  $\T^{(j)}$.
  \end{enumerate}
\end{prop}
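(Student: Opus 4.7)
The plan is to prove all three parts from a single structural invariant: in passing from $\T^{(j-1)}$ to $\T^{(j)}$, the only leaf that can disappear from any tree is $x_j$, and it disappears only from those trees $T \in \T^{(j-1)}$ in which $\{x_j, y_j\}$ is a cherry of $T$. This is immediate from the definition of how $S$ acts on a single tree, since the cherry-picking step removes only the first element of the pair, and only when it currently sits in the indicated cherry.

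For part~\ref{item:yNotForbidden} I would induct on $j$. The base case $j=0$ is immediate since each tree in $\T$ is an $X$-tree, so every $y \in X$ is a leaf. For the inductive step, assume the claim holds for $j-1$. If $y$ is not forbidden with respect to $S_{1,j}$, then $y \notin \{x_1, \dots, x_j\}$, so in particular $y \ne x_j$, and the invariant keeps $y$ as a leaf in every tree of $\T^{(j)}$.

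For part~\ref{item:xyCherry} I would fix a tree $T_i \in \T$ for which $\{x,y\}$ is a cherry of $\tree[i]{j}$. Since $\tree[i]{r}$ consists only of the leaf $x_{r+1}$, at least one of $x,y$ must eventually be removed from $T_i$; let $m$ be the smallest index in $\{j+1,\dots,r\}$ at which this happens. No removal occurs at steps $j+1,\dots,m-1$, so $\{x,y\}$ remains a cherry of $\tree[i]{m-1}$. By the invariant, $x_m \in \{x,y\}$ and $y_m$ must be $x_m$'s sibling in $\tree[i]{m-1}$, which forces $\{x_m, y_m\} = \{x,y\}$; hence $(x_m, y_m) \in \{(x,y),(y,x)\}$ and this pair lies in $S_{j+1,r}$.

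Part~\ref{item:trivialCherry} will be the most delicate step because the definition of a trivial cherry only constrains trees that contain both $x_j$ and $y_j$, so I must first rule out any tree in $\T^{(j-1)}$ that contains $x_j$ but not $y_j$. The tree-child condition $y_h \ne x_i$ for all $i < h$ yields $y_j \notin \{x_1,\dots,x_{j-1}\}$, so $y_j$ is not forbidden with respect to $S_{1,j-1}$, and part~\ref{item:yNotForbidden} then guarantees that $y_j$ is a leaf in every tree of $\T^{(j-1)}$. Consequently, every tree in $\T^{(j-1)}$ containing $x_j$ also contains $y_j$, and by the trivial cherry property $\{x_j, y_j\}$ is a cherry of that tree; the invariant removes $x_j$ from every such tree, and trees not already containing $x_j$ obviously do not acquire it in $\T^{(j)}$.
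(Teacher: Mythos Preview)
Your proof is correct and follows essentially the same approach as the paper's own proof. The paper is somewhat terser---e.g., for part~\ref{item:yNotForbidden} it simply notes that $y \ne x_i$ for all $i \le j$, and for part~\ref{item:xyCherry} it observes that only the pair $(x,y)$ or $(y,x)$ can delete one of $x,y$ from the relevant tree---but your explicit invariant, the induction in~\ref{item:yNotForbidden}, and the tracking of the first removal step~$m$ in~\ref{item:xyCherry} are just more detailed renderings of the same reasoning; part~\ref{item:trivialCherry} is argued identically in both.
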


\begin{proof}
  Property~\ref{item:yNotForbidden} holds because $y$ is not forbidden with
  respect to $S_{1,j}$ and, thus, $y \ne x_i$ for all $1 \le i \le j$.
  Property~\ref{item:xyCherry} follows because $S_{j+1,r}$ must
  delete at least one of $x,y$ from the tree containing $\{x,y\}$ as a cherry
  and only the pair $(x,y)$ or $(y,x)$ achieves this.
  To see why Property~\ref{item:trivialCherry} holds, observe that $y_j$ is not
  forbidden with respect to $S_{1,j-1}$.
  Thus, by Property~\ref{item:yNotForbidden}, every tree in $\T^{(j-1)}$
  contains $y_j$ as a leaf.
  In particular, every tree in $\T^{(j-1)}$ containing $x_j$ also contains
  $y_j$.
  Thus, by the definition of a trivial cherry, every tree $\T^{(j-1)}$
  containing $x_j$ contains the cherry $\{x_j,y_j\}$.
  Thus, applying the pair $(x_j, y_j)$ to $\T^{(j-1)}$ deletes $x_j$
  from any tree containing $x_j$ and no tree in $\T^{(j)}$ contains $x_j$.
\end{proof}

\begin{lem}
  \label{lem:mayAssumexy}
  Let $S = \langle(x_1, y_1), (x_2, y_2), \ldots, (x_r, y_r), (x_{r+1},
  -)\rangle$ be a tree-child cherry picking sequence for a set of $X$-trees
  $\T$ and suppose that $\{x,y\}$ is a trival cherry of $\T^{(j)}$ and $y$ is
  not forbidden with respect to $S_{1,j}$.
  Then there exists a tree-child cherry picking sequence $S'$ for $\T$
  such that $|S'| = |S|$, $S'_{1,j} = S_{1,j}$, and $(x,y)$ is a pair in
  $S'_{j+1,r}$.
\end{lem}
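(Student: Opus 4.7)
The plan is to perform a case analysis based on which of the pairs $(x,y)$ or $(y,x)$ appears in $S_{j+1, r}$. By Proposition~\ref{prop:basic-sequence-properties}(ii), since $\{x,y\}$ is a cherry of $\T^{(j)}$, at least one of these pairs must occur. If $(x,y)$ already appears in $S_{j+1, r}$, I would set $S' = S$ and the conclusion is immediate. Otherwise, $(y,x)$ appears in $S_{j+1, r}$; I would take $m$ to be the smallest position where this happens, and focus on constructing $S'$ in this case.

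The natural construction is to insert $(x,y)$ at position $j+1$---which, by triviality of the cherry combined with Proposition~\ref{prop:basic-sequence-properties}(i), removes $x$ from every tree in $\T^{(j)}$ that contains $x$---and then redeploy the remaining operations of $S$ so that the total length is preserved. My proposed candidate is
\begin{equation*}
S' = S_{1,j} \circ \langle (x,y) \rangle \circ S_{j+1, m-1} \circ \sigma(S_{m+1, s}),
\end{equation*}
where $\sigma$ is the label swap exchanging $x$ and $y$ and fixing every other taxon. The motivating observation is that, in any tree of $\T^{(j)}$ containing $\{x,y\}$ as a cherry, the operations $(x,y)$ and $(y,x)$ produce mirror-image states (with $x$ and $y$ swapping roles), so the swap $\sigma$ on the tail of $S$ compensates for moving the ``cherry resolution'' from position $m$ to position $j+1$.

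A preliminary step I would carry out is to show that no pair in $S_{j+1, m-1}$ involves $x$ as either coordinate: $x \neq x_\ell$ follows because the tree-child condition $y_m = x$ forces $x$ to be unforbidden before position $m$, and $x \neq y_\ell$ follows because the only cherry containing $x$ in $\T^{(j)}$ is $\{x,y\}$, so any pair of the form $(\cdot, x)$ occurring in $S_{j+1, m-1}$ would have to be $(y,x)$, contradicting the minimality of $m$. From this one can verify that $S'$ reduces every tree in $\T$ to the appropriate single leaf, and $|S'| = |S|$, $S'_{1,j} = S_{1,j}$, and $(x,y) = S'_{j+1}$ are all immediate from the construction.

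The hard part will be verifying the tree-child condition $y'_i \neq x'_\ell$ for all $\ell < i$. Most positions follow directly from the tree-child property of $S$ together with the preliminary observation, but a delicate interaction arises when $y = x_\ell$ for some $\ell \in [j+1, m-1]$ (so $y$ becomes forbidden in $S'$ shortly after the inserted $(x,y)$) occurs simultaneously with $y_i = x$ for some $i > m$ (so $\sigma(y_i) = y$ appears as the second coordinate of some later pair, clashing with the now-forbidden $y$). In such configurations the candidate above can violate tree-child, and I would need to refine the construction by reordering pairs within $S_{j+1, m-1} \circ \sigma(S_{m+1, s})$, moving each $\sigma$-image that uses $y$ as its second coordinate earlier, so that it precedes any pair that makes $y$ forbidden. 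Establishing that such a reordering is always possible without disturbing validity, using the independence of the relevant operations on the affected trees, is the technical crux of the argument.
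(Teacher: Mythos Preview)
Your case split and the appeal to Proposition~\ref{prop:basic-sequence-properties}(ii) match the paper exactly. The divergence is in the $(y,x)$ case, where you are working much harder than necessary and, as you yourself flag, leave a gap.

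The paper does not insert anything at position $j{+}1$, does not single out the first position $m$ where $(y,x)$ occurs, and does not split the tail into an unswapped middle and a swapped end. It simply applies your swap $\sigma$ to the \emph{entire} tail:
\[
S' \;=\; S_{1,j}\circ \sigma\bigl(S_{j+1,\,r+1}\bigr),
\]
so that $(x,y)$ sits at position $m$ in $S'$, which is all the lemma asks for. The observation that makes this work cleanly is one you came close to but did not exploit: since $(y,x)$ occurs in $S_{j+1,r}$ and $S$ is tree-child, $x$ (not just $y$) is unforbidden with respect to $S_{1,j}$; together with Proposition~\ref{prop:basic-sequence-properties}(i) this puts both $x$ and $y$ in every tree of $\T^{(j)}$, and triviality then forces \emph{every} tree in $\T^{(j)}$ to have $\{x,y\}$ as a cherry. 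Hence $\T^{(j)}$ is literally invariant under $\sigma$, so $\sigma(S_{j+1,r+1})$ is automatically a cherry picking sequence for $\T^{(j)}$. The tree-child check is then immediate: within the tail it is preserved by the bijection $\sigma$, and across the boundary the only new second coordinates that can arise are $x$ and $y$, neither of which is an $x_i$ with $i\le j$.

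Your construction, by leaving $S_{j+1,m-1}$ unswapped, is exactly what manufactures your ``hard part'': a first coordinate $y$ in that middle segment can collide with a later $\sigma$-image having second coordinate $y$. The reordering patch you sketch is unspecified and, more to the point, unnecessary; under the global swap any such middle-segment pair has its coordinates swapped along with everything else, and the collision disappears. Note also that you are aiming to place $(x,y)$ at position $j{+}1$, which is stronger than the present lemma requires (that strengthening is the content of Proposition~\ref{prop:move-up-common-cherries}); here it suffices to have $(x,y)$ anywhere in $S'_{j+1,r}$.
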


\begin{proof}
  We start with the following trivial observation:
  Let $\T$ be a set of trees and let $S$ be a tree-child cherry picking
  sequence for $\T$.
  For an arbitrary permutation $\pi$ of $X$ and any $X$-tree $T$, let
  $T_{|\pi}$ be the tree obtained from $T$ by changing the label of each leaf
  from its label $z$ in $T$ to the label $\pi(z)$ in $T_{|\pi}$.
  Let $\T_{|\pi} = \{T_{|\pi} \mid T \in \T\}$.
  Similarly, let $S_{|\pi}$ be the sequence obtained from $S$ by replacing
  every occurrence of an element $z \in X$ in $S$ with $\pi(z)$.
  Then $S_{|\pi}$ is a tree-child cherry picking sequence for $\T_{|\pi}$.
  Here, we consider the permutation $\pi$ such that $\pi(x) = y$, $\pi(y) = x$,
  and $\pi(z) = z$ for all $z \in X \setminus \{x, y\}$, where $\{x, y\}$ is
  a trivial cherry of $\T^{(j)}$.

  By Proposition~\ref{prop:basic-sequence-properties}\ref{item:xyCherry},
  either $(x,y)$ or $(y,x)$ is a pair in $S_{j+1,r}$.
  In the former case, the sequence $S' = S$ satisfies the lemma.
  In the latter case, neither $x$ nor $y$ is forbidden with respect to
  $S_{1,j}$.
  It follows from
  Proposition~\ref{prop:basic-sequence-properties}\ref{item:yNotForbidden}
  and the fact that $\{x, y\}$ is a trivial cherry of $\T^{(j)}$ that
  every tree in $\T^{(j)}$ has $\{x,y\}$ as a cherry.
  In particular, neither $x$ nor $y$ is part of a pair in $S_{1,j}$.
  Thus, since $S$ is a tree-child cherry picking sequence, the sequence
  $S' = S_{1,j} \circ (S_{j+1,r+1})_{|\pi}$ is a tree-child cherry picking
  sequence such that $S'_{1,j} = S_{1,j}$ and $(x, y) \in S'_{j+1,r}$.
  To see that $S'$ is a tree-child cherry picking sequence \emph{for} $\T$,
  observe that $S_{j+1,r+1}$ is a tree-child cherry picking sequence for
  $\T^{(j)}$.
  Thus, as just observed, $(S_{j+1,r+1})_{|\pi}$ is a tree-child cherry picking
  sequence for $\T^{(j)}_{|\pi}$.
  However, since $\{x, y\}$ is a cherry of \emph{every} tree in $\T^{(j)}$, we
  have $\T^{(j)}_{|\pi} = \T^{(j)}$, that is,
  $(S_{j+1,r+1})_{|\pi}$ is a tree-child cherry picking sequence for
  $\T^{(j)}$ and $S' = S_{1,j} \circ (S_{j+1,r+1})_{|\pi}$ is a tree-child
  cherry picking sequence for $\T$.
\end{proof}

\begin{lem}
  \label{lem:subtrees-are-preserved}
  Let $T$ be an $X$-tree, let $T' \subseteq T$, and let $S = \langle (x_1,
  y_1), (x_2, y_2), \ldots, (x_r, y_r)\rangle$ be a partial tree-child cherry
  picking sequence such that $(T \setminus T') \cap \{y_1, y_2, \ldots, y_r\} =
  \emptyset$.
  Then $\reduce{T'}{S} \subseteq \reduce{T}{S}$.
\end{lem}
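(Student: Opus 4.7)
My plan is to prove the lemma by induction on $r$, the length of $S$. The base case $r = 0$ is immediate, since $\reduce{T}{S} = T$ and $\reduce{T'}{S} = T'$, and $T' \subseteq T$ by hypothesis.

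For the inductive step, let $T_1 = T^{(1)}$ and $T'_1 = (T')^{(1)}$ be the trees obtained by applying the first pair $(x_1, y_1)$ to $T$ and $T'$ respectively, and let $S' = S_{2,r}$. Since $\reduce{T}{S} = \reduce{T_1}{S'}$ and $\reduce{T'}{S} = \reduce{T'_1}{S'}$, it suffices to verify the two hypotheses needed to invoke induction on $S'$: (a) $T'_1 \subseteq T_1$, and (b) $(T_1 \setminus T'_1) \cap \{y_2, \ldots, y_r\} = \emptyset$. I will do a case analysis on whether $\{x_1, y_1\}$ is a cherry of $T$ and whether it is a cherry of $T'$. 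Note that by the hypothesis $y_1 \notin T \setminus T'$, so if $y_1$ is a leaf of $T$ then it is also a leaf of $T'$.

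Condition (a) in each case reduces to the observation that leaf deletion and restriction commute appropriately: deleting $x_1$ from $T|_{X'}$ and suppressing the resulting degree-two node yields $T|_{X' \setminus \{x_1\}}$. In the routine cases---$\{x_1, y_1\}$ is a cherry of both $T$ and $T'$, or of neither---we have $T_1 \setminus T'_1 = T \setminus T'$, and condition (b) follows directly from the hypothesis. If $\{x_1, y_1\}$ is a cherry of $T$ but not of $T'$, then necessarily $x_1 \notin X'$ and $T_1 \setminus T'_1 = (T \setminus T') \setminus \{x_1\}$, again giving (b) directly.

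The delicate case is when $\{x_1, y_1\}$ is a cherry of $T'$ but not of $T$; this can occur precisely when the least common ancestor of $x_1$ and $y_1$ in $T$ has additional descendant leaves, all of which must lie in $T \setminus T'$ (otherwise they would also appear in $T'$ and block the cherry there). In this sub-case $T_1 = T$ while $T'_1 = T'$ with $x_1$ removed, so $T_1 \setminus T'_1 = (T \setminus T') \cup \{x_1\}$, and condition (b) now requires $x_1 \notin \{y_2, \ldots, y_r\}$. This is exactly the tree-child property of $S$, namely $y_j \neq x_i$ for $1 \le i < j \le r$, applied with $i = 1$; this is the single point in the argument where the \emph{tree-child} assumption on $S$ (as opposed to just the cherry-picking assumption) is needed. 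I expect this sub-case to be the main obstacle, since the possibility of a new cherry being created by restriction is easy to overlook in a more direct argument.
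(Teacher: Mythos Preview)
Your proof is correct and follows essentially the same inductive strategy as the paper: both argue by induction on $|S|$, reduce to showing $T'_1 \subseteq T_1$ and $(T_1 \setminus T'_1) \cap \{y_2,\ldots,y_r\} = \emptyset$ after one step, and both invoke the tree-child condition $x_1 \notin \{y_2,\ldots,y_r\}$ to handle the possible appearance of $x_1$ in the new leaf-set difference. The only cosmetic difference is that the paper organizes the case split by membership of $x_1$ and $y_1$ in $T'$ (and then by whether $\{x_1,y_1\}$ is a cherry of $T'$), after which it bounds $T_1 \setminus T'_1 \subseteq (T\setminus T') \cup \{x_1\}$ uniformly, whereas you split on the cherry status in $T$ versus $T'$ and compute the difference exactly in each case; your organization has the small expository benefit of isolating precisely the one sub-case in which the tree-child hypothesis is genuinely needed.
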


\begin{proof}
  We prove the claim by induction on $|S|$.
  If $|S| = 0$, then $\reduce{T'}{S} = T' \subseteq T = \reduce{T}{S}$, so the
  claim holds in this case.
  If $|S| > 0$, then let $R' = \reduce{T'}{S_{1,1}}$ and $R =
  \reduce{T}{S_{1,1}}$.
  Note that $R \supseteq T - x_1$.
  If $x_1 \notin T'$, then $R' = T' \subseteq T - x_1 \subseteq R$.
  If $y_1 \notin T'$, then $y_1 \notin T$ because $y_1 \notin T \setminus T'$.
  Thus, $R' = T' \subseteq T = R$.

  So assume that $x_1, y_1 \in T'$.
  If $\{x_1, y_1\}$ is a cherry of $T'$, then $R' = T' - x_1 \subseteq T - x_1
  \subseteq R$.
  If $\{x_1, y_1\}$ is not a cherry of $T'$, then $x_1, y_1 \in T'$ implies
  that the path from $x_1$ to $y_1$ in $T'$ has at least one pendant subtree.
  Since $T' \subseteq T$, this implies that the path from $x_1$ to $y_1$ in
  $T$ also has at least one pendant subtree, that is, $\{x_1, y_1\}$ is not a
  cherry of $T$.
  Therefore, $R' = T' \subseteq T = R$.

  We have shown that in all possible cases, $R' \subseteq R$.
  Now observe that $R \setminus R' \subseteq (T \setminus T') \cup \{x_1\}$.
  Since $S$ is a partial tree-child cherry picking sequence, $S_{2,r}$ is a
  partial tree-child cherry picking sequence and $x_1 \notin \{y_2, y_3,
    \ldots, y_r\}$.
  Since $(T \setminus T') \cap \{y_2, y_3, \ldots, y_r\} = \emptyset$,
  this implies that $(R \setminus R') \cap \{y_2, y_3, \ldots, y_r\} =
  \emptyset$.
  Thus, by the induction hypothesis, $\reduce{T'}{S} = \reduce{R'}{S_{2,r}}
  \subseteq \reduce{R}{S_{2,r}} = \reduce{T}{S}$.
\end{proof}

We are now ready to prove a stronger version of
Lemma~\ref{lem:mayAssumexy}, which establishes that pruning trivial
cherries is safe.

\begin{prop}
  \label{prop:move-up-common-cherries}
  Let $S = \langle(x_1, y_1), (x_2, y_2), \ldots, (x_r, y_r), (x_{r+1},
  -)\rangle$ be a tree-child cherry picking sequence for a set of $X$-trees
  $\T$ and suppose that $\{x,y\}$ is a trival cherry of\/ $\T^{(j)}$ and $y$ is
  not forbidden with respect to~$S_{1,j}$.
  Then there exists a tree-child cherry picking sequence $S' =
  \langle(x_1', y_1'), (x_2', y_2'), \ldots, (x'_{r'}, y'_{r'}), (x'_{r'+1}, -)
  \rangle$ for $\T$
  such that $|S'| \le |S|$, $S'_{1,j} = S_{1,j}$, and $(x'_{j+1}, y'_{j+1}) =
  (x,y)$.
\end{prop}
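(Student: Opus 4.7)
The plan is to combine Lemma~\ref{lem:mayAssumexy} with an adjacent-swap argument. First I invoke Lemma~\ref{lem:mayAssumexy} to obtain a tree-child cherry picking sequence $\tilde S$ for $\T$ with $|\tilde S| = |S|$, $\tilde S_{1,j} = S_{1,j}$, and $(x,y) = (\tilde x_i, \tilde y_i)$ for some $i > j$. Among all such $\tilde S$ I pick one that minimizes first $|\tilde S|$ and then the position $i$ of the first occurrence of $(x,y)$ in $\tilde S_{j+1,r}$, and aim to show that the minimizer must have $i = j+1$.

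Suppose, for contradiction, that $i > j+1$, and write $(a,b) = (\tilde x_{i-1}, \tilde y_{i-1})$ for the pair immediately preceding $(x,y)$. The technical core of the argument is a persistence invariant: for every tree $T \in \T^{(j)}$ containing $x$, the image of $T$ at each step $h \in \{j, j+1, \ldots, i-1\}$ still has $\{x,y\}$ as a cherry. I prove this by induction on $h$, using (i) the tree-child property of $\tilde S$, which forbids $y$ from being a first coordinate before position $i$ and hence prevents $y$ from being removed; (ii) the minimality of $i$, which prevents another occurrence of $(x,y)$ in $\tilde S_{j+1,i-1}$; and (iii) a short case analysis showing that every other pair of the form $(x,z)$ with $z \ne y$ or $(z,y)$ with $z \ne x$ is a no-op on any tree image in which $\{x,y\}$ is still a cherry. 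Given the invariant, I distinguish two cases. If $(a,b)$ is a no-op on every tree in $\tilde \T^{(i-2)}$, then deleting this pair yields a strictly shorter tree-child cherry picking sequence for $\T$ with $(x,y)$ still at an allowable position, contradicting the minimality of $|\tilde S|$. Otherwise $(a,b)$ is effective on some tree, and I swap the pairs at positions $i-1$ and $i$: the swap preserves the tree-child property (the only delicate requirement is $b \ne x$, and $b = x$ would make $(a,x)$ a no-op on every tree by the invariant, reducing to the previous case), and it preserves the cherry picking sequence property because, tree by tree, the two operations act on structurally disjoint portions of the tree. The resulting sequence has $(x,y)$ at position $i-1$, contradicting the minimality of $i$.

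I expect the main obstacle to lie in the commutation check in the swap case when $b = y$. In $\tilde S$ the pair $(a,y)$ is a no-op on any tree whose image still has $\{x,y\}$ as a cherry (since $y$'s sibling there is $x \ne a$), but after swapping $(x,y)$ ahead of it the operation $(x,y)$ has made $a$ the new sibling of $y$ in such a tree, so $(a,y)$ can become an effective cherry operation and remove $a$. To resolve this I will use that the nontriviality of $(a,b)$ in the swap case forces $a \ne \tilde x_{r+1}$, and that by the tree-child property no pair after position $i-1$ uses $a$ as a second coordinate; together these imply that $a$ is in any case removed at some later step of $\tilde S_{i+1,r+1}$ from every tree that retains it, so moving the removal of $a$ earlier (into the swapped step) is harmless and the two orderings produce the same final family of singletons, completing the proof.
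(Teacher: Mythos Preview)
Your proposal is correct and follows essentially the same route as the paper's proof: invoke Lemma~\ref{lem:mayAssumexy}, choose a minimizing sequence, and then either delete the pair immediately preceding $(x,y)$ or swap it past $(x,y)$ to obtain a contradiction.  The persistence invariant you isolate is exactly what the paper uses (more implicitly) to argue in its case $w=x$.

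There are two minor differences worth noting.  First, your case split is on whether $(a,b)$ acts as a no-op on all of $\tilde\T^{(i-2)}$, whereas the paper splits on whether the second coordinate equals $x$; these are not the same partition (the paper swaps even when $(v,w)$ is a no-op, provided $w\ne x$), but either split works.  Second, in your swap case with $b=y$ you argue informally that ``moving the removal of $a$ earlier is harmless''; the rigorous way to cash this out is precisely Lemma~\ref{lem:subtrees-are-preserved}: after the swap one has $T/\tilde S''_{1,i}\subseteq T/\tilde S_{1,i}$ with difference contained in $\{a\}$, and since $a=\tilde x_{i-1}$ is never a second coordinate in $\tilde S_{i+1,r}$, that lemma yields $T/\tilde S''\subseteq T/\tilde S$, hence equality.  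The paper in fact applies Lemma~\ref{lem:subtrees-are-preserved} uniformly across the whole swap case (not just when $b=y$), which saves the separate commutation check you do for $b\notin\{x,y\}$, but your direct verification there is also fine.
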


\begin{proof}
  By Lemma~\ref{lem:mayAssumexy}, there exists a tree-child cherry
  picking sequence $S' = \langle(x_1', y_1'), (x_2', y_2'), \ldots,\break
  (x'_{r'}, y'_{r'}), (x'_{r'+1}, -) \rangle$ for $\T$ such that $r' \le r$,
  $S'_{1,j} = S_{1,j}$ and $(x,y) \in S'_{j+1,r'}$.
  We choose $S'$ from the set of all such cherry picking sequences so that
  the index $j' > j$ with $(x_{j'}, y_{j'}) = (x, y)$ is minimized.
  If $j' = j + 1$, the lemma holds.
  If $j' > j + 1$, we obtain a contradiction to the choice of $S'$ by
  transforming $S'$ into another tree-child cherry picking sequence $S'' =
  \langle (x_1'', y_1''), \ldots, (x''_{r''}, y'_{r''}),
  (x''_{r''+1}, -) \rangle$ for $\T$ such that $|S''| \le |S'| \le |S|$,
  $S''_{1,j} = S'_{1,j} = S_{1,j}$, and $(x''_{j'-1}, y''_{j'-1}) = (x,y)$.

  So assume that $j' > j + 1$ and let $(x'_{j'-1}, y'_{j'-1}) = (v,w)$.
  We distinguish two cases:

  \begin{description}
    \item[\boldmath$w = x$:]
      In this case, we set $r'' = r' - 1$, $(x_h'', y_h'') = (x_h', y_h')$ for
      all $1 \le h \le j' - 2$, and $(x''_h, y''_h) = (x'_{h+1}, y'_{h+1})$ for
      all $j'-1 \le h \le r'' + 1$; that is, we obtain $S''$ by deleting the
      pair $(x'_{j'-1}, y'_{j'-1})$ from $S'$.
      Thus, $S'' \subset S'$, $|S''| < |S'|$, and $(x''_{j'-1}, y''_{j'-1}) =
      (x'_{j'}, y'_{j'}) = (x, y)$.
      Since $S'$ is a tree-child cherry picking sequence, this implies that
      $S''$ also is a tree-child cherry picking sequence.
      To see that $S''$ is a tree-child cherry picking sequence \emph{for\/
        $\T$}, it suffices to prove that $T/S'_{1,j'-2} = T/S'_{1,j-1}$ and,
      thus, $T/S'' = (T/S'_{1,j'-2})/S'_{j',r'} = (T/S'_{1,j'-1})/S'_{j',r'} =
      T/S'$ for every tree $T \in \T$.

      To prove this, observe that $v \ne y$ and $y \in T/S'_{1,h}$ for all $T
      \in \T$ and all $1 \le h < j'$ because $y_{j'} = y$, that is, $y$ is not
      forbidden with respect to $S'_{1,j'-1}$.
      Thus, since $\{x, y\}$ is a trivial cherry of $\T/S_{1,j} = \T/S'_{1,j}$
      and $j < j'$, $\{x, y\}$ is a cherry of every tree $T/S'_{1,j}$ in
      $\T/S'_{1,j}$ that contains $x$.
      Since $y$ is also a leaf of every tree $T/S'_{1,j'-2}$ in
      $\T/S'_{1,j'-2}$ (again, because $y$ is not forbidden with respect to
      $S'_{1,j'-1}$), this implies that $\{x, y\}$ is also a cherry of every
      tree in $\T/S'_{1,j'-2}$ that contains $x$.
      In particular, since $v \ne y$, $\{v,w\} = \{v,x\}$ is not a cherry of
      any tree $T/S'_{1,j'-2}$ in $\T/S'_{1,j'-2}$ and $T/S'_{1,j'-2} =
      T/S'_{1,j'-1}$ for all $T \in \T$.

    \item[\boldmath$w \ne x$:]
      In this case, we set $(x''_{j'-1}, y''_{j'-1}) = (x'_{j'}, y'_{j'})$,
      $(x''_{j'}, y''_{j'}) = (x'_{j'-1}, y'_{j'-1})$, and $(x''_h, y''_h) =
      (x'_h, y'_h)$ for all $h \notin \{j'-1, j'\}$, that is, we obtain $S''$
      by swapping $(x'_{j'-1}, y'_{j'-1}) = (v, w)$ and $(x'_{j'}, y'_{j'}) =
      (x, y)$ in $S$.
      This clearly implies that $|S''| = |S'|$ and 
      $(x''_{j'-1}, y''_{j'-1}) = (x'_{j'}, y'_{j'}) = (x, y)$.
      To see that $S''$ is a tree-child cherry picking sequence, observe that
      every pair $(x''_h, y''_h)$ in $S''$ with $h \ne j'$ is preceded by a
      subset of the pairs that precede it in $S'$.
      Thus, since $S'$ is a tree-child cherry picking sequence, $y''_h$ is not
      forbidden with respect to $S''_{1,h-1}$.
      For the pair $(x''_{j'}, y''_{j'})$, $y''_{j'}$ is not forbidden with
      respect to $S''_{1,j'-2}$ because $S''_{1,j'-2} = S'_{1,j'-2}$ and
      $(x''_{j'}, y''_{j'}) = (x'_{j'-1}, y'_{j'-1})$.
      This implies that $y''_{j'}$ is not forbidden with respect to
      $S''_{1,j'-1}$ because $y''_{j'} = y'_{j'-1} = w \ne x = x'_{j'} =
      x''_{j'-1}$.

      It remains to show that $T/S'' = T/S'$ for all $T \in \T$.
      To this end, it suffices to show that $T/S'' \subseteq T/S'$ because
      $T/S'$ has only one leaf, $x_{r+1}$, and $T/S'' \ne \emptyset$, that is,
      $T/S'' \subseteq T/S'$ implies that $T/S'' = T/S'$.

      To see that $T/S'' \subseteq T/S$, let $T' = T/S'_{1,j'-2}$.
      Then $T'/\langle (x, y) \rangle \subseteq T'$, $T' \setminus
     ( \reduce{T'}{\langle (x, y) \rangle}) \subseteq \{x\}$, and $x \notin
      \{w, y\}$.
      By Lemma~\ref{lem:subtrees-are-preserved}, this implies that
      $T'/\langle (x, y), (v, w), (x, y) \rangle \subseteq T'/\langle (v, w),
      (x, y) \rangle$.
      However, as argued above, $\{x,y\}$ is a cherry of~$T'$, so $x \notin
      T'/\langle (x,y) \rangle$ and, thus, $x \notin T'/\langle (x,y), (v,w)
      \rangle$.
      This implies that $T'/\langle (x,y), (v,w), (x,y) \rangle =
      T'/\langle (x,y), (v,w) \rangle$ and, therefore, $T'/\langle (x,y), (v,w)
      \rangle \subseteq T' \langle (v,w), (x,y) \rangle$.
      Since $T' = T/S'_{1,j'-2} = T/S''_{1,j'-2}$, $S'_{1,j'} = S'_{1,j'-2}
      \circ \langle (v,w), (x,y) \rangle$, and $S''_{1,j'} = S''_{1,j'-2} \circ
      \langle (x,y), (v,w) \rangle$, this shows that
      $T/S''_{1,j'} \subseteq T/S'_{1,j'}$.
      Using Lemma~\ref{lem:subtrees-are-preserved} again, this shows that
      $T/S'' = (T/S''_{1,j'})/S''_{j'+1,r'} = (T/S''_{1,j'})/S'_{j'+1,r'}
      \subseteq (T/S'_{1,j'})/S'_{j'+1,r'} = T/S'$.\qedhere
  \end{description}
\end{proof}

\subsection{\boldmath Bounding the Number of Cherries in Irreducible $X$-trees}

\label{sec:counting-cherries-base-case}

Once the algorithm has eliminated all trivial cherries from a set of input
trees, each of the remaining (non-trivial) cherries of $\T/S$ is a candidate
for being the next pair to be added to $S$.
Our algorithm makes one recursive call for each possible choice of this next pair
(lines~\ref{lin:RecursiveSolutionStart}--\ref{lin:RecursiveSolutionEnd}).
In order to limit the number of recursive calls it makes, the algorithm aborts
and reports failure if there are more than $8k$ choices to branch on.
To prove that this does not prevent us from finding a tree-child cherry
picking sequence of weight at most $k$, if such a sequence exists, we need to
prove the following claim:

\begin{prop}
  \label{prop:few-non-trivial-cherries}
  If $(\T, S)$ has a solution of weight at most $k$ and $\reduce{\T}{S}$ has
  no trivial cherries, then the number of unique cherries in $\T/S$ is at most
  $4k$.
\end{prop}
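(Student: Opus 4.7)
The plan is to fix an optimal tree-child cherry picking sequence $S^*$ for $\T$ of weight at most $k$ and bound $|U|$, the number of unique cherries of $\T$, by a charging argument. We focus on the base case $S = \langle\rangle$, so $\T/S = \T$; the general case is deferred to Section~\ref{sec:counting-cherries-general-case}. Write $S^* = \langle(x_1, y_1), \ldots, (x_r, y_r), (x_{r+1}, -)\rangle$ with $r + 1 - n = w(S^*) = k^* \le k$. By Proposition~\ref{prop:basic-sequence-properties}(ii), every $\{x,y\} \in U$ equals the unordered pair of some element of $S^*$; selecting the first such pair $(a, b)$ with $\{a, b\} = \{x, y\}$ gives an injection from $U$ into the $r$ pairs of $S^*$. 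This already yields $|U| \le r \le n + k - 1$, and the task is to sharpen this to $|U| \le 4k$.

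The first step is a first-element count. Every leaf $a \in X \setminus \{x_{r+1}\}$ must be removed from each tree containing it, so $a$ appears as a first element of at least one pair in $S^*$. Hence the $r = n + k^* - 1$ first-element slots are distributed over exactly $n - 1$ distinct leaves. Letting $m(a)$ denote the first-element multiplicity of $a$ and $R := \{a : m(a) \ge 2\}$, we obtain $\sum_{a \in R} m(a) = k^* + |R|$; since each $a \in R$ contributes at least $2$ to this sum, $|R| \le k^*$ and hence $\sum_{a \in R} m(a) \le 2k^* \le 2k$. Call a cherry $\{x, y\} \in U$ \emph{Type~A} when the first element of its chosen pair lies in $R$; by the injectivity of the pair map, the number of Type~A cherries is at most $\sum_{a \in R} m(a) \le 2k$.

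For a \emph{Type~B} cherry $\{x, y\}$, the first element $a \notin R$ of the chosen pair appears in $S^*$ as a first element exactly once, namely at step $i$, so this pair must remove $a$ from every tree still containing it. By non-triviality, there is a witness tree $T^-_{\{x,y\}} \in \T$ that contains both $x$ and $y$ as leaves but not as a cherry; removing $a$ from this witness then forces $T^{-(i-1)}_{\{x,y\}}$ to exhibit $\{a, b\}$ as a cherry. Since $T^-_{\{x,y\}} = T^{-(0)}_{\{x,y\}}$ does not have $\{a, b\}$ as a cherry, some earlier ``creating pair'' at a step $j < i$ must have modified the witness tree so that $\{a, b\}$ becomes a cherry of $T^{-(j)}_{\{x,y\}}$. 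A short analysis of how a cherry can be born through a single leaf deletion shows that the second element $y_j$ of this creating pair must equal $a$ or $b$. We then apply a dual counting argument to the second-element multiplicities of $S^*$, leveraging the tree-child constraint $y_j \ne x_i$ for $i < j$ (which forces every second-element occurrence of a leaf to precede all of its first-element occurrences) to mirror the first-element count and obtain a bound of $2k$ on the number of Type~B cherries.

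Summing the two contributions yields $|U| \le 2k + 2k = 4k$, establishing the base case. The main obstacle is the dual count for Type~B cherries: the case analysis of the creating pair must distinguish whether its second element is $a$ or $b$, and whether the intermediate cherry picked in $T^-_{\{x,y\}}$ is itself a cherry of $\T$ or was freshly created by earlier reductions. We expect the most delicate part to be assigning each Type~B cherry to a unique weight-bearing feature in $S^*$ without double-counting, possibly via an amortized or token-based scheme rather than a direct injection.
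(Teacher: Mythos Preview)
Your Type~A argument is correct and elegant: the injection of cherries into first-occurring pairs with matching unordered set, combined with the pigeonhole count $\sum_{a\in R} m(a)=k^*+|R|\le 2k^*$, cleanly bounds the cherries whose picked leaf is reused. The difficulty is entirely in Type~B, and here the proposal has a genuine gap. The promised ``dual counting argument on second-element multiplicities'' does not mirror the first-element count, because that count relies on the fact that \emph{every} leaf except $x_{r+1}$ must occur at least once as a first element (it must eventually be deleted), giving exactly $n-1$ distinct first elements against $r=n+k^*-1$ slots. No analogous lower bound holds for second elements: a leaf picked at step~1 can never appear as a second element at all, so the number of distinct second elements may be far below $n-1$, and the excess-multiplicity budget is not controlled by~$k$. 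Moreover, the map from a Type~B cherry to its ``creating step''~$j$ is not injective even up to bounded multiplicity: the same pair $(x_j,y_j)$ can simultaneously create different cherries $\{y_j,z_T\}$ in different witness trees $T$, one for each tree in which $\{x_j,y_j\}$ was a cherry at time $j-1$. So neither the second-element multiplicity count nor the creating-step assignment gives a handle on~$k$, and your own closing sentence concedes that no such assignment has been found.

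The paper's argument sidesteps this by working in the network rather than the sequence. It fixes a tree-child network $N$ displaying $\T$ with at most $k$ reticulations (hence at most $2k$ reticulation edges) together with embeddings $f_i:T_i\to N$, and charges each cherry to a reticulation edge so that no edge is charged more than twice. A cherry $\{x,y\}$ of $T_i$ is type~I if the image path $f_i(e_x)\cup f_i(e_y)$ contains a reticulation edge, and is charged to that edge (shown to be unique to the cherry); otherwise it is type~II, the tree-child property forces the $N$-sibling of $f_i(x)$ or $f_i(y)$ to be a reticulation, and the cherry is charged to the incident reticulation edge (again uniquely among type~II cherries). This yields $2\cdot 2k=4k$ directly. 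Your sequence-based decomposition is a genuinely different idea and the Type~A half is nice, but as it stands the Type~B half needs a new charging target; the reticulation edges of $N$ are what make the paper's argument go through, and no purely sequence-side substitute has been supplied.
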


Note that this claim refers to the weight $k$ of the \emph{whole} sequence
$S\circ S'$, not the weight of $S'$.
\markj{This is because the proof uses the structure of $S$ as well as $S'$ to bound the number of unique cherries in $\reduce{\T}{S}$.}

Our proof has two parts:
In this subsection, we consider the case when $S = \langle\rangle$, that is,
when we have a set of $X$-trees $\T$ with hybridization number at most $k$ and
no trivial cherries.
In the next subsection, we prove the claim for $S \ne \langle\rangle$, via a
reduction to the case when $S = \langle\rangle$.

\begin{lem}
  \label{lem:4k-cherries}
  If $\T$ is a set of $X$-trees without trivial cherries and with tree-child
  hybridization number $k$, then the total number of cherries of the trees
  in $\T$ is at most $4k$.
\end{lem}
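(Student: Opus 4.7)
The plan is to invoke Theorem~\ref{thm:cherry-picking-is-hybridization} to obtain a tree-child cherry picking sequence $S = \langle (x_1,y_1), \ldots, (x_r,y_r), (x_{r+1},-)\rangle$ for $\T$ of weight $k$, so that $|S| = r+1 = n + k$ with $n = |X|$. Since every taxon in $X \setminus \{x_{r+1}\}$ must appear as some $x_j$ with $j \le r$, at most $k$ of these $r$ pairs are \emph{reticulation pairs} (in which $x_j = x_i$ for some $i < j$); the rest are \emph{speciation pairs}.

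By Proposition~\ref{prop:basic-sequence-properties}\ref{item:xyCherry}, every cherry $\{x,y\}$ of $\T = \T^{(0)}$ equals $\{x_j, y_j\}$ for at least one index $j \in [r]$. Hence the number of cherries of $\T$ is bounded by the number of distinct unordered pairs $\{x_j, y_j\}$ over those indices $j$ for which $\{x_j, y_j\}$ is a cherry of $\T^{(0)}$. My plan is to charge each such cherry to one of the at most $k$ reticulation pairs in $S$, so that no reticulation pair is charged more than four times, yielding the bound $4k$.

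The no-trivial-cherries hypothesis is what enables this charging. For each cherry $\{x,y\}$ of $\T$, there is a tree $T^- \in \T$ in which $\{x,y\}$ is not a cherry, so the $x$-to-$y$ path in $T^-$ contains at least one pendant subtree. For $S$ to eventually remove $x$ and $y$ from $T^-$, the pair that processes $\{x,y\}$ in $T^-$ must act when $\{x,y\}$ is already a cherry of $T^{-(j-1)}$, so either that processing pair is itself a reticulation pair (because the orientation ``naturally'' associated with the cherry has been consumed processing $\{x,y\}$ in a tree where it was initially a cherry, leaving the other orientation to arise only by a repeat of some $x_i$), or an earlier pair must have pruned the pendant subtree on the $x$-to-$y$ path in $T^-$, and that pruning can again be traced back to a reticulation event in $S$.

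The main obstacle will be making the charging exact so that each reticulation pair accumulates at most four charges. I anticipate the factor of four emerging from two orthogonal binary choices in the case analysis: the two orientations $(x,y)$ versus $(y,x)$ of the ordered pair representing the cherry, and the two ways a reticulation can be held responsible for the cherry (either directly, as the processing pair, or indirectly, as the pair that pruned the pendant subtree in the non-cherry tree $T^-$). Making this bookkeeping watertight while respecting the tree-child constraint $y_j \ne x_i$ for all $i < j$, and ensuring that cherries with different witnessing trees $T^-$ do not collectively over-charge a single reticulation, is the principal technical challenge.
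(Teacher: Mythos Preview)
Your approach is genuinely different from the paper's, and the gap in it is real. The paper does \emph{not} work with the cherry picking sequence at all. Instead it fixes a tree-child network $N$ with $k$ reticulations displaying $\T$, together with embeddings $f_i$ of each $T_i$ into $N$, and charges each cherry to a reticulation \emph{edge} of $N$. For a cherry $\{x,y\}$ of $T_i$ with parent $p$, the image $f_i(e_x)\cup f_i(e_y)$ is a path in $N$; the tree-child property forces either (Type~I) this path to contain exactly one reticulation edge, or (Type~II) this path to have a pendant reticulation (because pendant nodes of such a path must be reticulations, and if there are no reticulations on the path then the non-triviality of $\{x,y\}$ forces a pendant). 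A sequence of disjointness claims shows that distinct cherries of each type charge distinct reticulation edges. Since $N$ has at most $2k$ reticulation edges, this yields $4k$. The geometry of $N$ is what makes the charging and the multiplicity bound clean.

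Your sequence-based plan, by contrast, never pins down the charging map. The crux is your ``indirect'' case: when the pair that resolves $\{x,y\}$ in the witness tree $T^-$ is itself a speciation pair, you say the prior pruning of a pendant on the $x$--$y$ path in $T^-$ ``can again be traced back to a reticulation event in $S$''. But that pruning step is some pair $(z,w)$ with $w\in\{x,y\}$, and there is no reason $z$ is repeating here; this may well be the first occurrence of $z$ as a first coordinate. You would then have to recurse on $z$ (find a tree where $z$'s current cherry was not original, etc.), and nothing in your outline controls how many distinct original cherries end up funnelled to the same reticulation pair through such chains. Your guess that the factor $4$ arises from ``two orientations $\times$ two modes of responsibility'' is not supported by any argument, and the parenthetical about the ``other orientation arising only by a repeat of some $x_i$'' conflates the tree-child constraint on $y_j$ with the definition of a reticulation pair (a repeat of $x_j$). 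In short, the proposal identifies the right budget (the $k$ reticulation pairs) but does not supply the injection-up-to-bounded-multiplicity that the lemma needs; the paper sidesteps this by working in the network, where the needed disjointness is geometric.
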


\begin{proof}
  Let $N$ be a tree-child network with $k$ reticulations that displays $\T$
  and, for each tree $T_i \in \T$, let $f_i$ be an embedding of $T_i$ into $N$.
  Our strategy is to ``charge'' each cherry $\{x, y\}$ of $\T$ to some
  reticulation edge in a manner that charges every reticulation edge for at
  most two cherries.
  Since $N$ has hybridization number at most $k$ and, therefore, at most $2k$
  reticulation edges, this proves the lemma.

  We start by proving a number of auxiliary claims about how the images of
  cherries interact with reticulation edges and with each other.
  The first three claims consider a fixed cherry $\{x, y\}$ of some tree $T_i
  \in \T$ and a fixed tree $T_j$ that does not have $\{x, y\}$ as a cherry.
  Since $\{x, y\}$ is non-trivial, such a tree $T_j$ exists.
  Let $p$ be the common parent of $x$ and $y$ in $T_i$ and let $e_x = px$
  and $e_y = py$ be the parent edges of $x$ and $y$ in $T_i$, respectively.
  Since $T_j$ is an $X$-tree, we have $x, y \in T_j$.
  Let $u$ be the LCA of $x$ and $y$ in~$T_j$, and let $P_x$ and $P_y$ be the
  paths from $u$ to $x$ and from $u$ to $y$ in~$T_j$, respectively.
  Since $\{x, y\}$ is not a cherry of~$T_j$, the path $P_x \cup P_y$
  has at least one pendant edge.

  \begin{clm}
    \label{clm:pendants-are-reticulations}
    All pendant nodes of $f_i(e_x) \cup f_i(e_y)$ are reticulations.
  \end{clm}

  \begin{proof}
    Consider any pendant node $w$ of $f_i(e_x) \cup f_i(e_y)$ and let $e$
    be the edge connecting $w$ to a node $v$ in $f_i(e_x) \cup f_i(e_y)$.
    Neither endpoint of $e$ is the root of $N$.
    Since $N$ is a tree-child network, there exists a tree path $Q$
    from $w$ to a leaf $f_i(\ell_w)$.
    Consider the path $P$ from the root to $\ell_w$ in $T_i$.
    Since $e_x$ and $e_y$ are not in~$P$, $f_i(e_x) \cup f_i(e_y)$ and
    $f_i(P)$ are edge-disjoint.
    On the other hand, since $Q$ is a tree path, $Q \subseteq f_i(P)$.
    Since $w$ is not the root of $N$ and $f_i(P)$'s top endpoint is the
    root of $N$, $Q$ is a \emph{proper} subpath of $f_i(P)$,
    that is, $f_i(P)$ contains a parent edge of $w$.
    If $f_i(P)$ contained $e$, then $f_i(P)$ would be a proper superpath
    of $Q \cup e$ because $e$'s top endpoint also is not the root of $N$.
    Thus, $f_i(P)$ would contain the parent edge of $v$, that is, $f_i(P)$ and
    $f_i(e_x) \cup f_i(e_y)$ would not be edge-disjoint, a contradiction.
    Therefore, $e \notin f_i(P)$ and $w$ has another parent edge,
    that is, $w$ is a reticulation.
  \end{proof}

  \begin{clm}
    \label{clm:one-or-no-path-reticulation}
    The path $f_i(e_x) \cup f_i(e_y)$ contains at most one reticulation.
    This reticulation is incident to $f_i(p)$.
  \end{clm}

  \begin{proof}
    We prove that only the top edge of $f_i(e_x)$ can be a reticulation
    edge.
    An analogous argument shows that only the top edge of $f_i(e_y)$ can be
    a reticulation edge.
    Thus, all reticulation edges in $f_i(e_x) \cup f_i(e_y)$ are incident
    to $f_i(p)$.
    If the top edges of $f_i(e_x)$ and $f_i(e_y)$ are both reticulation edges,
    then both children of $f_i(p)$ are reticulations, a contradiction because
    $N$ is a tree-child network.
    Thus, $f_i(e_x) \cup f_i(e_y)$ contains at most one reticulation.

    So assume that $f_i(e_x)$ contains a reticulation edge and choose such an
    edge $e$ that is closest to $f_i(p)$.
    If $e$ is incident to $p$, our claim holds.
    So assume $e$ is not incident to $p$ and let $z$ be its top endpoint.
    By the choice of $e$, $z$ is a tree node.
    However, by Claim~\ref{clm:pendants-are-reticulations}, this implies
    that both of $z$'s children are reticulations, a contradiction again
    because $N$ is a tree-child network.
  \end{proof}

  \begin{clm}
    \label{clm:pendant-or-path-reticulation}
    If the path $f_i(e_x) \cup f_i(e_y)$ contains no reticulation, then it
    has at least one pendant node.
  \end{clm}

  \begin{proof}
    If $f_i(e_x) \cup f_i(e_y)$ contains no reticulation and has no pendant
    nodes, then $f_i(x)$ and $f_i(y)$ are children of $f_i(p)$ in $N$.
    Thus, both $f_j(P_x)$ and $f_j(P_y)$ include $f_i(p)$.
    Since $f_j(P_x)$ and $f_j(P_y)$ share only their top endpoint $f_j(u)$, we
    have $f_j(u) = f_i(p)$ and thus $f_j(P_x) = f_i(e_x)$ and $f_j(P_y) =
    f_i(e_y)$.
    This, however, is a contradiction because $f_i(e_x) \cup f_i(e_y)$ has no
    pendant nodes but $P_x \cup P_y$ has a pendant node in $T_j$, that is,
    $f_j(P_x) \cup f_j(P_y)$ must also have a pendant node in $N$.
  \end{proof}

  For the next two claims, fix two distinct cherries $\{x, y\}$ and $\{w, z\}$
  of two trees $T_i \in \T$ and $T_j \in \T$, respectively.
  Let $p$ be the common parent of $x$ and $y$ in $T_i$, and let $q$ be the
  common parent of $w$ and $z$ in $T_j$.

  \begin{clm}
    \label{clm:path-reticulations-are-unique}
    $f_i(e_x) \cup f_i(e_y)$ and $f_j(e_w) \cup f_j(e_z)$ do not share any
    reticulation edge.
  \end{clm}

  \begin{proof}
    Assume the contrary.
    Then let $e$ be a reticulation edge in $(f_i(e_x) \cup f_i(e_y)) \cap
    (f_j(e_w) \cup f_j(e_z))$ and assume w.l.o.g.\ that $e \in f_i(e_x) \cap
    f_j(e_w)$.
    By Claim~\ref{clm:one-or-no-path-reticulation}, $f_i(p) = f_j(q)$; $e$ is
    the first edge in both $f_i(e_x)$ and in $f_j(e_w)$; $f_i(e_y)$ and
    $f_j(e_z)$ are both tree paths from $f_i(p)$ to $f_i(y)$ and $f_j(z)$,
    respectively; and the subpaths of $f_i(e_x)$ to $f_i(e_w)$ from $e$'s
    bottom endpoint to $f_i(x)$ and $f_j(w)$, respectively, are also tree
    paths.

    Since every pendant node of $f_i(e_y)$ is a reticulation, by
    Claim~\ref{clm:pendants-are-reticulations}, none of these pendant
    nodes can belong to $f_j(e_z)$.
    Thus, $f_j(z) = f_i(y)$, that is, $z = y$.
    Similarly, none of the pendant nodes of the subpath of $f_i(x)$ from
    $e$'s bottom endpoint to $f_i(x)$ can belong to $f_j(w)$.
    Thus, $f_j(w) = f_i(x)$, that is, $w = x$.
    This shows that $\{x, y\} = \{w, z\}$, a contradiction.
  \end{proof}

  \begin{clm}
    \label{clm:pendant-reticulations-are-unique}
    If neither $f_i(e_x) \cup f_i(e_y)$ nor $f_j(e_w) \cup f_j(e_z)$ contains
    a reticulation edge, then these two paths are vertex-disjoint.
  \end{clm}

  \begin{proof}
    Assume that neither $f_i(e_x) \cup f_i(e_y)$ nor $f_j(e_w) \cup f_j(e_z)$
    contains a reticulation edge and assume first that $f_i(e_x) \cup f_i(e_y)$
    and $f_j(e_w) \cup f_j(e_z)$ are not \emph{edge-disjoint}.
    Then, w.l.o.g., $f_i(e_x)$ and $f_j(e_w)$ share an edge~$e$.
    Since $f_i(e_x)$ and $f_j(e_w)$ are tree paths, the same argument as in the
    proof of Claim~\ref{clm:path-reticulations-are-unique} shows that $x = w$.
    If $f_i(e_y)$ and $f_j(e_z)$ also share an edge, then the same argument
    shows that $y = z$.
    Otherwise, w.l.o.g.\ $f_j(q)$ is an internal node of $f_i(e_x)$ and
    the first node after $f_j(q)$ in $f_j(e_z)$ is a pendant node of
    $f_i(e_x)$.
    By Claim~\ref{clm:pendants-are-reticulations}, this node is a reticulation,
    a contradiction.
    This shows that $f_i(e_x) \cup f_i(e_y)$ and $f_j(e_w) \cup f_j(e_z)$
    are edge-disjoint.

    If $f_i(e_x) \cup f_i(e_y)$ and $f_j(e_w) \cup f_j(e_z)$ are edge-disjoint
    but not vertex-disjoint, then their shared vertex $v$ satisfies either $v
    \ne f_i(p)$ and $v \ne f_j(q)$ or w.l.o.g.\ $v = f_i(p)$.
    In the former case, the parent edge of $v$ belongs to both
    $f_i(e_x) \cup f_i(e_y)$ and $f_j(e_w) \cup f_j(e_z)$, a contradiction.
    In the latter case, both child edges of $v$ belong to
    $f_i(e_x) \cup f_i(e_y)$ and $f_j(e_w) \cup f_j(e_z)$ has to contain at least
    one of them, again a contradiction.
  \end{proof}

  Now we call a cherry $\{x, y\}$ of some tree $T_i$ a \emph{type-I cherry} if
  the path $f_i(e_x) \cup f_i(e_y)$ contains a reticulation edge; otherwise, it
  is a \emph{type-II cherry}.
  We charge each type-I cherry $\{x, y\}$ to the reticulation edge in $f_i(e_x)
  \cup f_i(e_y)$.
  By Claim~\ref{clm:path-reticulations-are-unique}, every reticulation edge
  is charged for at most one type-I cherry.
  For every type-II cherry $\{x, y\}$,
  Claim~\ref{clm:pendant-or-path-reticulation} shows that w.l.o.g., $f(x)$'s
  sibling $v$ in $N$ is a pendant node of $f(e_x)$.
  By Claim~\ref{clm:pendants-are-reticulations}, $v$ is a reticulation.
  Thus, the edge $e$ between $v$ and $f(x)$'s parent is a reticulation edge.
  We charge the cherry $\{x, y\}$ to $e$.
  Since $e$ has an endpoint in $f(e_x)$,
  Claim~\ref{clm:pendant-reticulations-are-unique} implies that $e$ is
  charged for only one type-II cherry.
  This proves that every reticulation edge is charged for at most two
  reticulations, one of type I and one of type II\@.
  This finishes the proof.
\end{proof}

\subsection{Bounding the Number of Cherries in General Irreducible Trees}

\label{sec:counting-cherries-general-case}

Having shown, in Lemma~\ref{lem:4k-cherries}, that
Proposition~\ref{prop:few-non-trivial-cherries} holds when $S =
\langle\rangle$, we extend the proof to arbitrary partial tree-child cherry
picking sequences in this section, thereby completing the proof of
Proposition~\ref{prop:few-non-trivial-cherries}.
The main idea is to construct a set of $X$-trees $\hat\T$ that has the same set
of cherries as $\T/S$ (and in particular has no trivial cherries) and then show
that $\hat\T$ has reticulation number at most $k$.
By Lemma~\ref{lem:4k-cherries}, this implies that $\hat\T$, and thus $\T/S$,
has at most $4k$ cherries.

\begin{lem}
  \label{lem:maintain-4k-cherries}
  Let $\T$ be a set of $X$-trees and let $S = \langle(x_1, y_1), (x_2, y_2),
  \ldots, (x_r, y_r), (x_{r+1}, -)\rangle$ be a tree-child cherry picking
  sequence for $\T$ of weight at most $k$.
  For any $j \in [r]_0$, either there exists a trivial cherry of
  $\T^{(j)}$, or $\T^{(j)}$ has at most $4k$ unique cherries.
\end{lem}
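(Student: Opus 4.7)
The plan is to reduce the claim to Lemma~\ref{lem:4k-cherries} by constructing a set of $X$-trees $\hat{\T}$ whose cherries coincide with those of $\T^{(j)}$ and whose tree-child hybridization number is at most $k$. Throughout I assume that $\T^{(j)}$ has no trivial cherries, since otherwise the first disjunct of the lemma is immediate.

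For each $i$, let $X_i$ be the leaf set of $T_i^{(j)}$ and $Z_i = X \setminus X_i$. Trees with $|X_i| \leq 1$ contribute no cherries and are simply dropped. For each remaining $i$, I build $\hat{T}_i$ by attaching the leaves of $Z_i$ as a caterpillar above the root of $T_i^{(j)}$, ordered so that the $x_h \in Z_i$ of largest index $h$ sits just above the root of $T_i^{(j)}$ and the indices strictly decrease as one moves toward the new root. Because $|X_i| \geq 2$, every caterpillar leaf has an internal node (containing at least two leaves) as its sibling, so no new cherry is introduced; hence the cherries of $\hat{T}_i$ are exactly those of $T_i^{(j)}$, the cherries of $\hat{\T}$ coincide with those of $\T^{(j)}$, and $\hat{\T}$ inherits the absence of trivial cherries.

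The core step is to exhibit a tree-child cherry picking sequence for $\hat{\T}$ of weight at most $k$. After deleting from $S$ every pair $(x_m, y_m)$ with $x_m = x_{r+1}$ (any such pair is necessarily a no-op, since otherwise $x_{r+1}$ would be deleted from some tree and could not remain the single leaf of every $T_i^{(r)}$), I take
\[
\hat{S} = S_{j+1,r} \circ \langle (x_j, x_{r+1}), (x_{j-1}, x_{r+1}), \ldots, (x_1, x_{r+1}), (x_{r+1}, -) \rangle.
\]
Then $|\hat{S}| \leq |S|$, so $w(\hat{S}) \leq w(S) \leq k$. The prefix $S_{j+1,r}$ can only interact with cherries inside the bottom part $T_i^{(j)}$ of $\hat{T}_i$: any cherry involving a caterpillar leaf would pair $x_{r+1}$ with some $x_{h'} \in Z_i$ (with $h' \leq j$), but the tree-child property of $S$ forbids $y_m = x_{h'}$ for any $m > j$. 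So the prefix reduces the bottom to $\{x_{r+1}\}$ and leaves the caterpillar intact, and the suffix then peels the caterpillar from the bottom up: the chosen attachment ordering guarantees that, whenever $x_h \in Z_i$, the pair $(x_h, x_{r+1})$ finds $x_h$ as the current bottom-most caterpillar leaf and removes it, while all other suffix pairs are no-ops in tree $i$. Thus $\hat{S}$ reduces every $\hat{T}_i$ to $\{x_{r+1}\}$.

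The main obstacle to making this argument rigorous is verifying that $\hat{S}$ is tree-child. A routine case analysis over pairs of positions in $\hat{S}$ collapses to the non-trivial sub-case in which $\hat{y}_k = x_{r+1}$ from the suffix is compared with some $\hat{x}_i = x_m$ with $m \leq r$ from the prefix; the preliminary purge of pairs with $x_m = x_{r+1}$ ensures $x_m \neq x_{r+1}$ and resolves this sub-case. With $\hat{S}$ a valid tree-child cherry picking sequence for $\hat{\T}$, one has $\htc(\hat{\T}) \leq w(\hat{S}) \leq k$, and Lemma~\ref{lem:4k-cherries} bounds the number of cherries of $\hat{\T}$, hence of $\T^{(j)}$, by $4k$.
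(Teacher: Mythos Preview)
Your approach is essentially the paper's: attach a caterpillar of the removed leaves above each $T_i^{(j)}$, exhibit a tree-child sequence $S_{j+1,r}\circ\langle\ldots,(x_h,x_{r+1}),\ldots,(x_{r+1},-)\rangle$ of weight at most $k$ for the resulting $X$-trees, and invoke Lemma~\ref{lem:4k-cherries}. The paper orients the caterpillar with the \emph{smallest} index at the bottom and correspondingly runs the suffix as $\langle(x_1,x_{r+1}),\ldots,(x_j,x_{r+1})\rangle$, but that is cosmetic; your purge of pairs with $x_m=x_{r+1}$ is in fact more careful than the paper's bare assertion that $x_{r+1}\notin\{x_1,\ldots,x_r\}$. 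One small inaccuracy worth flagging: the prefix need not leave the caterpillar intact---a pair $(x_m,y_m)$ with $y_m=x_{r+1}$ and $x_m$ equal to the label of the current bottom caterpillar leaf can peel that leaf off, and your tree-child argument only rules out the orientation $y_m=x_{h'}$---but this is harmless, since your suffix still strips whatever caterpillar leaves remain (the paper explicitly notes ``$S'$ may also delete some leaves of $C$'').
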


\begin{proof}
  For $j = 0$, the claim holds by Lemma~\ref{lem:4k-cherries}.
  For $j > 0$, we cannot apply Lemma~\ref{lem:4k-cherries} directly because the
  trees in $\T^{(j)}$ may have different leaf sets.
  Assume that $\T^{(j)}$ has no trivial cherry, because otherwise the lemma
  holds.
  In order to use Lemma~\ref{lem:4k-cherries} to bound the number of
  unique cherries in $\T^{(j)}$, we transform $\T^{(j)}$ into a set of
  $X$-trees $\hat\T^{(j)}$ with the following properties:
  \begin{enumerate}[noitemsep]
    \item\label{con:sameCherries} $\hat{\T}^{(j)}$ has the same unique cherries
      as  $\T^{(j)}$;
    \item\label{con:noTrivialCherries}  $\hat{\T}^{(j)}$ has no trivial
      cherries; and
    \item\label{con:hybridizationNumber}  $\hat{\T}^{(j)}$ has tree-child
      hybridization number at most $k$.
  \end{enumerate}
  By Properties~\ref{con:noTrivialCherries} and~\ref{con:hybridizationNumber}
  and Lemma~\ref{lem:4k-cherries}, $\hat\T^{(j)}$ has at most $4k$ unique
  cherries.
  Thus, by Property~\ref{con:sameCherries}, $\T^{(j)}$ has at most $4k$
  unique cherries.
 
  To obtain $\hat\T^{(j)}$ from $\T^{(j)}$, let $\T' \subseteq \T$ be the
  subset of trees $T \in \T$ such that $T^{(j)}$ has at least two leaves.
  We can assume that $\T' \ne \emptyset$ because otherwise, $\T^{(j)}$ has
  no cherries and the claim holds.
  Also note that every cherry of $\T^{(j)}$ is a cherry of some tree $T^{(j)}$
  with $T \in \T'$.
  Now consider any tree $T \in \T'$ and let $i_1 < \ldots < i_\ell$ be the
  indices in $[j]$ such that $(x_{i_h}, y_{i_h})$ is a cherry of $T^{(i_h -
    1)}$ for all $1 \le h \le \ell$.
  In other words, $T^{(i)} \ne T^{(i-1)}$ if and only if $i \in \{i_1, \ldots,
    i_\ell\}$.
  Observe that $T^{(j)}$ has label set $X \setminus \{x_{i_1}, \ldots,
    x_{i_\ell}\}$.
  Let $C$ be a caterpillar with leaf set $\{z, x_{i_1}, \ldots, x_{i_\ell}\}$,
  from bottom to top.
  We construct a tree $\hat{T}^{(j)}$ from $T^{(j)}$ and $C$ by identifying $z$
  with the root of $T^{(j)}$.
  This is illustrated in Figure~\ref{fig:hat-T}.
  $\hat\T^{(j)}$ is the set of all such trees $\hat T^{(j)}$:
  $\hat\T^{(j)} = \{\hat T^{(j)} \mid T \in \T'\}$.

  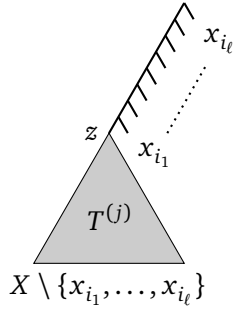
\begin{figure}[t]
    \centering
    \begin{tikzpicture}[x=0.5cm,y=0.5cm]
      \path [draw=black,fill=black!20] (0,0) -- ++(0:4) -- ++(120:4) -- cycle;
      \node [anchor=north] at (2,0) {$X \setminus \{x_{i_1}, \ldots, x_{i_\ell}\}$};
      \draw [thick] (60:4) -- +(60:4);
      \draw [thick] (60:4) foreach \i in {0,...,7} {
        ++(60:0.5) -- +(300:0.5) coordinate (c\i)
      };
      \node [anchor=north west] (l0) at (c0) {$x_{i_1}$};
      \node [anchor=north west] (l1) at (c7) {$x_{i_\ell}$};
      \draw [thick,dotted] (l0) -- (l1);
      \node [anchor=east] at (60:4) {$z$};
      \node at (2,1.25) {$T^{(j)}$};
    \end{tikzpicture}
    \caption{The construction of the tree $\hat T^{(j)}$ from $T^{(j)}$ and
      a caterpillar $C$ with leaf set $\{z, x_{i_1}, \ldots, x_{i_\ell}\}$.%
      \label{fig:hat-T}}
  \end{figure}

  Property~\ref{con:sameCherries} holds because the trees in $\T^{(j)}
  \setminus (\T')^{(j)}$ have no cherries and, for every tree $T \in \T'$,
  $\hat T^{(j)}$~has the same cherries as $T^{(j)}$:
  $T^{(j)}$ is a pendant subtree of $\hat T^{(j)}$, so every cherry of
  $T^{(j)}$ is a cherry of $\hat T^{(j)}$.
  Every cherry of $\hat T^{(j)}$ that is not a cherry of $T^{(j)}$ would have
  to involve some leaf $x_{i_h}$, but none of these leaves is part of a cherry
  because $T^{(j)}$ has at least two leaves.

  To see that Property~\ref{con:noTrivialCherries} holds, observe that every
  trivial cherry $\{x, y\}$ would have to be a cherry of \emph{every} tree
  in $\hat\T^{(j)}$ because all trees in $\hat\T^{(j)}$ have the same label
  set.
  Thus, by Property~\ref{con:sameCherries}, $\{x, y\}$ would be a cherry of
  every tree $T^{(j)}$ such that $T \in \T'$.
  By the definition of $\T'$, $\{x, y\}$ would therefore be a trivial cherry of
  $\T^{(j)}$, but $\T^{(j)}$ has no trivial cherries.
  Thus, $\hat\T^{(j)}$ has no trivial cherries.

  To prove that $\hat\T^{(j)}$ has tree-child hybridization number at most $k$
  (Property~\ref{con:hybridizationNumber}), we construct a tree-child cherry
  picking sequence $\hat S$ of weight at most $k$ for $\hat\T^{(j)}$.
  This sequence is defined as
  \begin{equation*}
    \hat S = \langle (x_{j+1}, y_{j+1}), \ldots, (x_r, y_r), (x_1, x_{r+1}),
    \ldots, (x_j, x_{r+1}), (x_{r+1}, -) \rangle,
  \end{equation*}
  that is, we swap the subsequences $\langle (x_1, y_1), \ldots, (x_j, y_j)
  \rangle$ and $\langle (x_{j+1}, y_{j+1}), \ldots, (x_r, y_r) \rangle$ of
  $S$ and then replace $y_i$ with $x_{r+1}$ in each pair $(x_i, y_i)$ with
  $1 \le i \le j$.
  By construction, $\hat S$ has the same weight as $S$, that is, its weight is
  at most $k$.

  To see that $\hat S$ is a tree-child cherry picking sequence, observe that
  $\langle (x_{j+1}, y_{j+1}), \ldots, (x_r, y_r) \rangle$ is a subsequence
  of a tree-child cherry picking sequence, namely $S$, and is thus a partial
  tree-child cherry picking sequence.
  Since $S$ reduces each tree in $\T$ to the single leaf $x_{r+1}$, we have
  $x_{r+1} \notin \{x_1, \ldots, x_r\}$, so $x_{r+1}$ is not forbidden with
  respect to $\langle (x_{j+1}, y_{j+1}), \ldots, (x_r, y_r), (x_1, x_{r+1}),
  \ldots, (x_i, x_{r+1}) \rangle$, for any $i \in [j]_0$.
  Thus, $\hat S$ is a tree-child cherry picking sequence.

  It remains to prove that $\hat S$ is a cherry picking sequence for every
  tree $\hat T^{(j)} \in \hat\T^{(j)}$.
  Observe that the sequence $S' = \langle (x_{j+1}, y_{j+1}), \ldots,
  (x_r, y_r) \rangle$ reduces $T^{(j)}$ to the single leaf $x_{r+1}$.
  Thus, after applying $S'$ to~$\hat T^{(j)}$, we obtain a subtree $C'$ of
  the caterpillar $C$ with $z$ replaced with $x_{r+1}$.
  ($S'$ may also delete some leaves of $C$.)
  Since the leaves $x_{i_1}, \ldots, x_{i_\ell}$ of $C$ appear in this order
  from bottom to top in $C$, the sequence $\langle (x_1, x_{r+1}), \ldots,
  (x_j, x_{r+1}) \rangle$ reduces $C'$ to the single leaf $x_{r+1}$.
  Thus, $\hat S$ is a cherry picking sequence for~$\hat T^{(j)}$.
\end{proof}

\subsection{Proof of Theorem~\ref{thm:main}}

\label{sec:wrapping-up}

Using the results from the previous three subsections, we are now ready to
prove Theorem~\ref{thm:main}.
While our algorithm computes $\T'$ only in line~\ref{lin:PruneCherries}, and
$n'$, $k'$, and $C$ only in
lines~\ref{lin:DefineNumberRemainingLeaves}--\ref{lin:DefineCherrySet}, it is
convenient for the sake of this proof to view them as quantities that evolve
over time, as functions of $S$.
We define $n'(\T, S) = |\{x \in X \mid x \text{ is a leaf of a tree in }
  \T/S\}|$ and $k'(\T, S) = |S| - |X| + n'(\T, S)$ for any partial tree-child
cherry picking sequence~$S$.

We divide the proof of Theorem~\ref{thm:main} into three parts.
First, we prove that $k'(\T, S)$ is invariant over the course of any invocation
$\TreeChildSequence(\T, S, k)$ and that $0 \le k'(\T, S) \le k$ in every
invocation the algorithm makes.
This will be used in the analysis of the running time of the algorithm and in
proving the correctness of the algorithm in the case when it returns a sequence
in line~\ref{lin:SolutionFound}.
Then, we bound the running time of the algorithm by $O((8k)^k nt \lg t + nt \lg
nt)$, where $n = |X|$ and $t = |\T|$.
This implies in particular that the number of recursive calls the algorithm
makes is finite, a fact that will be used in the correctness proof.
Finally, we consider the tree of recursive calls the algorithm makes and
use induction on the number of descendant invocations of any invocation
$\TreeChildSequence(\T, S, k)$ to prove the correctness of this invocation.

\begin{lem}
  \label{lem:non-trivial-cherries-add-weight}
  For a collection of $X$-trees $\T$, any partial cherry picking sequence
  $S$, and any non-trivial cherry $\{x, y\}$ of $\T/S$, $k'(\T, S \circ
  \langle (x, y) \rangle) = k'(\T, S) + 1$.
\end{lem}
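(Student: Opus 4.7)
The plan is to observe that the statement reduces by arithmetic to showing that the set of taxa appearing as leaves in $\T/S$ is unchanged when we apply the pair $(x,y)$. Concretely, since $|S \circ \langle (x,y) \rangle| = |S|+1$, unfolding the definition of $k'$ gives
\begin{equation*}
k'(\T, S \circ \langle (x,y) \rangle) - k'(\T, S) = 1 + \bigl(n'(\T, S \circ \langle (x,y) \rangle) - n'(\T, S)\bigr),
\end{equation*}
so the lemma holds if and only if $n'(\T, S \circ \langle (x,y) \rangle) = n'(\T, S)$.

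Next I would argue that applying a single pair $(x,y)$ to any tree $T/S$ can only ever remove the leaf $x$, and only when $\{x,y\}$ is a cherry of $T/S$; no other taxon can disappear from any tree. Hence the only candidate leaf that could leave the union of leaf sets is $x$ itself, and $n'$ can decrease by at most one. Since $\{x,y\}$ is a cherry of $\T/S$, the taxon $x$ is a leaf of at least one tree in $\T/S$, so $n'(\T, S \circ \langle (x,y) \rangle) \in \{n'(\T, S), n'(\T, S) - 1\}$.

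Finally I would invoke non-triviality of the cherry $\{x,y\}$ to rule out the decrement: by definition of non-trivial cherry, there exists a tree $T \in \T$ such that $T/S$ contains both $x$ and $y$ as leaves but $\{x,y\}$ is not a cherry of $T/S$. Applying the pair $(x,y)$ to $T/S$ therefore leaves it untouched, so $x$ remains a leaf of $T/(S \circ \langle (x,y) \rangle)$. This witness shows $x$ is still present in the overall leaf set of $\T/(S \circ \langle (x,y) \rangle)$, hence $n'$ is unchanged and the lemma follows.

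There is no serious obstacle here; the whole proof is essentially bookkeeping against the definitions. The only point requiring a moment's care is the last step, where non-triviality must be used in precisely the right form, namely to exhibit a tree in which $x$ survives the application of $(x,y)$.
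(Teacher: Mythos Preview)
Your proof is correct and follows essentially the same approach as the paper's: both reduce the claim to showing $n'(\T, S \circ \langle (x,y) \rangle) = n'(\T, S)$, and both use non-triviality to exhibit a tree in $\T/S$ containing both $x$ and $y$ but not the cherry $\{x,y\}$, so that $x$ survives. Your version is a bit more explicit in noting that only $x$ could possibly disappear and that $n'$ therefore drops by at most one, but the argument is the same.
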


\begin{proof}
  Since $\{x, y\}$ is a non-trivial cherry of $\T/S$, there exists a tree
  $T/S \in \T/S$ that contains both $x$ and $y$ but not the cherry $\{x, y\}$.
  Thus, applying the pair $(x, y)$ to $\T/S$ does not remove $x$ from all trees
  in~$\T/S$.
  In particular, $n'(\T, S \circ \langle (x, y) \rangle) = n'(\T, S)$ and,
  therefore, $k'(\T, S \circ \langle (x, y) \rangle) =
  |S \circ \langle (x, y) \rangle| - |X| + n'(\T, S \circ \langle (x, y)
  \rangle) = |S| + 1 - |X| + n'(\T, S) = k'(\T, S) + 1$.
\end{proof}

\begin{lem}
  \label{lem:invariant-parameter}
  The value of $k'(\T, S)$ is invariant over the course of any invocation
  $\TreeChildSequence(\T, S, k)$ and satisfies $0 \le k'(\T, S) \le k$.
  Moreover, an invocation $\TreeChildSequence(\T, S, k)$ satisfies
  $k'(\T, S) = 0$ if and only if $S = \langle\rangle$.
\end{lem}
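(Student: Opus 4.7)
The plan is to establish all three assertions at once by tracking how $k'(\T,S)$ evolves (i) as $S$ is modified inside a single invocation and (ii) as control passes from a parent invocation to a child invocation. The base case is a direct computation: for $S = \langle\rangle$ every element of $X$ is a leaf of every $X$-tree, so $n'(\T,\langle\rangle) = |X|$ and $k'(\T,\langle\rangle) = 0 - |X| + |X| = 0$.

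For invariance within a single invocation, note that $S$ is modified only in the trivial-cherry reduction loop (lines~\ref{lin:LoopTrivialCherryReductionStart}--\ref{lin:LoopTrivialCherryReductionEnd}), where a pair $(x,y)$ with $\{x,y\}$ a trivial cherry of $\T/S$ and $y$ not forbidden is appended. The argument of Proposition~\ref{prop:basic-sequence-properties}\ref{item:yNotForbidden} applies verbatim to partial tree-child sequences (it only uses that $y \ne x_i$ for $i \le |S|$), so $y$ is a leaf of every tree in $\T/S$. Combined with the triviality of $\{x,y\}$, this means $\{x,y\}$ is a cherry of every tree in $\T/S$ that contains $x$, and applying $(x,y)$ strips $x$ from every such tree. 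Thus $x$ is counted in $n'(\T,S)$ (it was a leaf of the tree witnessing the cherry) but not in $n'(\T, S \circ \langle(x,y)\rangle)$, while no other leaf is created or destroyed. Hence $n'$ drops by exactly one, cancelling the increase of $|S|$ by one, and $k'$ is unchanged.

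For the transition from a parent invocation to a child invocation, $S$ changes only by the appending of a pair $(x,y) \in C$ in line~\ref{lin:RecursiveCall}. At that point all trivial cherries have already been consumed, so $\{x,y\}$ is non-trivial; by Lemma~\ref{lem:non-trivial-cherries-add-weight}, $k'$ jumps up by exactly one. The recursion is entered only when $k'(\T,S) < k$ (line~\ref{lin:FailureConditions}), so the child's input satisfies $k'(\T, S \circ \langle(x,y)\rangle) \le k$.

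Inducting on the depth $d$ of an invocation in the recursion tree (with the top-level call at depth $0$), one concludes that the input $S$ to every invocation satisfies $k'(\T,S) = d$. Combining this with the previous paragraph gives $0 \le k'(\T,S) \le k$ for every invocation and $k'(\T,S) = 0$ if and only if $d = 0$, which is exactly the condition $S = \langle\rangle$. The one technical delicacy is in the second paragraph: to be sure that trivial-cherry pruning decreases $n'$ by \emph{exactly} one, one must invoke the ``$y$ not forbidden'' hypothesis to guarantee that $y$ is present in every tree of $\T/S$, since otherwise the trivial cherry need not strip $x$ from every tree that contains it. Everything else is straightforward bookkeeping.
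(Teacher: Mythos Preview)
Your proof is correct and follows essentially the same approach as the paper's: establish invariance of $k'$ under trivial-cherry pruning (using that $y$ not forbidden forces $x$ to be removed from \emph{every} tree containing it), then use Lemma~\ref{lem:non-trivial-cherries-add-weight} together with the $k' < k$ guard in line~\ref{lin:FailureConditions} to control the parent-to-child transition. The only cosmetic difference is that the paper inducts on $|S|$ while you induct on recursion depth $d$ and observe the slightly sharper fact $k'(\T,S) = d$ at the input to each invocation; this makes the ``if and only if'' clause drop out immediately rather than as a by-product of the strict inequality $k'(\T,S) > k'(\T,S')$.
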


\begin{proof}
  First we prove that $k'(\T, S)$ does not change over the course of any
  invocation $\TreeChildSequence(\T, S, k)$.
  Note that in a given invocation $\TreeChildSequence(\T, S, k)$, $S$ changes
  only in line~\ref{lin:LoopTrivialCherryReductionEnd}.
  Each execution of line~\ref{lin:LoopTrivialCherryReductionEnd} adds a pair
  $(x, y)$ to $S$, thereby increasing $|S|$ by one.
  Since $\{x, y\}$ is a trivial cherry of $\T'$ and $y$ is not forbidden with
  respect to $S$ in this case, this also removes $x$ from all trees in
  $\T/S$, so $n'(\T, S)$ decreases by one and $k'(\T, S) = |S| - |X| + n'(\T,
  S)$ remains unchanged.

  We prove the bounds on $k'(\T, S)$ for each invocation
  $\TreeChildSequence(\T, S, k)$ by induction on $|S|$.

  If $|S| = 0$, then $S = \langle\rangle$.
  In this case, $\T/S = \T$, so $n'(\T, S) = |X|$, that is,
  $k'(\T, S) = |S| - |X| + n'(\T, S) = |S| - |X| + |X| = 0$.

  If $|S| > 0$, then $\TreeChildSequence(\T, S, k)$ is called by another
  invocation $\TreeChildSequence(\T, S', k)$ with $|S'| < |S|$.
  By the induction hypothesis, we have $k'(\T, S') \ge 0$.
  Let $S''$ be a snapshot of $S'$ in line~\ref{lin:FailureConditions} of
  the invocation $\TreeChildSequence(\T, S', k)$.
  Then $S = S'' \circ \langle (x, y) \rangle$, where $\{x, y\}$ is a
  non-trivial cherry of $\T/S''$.
  Thus, by Lemma~\ref{lem:non-trivial-cherries-add-weight}, $k'(\T, S) =
  k'(\T, S'') + 1$.
  Since $k'(\T, S'') = k'(\T, S')$, this implies that
  $k'(\T, S) > k'(\T, S') \ge 0$.
  By the second condition in line~\ref{lin:FailureConditions}, we have
  $k'(\T, S'') < k$ (because $\TreeChildSequence(\T, S', k)$ makes the
  recursive call $\TreeChildSequence(\T, S, k)$), so $k'(\T, S) =
  k'(\T, S'') + 1 \le k$.
\end{proof}

The following proposition now establishes the running time bound stated in
Theorem~\ref{thm:main}.

\begin{prop}
  \label{prop:running-time}
  The total running time of the invocation $\TreeChildSequence(\T,
  \langle\rangle, k)$ and all its descendant invocations is $O((8k)^k nt \lg t
  + nt \lg nt)$, where $n = |X|$ and $t = |\T|$.
\end{prop}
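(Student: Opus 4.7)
The proof plan is a direct \emph{recursion-tree size $\times$ per-invocation work $+$ setup} accounting. For the tree size, I would combine Lemma~\ref{lem:non-trivial-cherries-add-weight} with Lemma~\ref{lem:invariant-parameter}: every recursive call issued in line~\ref{lin:RecursiveCall} appends a non-trivial cherry pair to $S$ and therefore raises $k'(\T,S)$ by exactly one; since $k'(\T,S)$ lies in $[0,k]$ and equals $0$ only for the top-level call, the recursion depth is at most $k$. The guard $|C| \leq 8k$ in line~\ref{lin:FailureConditions} caps the fanout of each internal node at $8k$, so the recursion tree has at most $\sum_{i=0}^{k}(8k)^i = O((8k)^k)$ nodes.

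For the work done by a single invocation, I would carry throughout the execution a representation of $\T/S$ that stores, for each leaf, the list of trees containing it; the multiset of current cherries indexed by pair, with an occurrence count per tree; the subset of trivial cherries; and the set of forbidden leaves. Applying a single pair $(x,y)$ touches only those trees currently holding $\{x,y\}$ as a cherry; each such tree is updated, and $O(1)$ cherry counts in the multiset are adjusted in $O(\lg t)$ time apiece using balanced search trees, for a total of $O(t \lg t)$ per pair. Since the trivial-cherry loop in a single invocation applies at most $n$ pairs (each deletes one leaf, and there are at most $n$ leaves), it contributes $O(nt \lg t)$; the remaining per-invocation work (reading $n'$ and $k'$, scanning for a forbidden--forbidden cherry in line~\ref{lin:FailForbiddenCherry}, and iterating over $C$ in lines~\ref{lin:RecursiveSolutionStart}--\ref{lin:RecursiveSolutionEnd}) is absorbed by $O(nt)$. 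Multiplying by the $O((8k)^k)$ invocations and adding an $O(nt \lg nt)$ preprocessing term for reading the input and sorting leaf labels globally yields the claimed bound.

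The main technical obstacle is verifying the $O(\lg t)$-per-tree update of the cherry multiset and trivial-cherry list. In particular, when $(x,y)$ is applied, $y$'s old sibling becomes its new sibling in each affected tree, which can both destroy old cherries and create new ones; and the \emph{trivial} status of a cherry $\{u,v\}$ flips precisely when the number of trees containing it as a cherry reaches the number of trees containing both of its leaves, a count that must be maintained alongside the multiset. I would describe these data structures only in enough detail to justify the bound and defer lower-level implementation to Section~\ref{sec:experiments}.
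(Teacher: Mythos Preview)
Your proposal is correct and mirrors the paper's argument almost exactly: the paper bounds the recursion tree by the same combination of Lemmas~\ref{lem:non-trivial-cherries-add-weight} and~\ref{lem:invariant-parameter} (depth $\le k$, fanout $\le 8k$, hence $\sum_{k'=0}^k (8k)^{k'}=O((8k)^k)$ nodes), and bounds the per-invocation work by the same $O(nt\lg t)$ estimate coming from at most $n$ trivial-cherry reductions each costing $O(t\lg t)$, plus $O(nt\lg nt)$ one-time setup. The only detail the paper makes explicit that you leave implicit is that the state is \emph{copied} for each recursive call at cost $O(nt)$ and restored on return, but this is dominated by your $O(nt\lg t)$ per-invocation bound and hence immaterial.
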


\begin{proof}
  We only provide a sketch of the argument that the algorithm's state can be
  initialized in $O(nt \lg nt)$ time and that each invocation of
  procedure~\procref{alg:tree-child-sequence}, excluding the recursive calls it
  makes, has cost $O(nt \lg t)$.
  A careful proof is straightforward but tedious.
  To prove the proposition, it then suffices to prove that the algorithm makes
  $O((8k)^k)$ recursive calls.

  Instead of computing $\T'$ from scratch as in the pseudo-code of
  procedure~$\TreeChildSequence$, we first construct the state of the top-level
  invocation $\TreeChildSequence(\T, \langle\rangle, k)$ consisting of $\T'$
  and the lists of trivial and non-trivial cherries.
  Whenever an invocation makes a recursive call, it makes a copy of its state
  to be modified by the recursive call.

  Identifying the cherries in $\T' = \T$ for the top-level invocation
  $\TreeChildSequence(\T, \langle\rangle, k)$ takes $O(nt \lg nt)$ time using
  appropriate dictionaries (e.g., balanced binary search trees) to identify
  leaves with the same labels in different trees and to collect all occurrences
  of the same cherry in different trees.

  Copying the state of the current invocation for each recursive call the
  algorithm makes takes $O(nt)$ time because the state is easily seen to have
  size $O(nt)$.
  We charge this cost to the recursive call.
  Each pair added to $S$ eliminates the corresponding cherry from up to
  $t$ trees and thereby creates up to $t$ new cherries.
  Updating $\T'$ and the lists of trivial and non-trivial cherries for each
  such cherry takes $O(\lg t)$ time, $O(t \lg t)$ time in total for each
  pair added to $S$.
  Each invocation adds at most $n$ pairs corresponding to trivial cherries
  to $S$, in line~\ref{lin:LoopTrivialCherryReductionEnd}.
  Each pair $(x, y)$ added to $S$ in line~\ref{lin:RecursiveCall} can be
  charged to the recursive call $\TreeChildSequence(\T, S \circ \langle (x, y)
  \rangle, k)$ made in line~\ref{lin:RecursiveCall}.
  Thus, each invocation adds at most one pair corresponding to a non-trivial
  cherry to $S$.
  The cost of updating $\T'$ and the list of trivial and non-trivial cherries
  in each invocation is thus $O(nt \lg t)$.
  Adding the cost of making a copy of the parent invocation's state at the
  beginning of each invocation, the cost per invocation is thus $O(nt \lg t)$.
  To obtain the time bound stated in the proposition, it remains to bound the
  number of recursive calls the algorithm makes by $O((8k)^k)$.

  Let $m_{k'}$ be the number of invocations $\TreeChildSequence(\T, S, k)$
  with $k'(\T, S) = k'$.
  By Lemma~\ref{lem:invariant-parameter}, every invocation
  $\TreeChildSequence(\T, S, k)$ the algorithm makes satisfies $0 \le k'(\T, S)
  \le k$ and the total number of invocations is therefore $\sum_{k'=0}^k
  m_{k'}$.
  Also by Lemma~\ref{lem:invariant-parameter}, there is exactly one invocation
  $\TreeChildSequence(\T, S, k)$ with $k'(\T, S) = 0$, namely the top-level
  invocation $\TreeChildSequence(\T, \langle\rangle, k)$.
  Finally, by Lemma~\ref{lem:non-trivial-cherries-add-weight}, every child
  invocation $\TreeChildSequence(\T, S_2, k)$ of an invocation
  $\TreeChildSequence(\T, S_1, k)$ satisfies $k'(\T, S_2, k) = k'(\T, S_1, k) +
  1$.
  Thus, since each invocation makes at most $8k$ recursive calls in
  line~\ref{lin:RecursiveCall}, we obtain $m_{k'+1} \le 8k \cdot m_{k'}$.
  A simple inductive argument now shows that $m_{k'} \le (8k)^{k'}$ for all
  $0 \le k' \le k$.
  Thus, the total number of recursive calls the algorithm makes
  is at most $\sum_{k'=0}^k (8k)^{k'} = \frac{(8k)^{k+1} - 1}{8k - 1} =
  O((8k)^k)$.
\end{proof}

To establish the correctness of procedure~\procref{alg:tree-child-sequence}, we
need a few simple auxiliary lemmas.

\begin{lem}
  \label{lem:k-prime-is-lower-bound}
  Let $S$ be a partial cherry picking sequence $S$ without any pairs of the
  form $(x, -)$.
  Any solution of $(\T, S)$ has weight at least $k'(\T, S)$.
\end{lem}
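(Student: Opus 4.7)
The plan is to unfold the definitions of both sides and reduce the lemma to a clean combinatorial statement about $|S'|$. Writing $w(S \circ S') = |S| + |S'| - |X|$ and $k'(\T, S) = |S| - |X| + n'(\T, S)$, the inequality $w(S \circ S') \geq k'(\T, S)$ is equivalent to $|S'| \geq n'(\T, S)$, so the whole lemma reduces to proving this single counting bound on the length of $S'$.

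To bound $|S'|$, let $X' = \{x \in X : x \text{ is a leaf of some tree in } \T/S\}$, so that $|X'| = n'(\T, S)$ by definition. I would show that every $x \in X'$ is the first coordinate of at least one pair in $S'$. Because no pair can have two distinct first coordinates, choosing for each $x \in X'$ one such pair yields an injection from $X'$ into the set of pairs of $S'$, and therefore $|S'| \geq |X'| = n'(\T, S)$, as desired.

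The per-taxon claim itself is a short case analysis and is the only step that carries any content. Fix $x \in X'$ together with a tree $T \in \T$ in which $x$ remains as a leaf of $T/S$. Because the solution $S \circ S'$ is a (full) cherry picking sequence for $\T$, the reduction $T/(S \circ S')$ is a single leaf $x^*$ and the pair $(x^*, -)$ occurs in $S \circ S'$. If $x = x^*$, then $(x, -)$ lies in $S \circ S'$; by the hypothesis that $S$ contains no pair of the form $(z, -)$, this pair must actually lie in $S'$. If $x \neq x^*$, then $x$ must be deleted somewhere along the reduction of $T$ by $S \circ S'$, and since $x$ is still present in $T/S$, the deletion has to come from a pair in $S'$, which can only be a pair of the form $(x, y)$. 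In either case $S'$ contains a pair with first coordinate $x$, which completes the argument. The case split for $x = x^*$ versus $x \ne x^*$ is the only piece requiring care; the outer reformulation and the injection are both immediate.
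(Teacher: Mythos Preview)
Your proof is correct and follows essentially the same approach as the paper's. The paper partitions the surviving leaves of $\T/S$ into $X_1$ (those that remain as the single leaf of some tree in $\T/(S\circ S')$, hence appear as $(x,-)$ in $S'$) and $X_2$ (those removed along the way, hence appear as $(x,y)$ in $S'$), and concludes $|S'|\ge |X_1|+|X_2|=n'(\T,S)$; you perform the identical case split on a per-leaf basis instead of partitioning upfront, which is a purely stylistic difference.
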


\begin{proof}
  Consider any cherry picking sequence $S \circ S'$ for $\T$.
  Let $X_1$ be the set of leaf labels of the trees in $\T / (S \circ S')$, and
  let $X_2$ be the subset of leaf labels of the trees in $\T / S$ that are not
  in $X_1$.
  Then $n'(\T, S) = |X_1| + |X_2|$.

  Every leaf $x \in X_2$ must be removed from the trees in $\T/S$ by at least
  one pair $(x, y) \in S'$.
  For every leaf $x \in X_1$, $S'$ must contain a pair $(x, -)$.
  Thus, $|S'| \ge |X_1| + |X_2| = n'(\T, S)$.
  Therefore, $|S \circ S'| - |X| = |S| + |S'| - |X| \ge
  |S| - |X| + n'(\T, S) = k'(\T, S)$.
\end{proof}

\begin{lem}
  \label{lem:choose-from-C}
  Let $\T$ be a collection of $X$-trees, and $S$ a partial tree-child cherry
  picking sequence such that at least one tree in $\T/S$ has more than one
  leaf.
  Then any optimal solution of $(\T, S)$ is an extension of some sequence $S
  \circ \langle (x, y) \rangle$, where $\{x, y\}$ is a cherry of\/ $\T/S$.
\end{lem}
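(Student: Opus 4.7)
My plan is to show that in any optimal (shortest) solution $S \circ S'$ of $(\T, S)$, the first pair of $S'$ must actually correspond to a cherry of $\T/S$, by deriving a contradiction from the assumption that it does not.

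First I would argue that $S'$ is nonempty and its first pair must be of the form $(x,y)$ with $y \in X$ (rather than $(x,-)$). Since at least one tree in $\T/S$ has more than one leaf, any valid cherry picking sequence extending $S$ must contain at least one pair that actually modifies a tree. In a tree-child cherry picking sequence the definition allows at most one pair of the form $(x,-)$, and such a pair appears as the final pair and does not modify any tree. Hence the first pair of $S'$ has the form $(x,y)$ with $y \in X$.

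Next, let $(x,y)$ be this first pair and suppose for contradiction that $\{x,y\}$ is not a cherry of any tree in $\T/S$. By definition of how a sequence is applied to a tree, the pair $(x,y)$ then leaves every tree of $\T/S$ unchanged, so $\T/(S \circ \langle (x,y) \rangle) = \T/S$. Form $S \circ S''$ by deleting this single pair from $S \circ S'$. Then $|S \circ S''| = |S \circ S'| - 1$, so it suffices to show that $S \circ S''$ is still a tree-child cherry picking sequence for $\T$; this would contradict optimality of $S \circ S'$.

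The verification splits into two parts. For the tree-child property, recall that this requires $y_j \neq x_i$ for all $i < j$; deleting a pair only removes constraints of this form, so the property is inherited from $S \circ S'$. For validity as a cherry picking sequence for $\T$, I would observe that since the removed pair had no effect on the set of trees it was applied to, the sequence of tree sets produced by the remaining pairs is identical to that produced in $S \circ S'$ from the corresponding position onwards; in particular, $\T/(S \circ S'')$ equals $\T/(S \circ S')$ and each tree is reduced to the single leaf specified by the final $(x_s,-)$ pair.

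The main point requiring care is the validity argument: one must confirm that no later pair depends on the deleted one to become applicable. This follows from the simple observation that deleting an inert pair cannot alter the intermediate tree sets, but writing this cleanly requires indexing the pairs carefully. The tree-child preservation is then essentially immediate. Having obtained the contradiction, we conclude that $\{x,y\}$ is indeed a cherry of $\T/S$, as required.
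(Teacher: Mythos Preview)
Your proposal is correct and follows essentially the same argument as the paper: assume the first pair $(x,y)$ of $S'$ is not a cherry of $\T/S$, delete it to obtain a strictly shorter sequence $S \circ S''$, and observe that $S \circ S''$ is still a tree-child cherry picking sequence for $\T$ because the deleted pair was inert and deleting a pair can only relax the tree-child constraints. The paper handles the tree-child preservation and the ``first pair is not of the form $(x,-)$'' step more tersely, but the logic is identical.
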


\begin{proof}
  Consider any optimal solution $S \circ S'$ of $(\T, S)$.
  Since there exists a tree $T \in \T$ such that $T/S$ has at least two leaves,
  the first pair in $S'$ is a pair $(x, y)$ with $x, y \in X$.
  Let $S' = \langle (x, y) \rangle \circ S''$ and assume for the sake of
  contradiction that $\{x, y\}$ is not a cherry of any tree in $\T/S$.
  Then $S \circ S'' \subset S \circ S'$, so $S \circ S''$ is a tree-child
  cherry picking sequence and $|S \circ S''| < |S \circ S'|$.
  Since $\{x, y\}$ is not a cherry of any tree in $\T/S$, we have
  $T/(S \circ \langle (x, y) \rangle) = \T/S$ for all $T \in \T$.
  Thus, $T / (S \circ S'') = T / (S \circ \langle (x, y) \rangle \circ S'') =
  T / (S \circ S')$ for all $T \in \T$.
  Since $S \circ S'$ is a cherry picking sequence for~$\T$, this shows that
  $S \circ S''$ is a cherry picking sequence for $\T$, a contradiction.
\end{proof}

The following proposition now finishes the proof of Theorem~\ref{thm:main} by
proving that the invocation $\TreeChildSequence(\T, \langle\rangle, k)$ returns
a shortest tree-child cherry picking sequence for $\T$ if and only if $\T$
has a tree-child cherry picking sequence of weight at most $k$.

\begin{prop}
  \label{prop:algCorrectness}
  Given a set $\T$ of $X$-trees, a partial tree-child cherry picking sequence
  $S$, and an integer $k$, $\TreeChildSequence(\T, S, k)$ returns an optimal
  solution of $(\T, S)$ if and only if $(\T, S)$ has a solution of weight at
  most $k$.
  Otherwise, it returns {\None}.
\end{prop}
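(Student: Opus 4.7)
The plan is to induct on the total number of invocations in the recursion tree rooted at $\TreeChildSequence(\T, S, k)$, which is finite by Proposition~\ref{prop:running-time}. First, repeated application of Proposition~\ref{prop:move-up-common-cherries} shows that the trivial-cherry loop is safe: writing $\bar S$ for the value of $S$ after the loop, every solution of $(\T, S)$ of weight at most $k$ gives rise, one pruning at a time, to a solution of $(\T, \bar S)$ of no greater weight, and conversely any solution of $(\T, \bar S)$ is a solution of $(\T, S)$ of the same length. So it suffices to prove the claim with $S$ replaced by $\bar S$.

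The case analysis then walks through each return point. If $\T/\bar S$ contains a cherry $\{x,y\}$ with both elements forbidden, Proposition~\ref{prop:basic-sequence-properties}\ref{item:xyCherry} shows any tree-child extension of $\bar S$ would have to contain $(x,y)$ or $(y,x)$, both ruled out by the tree-child condition, so no solution exists and returning $\None$ in line~\ref{lin:FailForbiddenCherry} is correct. Otherwise no trivial cherry of $\T/\bar S$ survives, making Proposition~\ref{prop:few-non-trivial-cherries} applicable: if $|C|>8k$, then $\T/\bar S$ has more than $4k$ unique cherries, so no solution of weight at most $k$ exists. If $k'(\T, \bar S) \ge k$ and $|C|>0$, then by Lemma~\ref{lem:choose-from-C} every solution must extend $\bar S$ by a cherry pair from $C$, which by Lemma~\ref{lem:non-trivial-cherries-add-weight} raises $k'$ to $k+1$, and by Lemma~\ref{lem:k-prime-is-lower-bound} the resulting weight strictly exceeds $k$. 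If $|C|=0$, every tree of $\T/\bar S$ is a single leaf; I will argue these single leaves must all coincide with a common non-forbidden $x$ whenever $(\T,\bar S)$ is solvable, so that $\bar S \circ \langle(x,-)\rangle$ is a valid tree-child cherry picking sequence for $\T$ of weight $k'(\T, \bar S) \le k$ (Lemma~\ref{lem:invariant-parameter}), which is optimal by Lemma~\ref{lem:k-prime-is-lower-bound}.

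Finally, for the branching case, Lemma~\ref{lem:choose-from-C} states that any optimal solution of $(\T, \bar S)$ has the form $\bar S \circ \langle(x,y)\rangle \circ S''$ with $\{x,y\}$ a cherry of $\T/\bar S$, and the tree-child condition forces $y$ to be non-forbidden, so $(x,y)$ is among the pairs iterated over in line~\ref{lin:RecursiveCall}. Each such recursive call has strictly fewer descendant invocations and so, by the inductive hypothesis, returns an optimal solution of its own subproblem if one of weight at most $k$ exists and $\None$ otherwise; taking the minimum over these returns yields an optimal solution of $(\T, \bar S)$ precisely when one exists. The main obstacle will be the $|C|=0$ subcase: confirming that the single surviving leaves across all trees must coincide in one common non-forbidden taxon, which I expect to follow from the tree-child structure of $\bar S$ together with the observation that differing or forbidden single leaves would block any tree-child extension of $\bar S$, thereby forcing one of the earlier termination tests to fire first.
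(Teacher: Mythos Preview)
Your proposal is correct and follows essentially the same approach as the paper: induction on the number of descendant invocations (finite by Proposition~\ref{prop:running-time}), reduction from $S$ to $\bar S$ via Proposition~\ref{prop:move-up-common-cherries}, and the same case analysis invoking the same auxiliary results at each return point.

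One small correction in the $|C|=0$ case: your suggestion that differing or forbidden single leaves would ``force one of the earlier termination tests to fire first'' is not right---none of those tests detects differing single leaves, and if the leaves did differ the algorithm would return an invalid sequence in line~\ref{lin:ModSolutionFound}. The paper instead observes (tersely) that because $\bar S$ is tree-child, the remaining leaves coincide \emph{unconditionally}: any non-forbidden leaf (for instance the last $y_r$ in $\bar S$, or any element of $X$ if $\bar S=\langle\rangle$) is present in every tree of $\T/\bar S$ by Proposition~\ref{prop:basic-sequence-properties}\ref{item:yNotForbidden}, and since each tree has exactly one leaf, that must be the common leaf $x$. You already point to the tree-child structure of $\bar S$ as the key, so this is only a matter of tightening the framing, not of supplying a missing idea.
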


\begin{proof}
  Consider the tree of recursive calls made by the algorithm and let
  $|\TreeChildSequence(\T, S, k)|$ be the number of descendant invocations of
  the invocation $\TreeChildSequence(\T, S, k)$, including the invocation
  $\TreeChildSequence(\T, S, k)$ itself.
  By Proposition~\ref{prop:running-time}, $|\TreeChildSequence(\T, S, k)|$ is
  finite.  Thus, we can use induction on $|\TreeChildSequence(\T, S, k)|$ to
  prove the proposition.

  If $|\TreeChildSequence(\T, S, k)| = 1$, then $\TreeChildSequence(\T, S, k)$
  makes no recursive calls.
  Thus, it returns a sequence in line~\ref{lin:SolutionFound} or $\None$ in
  line~\ref{lin:FailForbiddenCherry} or~\ref{lin:FailTooManyCherries}.
  \markj{(Note that $\TreeChildSequence(\T, S, k)$ cannot reach line~\ref{lin:RecursiveSolutionEnd} without making a recursive call, as this is only possible if $|C| = 0$ or every cherry $\{x,y\}$ of some tree in $\T'$ has $x,y$ both forbidden, and these cases are covered by lines~\ref{lin:SolutionFound} and ~\ref{lin:FailForbiddenCherry} respectively.)}
  By Proposition~\ref{prop:move-up-common-cherries}, if $S_1$ is a snapshot
  of $S$ at the start of the invocation $\TreeChildSequence(\T, S, k)$ and
  $S_2$ is a snapshot of $S$ in line~\ref{lin:PruneCherries}, then $(\T, S_1)$
  has a solution of weight at most $k$ if and only if $(\T, S_2)$ has
  a solution of weight at most $k$, and any optimal solution of $(\T, S_2)$
  also is an optimal solution of $(\T, S_1)$.

  If $\TreeChildSequence(\T, S, k)$ returns {\None} in
  line~\ref{lin:FailForbiddenCherry}, then $\T / S_2$ has a cherry $\{x, y\}$
  with both $x$ and $y$ forbidden with respect to $S_2$.
  Any solution $S_2 \circ S'$ of $(\T, S_2)$ must include the pair $(x, y)$ or
  $(y, x)$ in $S'$ because otherwise the tree in $\reduce{\T}{S_2}$ that has
  $\{x, y\}$ as a cherry is not reduced to a single leaf by $S'$.
  Since both $x$ and $y$ are forbidden with respect to $S_2$, there is no such
  extension $S_2 \circ S'$ of $S_2$ that is tree-child.
  Thus, $(\T, S_2)$ has no solution, and neither does $(\T, S_1)$.
  It is therefore correct to return {\None}.

  If $\TreeChildSequence(\T, S, k)$ returns $S_2 \circ \langle (x, -) \rangle$
  in line~\ref{lin:SolutionFound}, then observe that $S_2$ is a partial
  tree-child cherry picking sequence.
  Indeed, by the assumption of the proposition, $S_1$ is a partial tree-child
  cherry picking sequence.
  For every pair $(x, y)$ added to $S$ in
  line~\ref{lin:LoopTrivialCherryReductionEnd}, $y$ is not forbidden with
  respect to $S$, so $S \circ \langle (x, y) \rangle$ is also tree-child.
  By applying this argument inductively, we conclude that $S_2$ is tree-child.

  Since $\TreeChildSequence(\T, S, k)$ returns $S_2 \circ \langle (x, -)
  \rangle$ in line~\ref{lin:SolutionFound} only if $|C| = 0$, $S_2$ reduces
  each tree in $\T$ to a single leaf.
  Since $S_2$ is tree-child, this is the same leaf $x$ for every tree $T \in
  \T$.
  Thus, $S_2 \circ \langle (x, -) \rangle$ is a solution of $(\T, S_2)$.
  Since every solution $S_2 \circ S'$ of $(\T, S_2)$ must include at
  least one pair $(z, -)$ in $S'$, $S_2 \circ \langle (x, -) \rangle$ is an
  optimal solution of $(\T, S_2)$ and, therefore, also of $(\T, S_1)$.
  Finally, by Lemma~\ref{lem:invariant-parameter}, $|S_2| - |X| + n'(\T, S_2) =
  k'(\T, S_2) \le k$; $n'(\T, S_2) = 1$ because, as just observed, each tree in
  $\T / S_2$ has $x$ as its only leaf.
  Thus, $|S_2| - |X| < k$ and $|S_2 \circ \langle (x, -) \rangle| - |X| \le k$,
  that is, $(\T, S_2)$ and $(\T, S_1)$ both have solutions of weight at most
  $k$ and returning $S_2 \circ \langle (x, -) \rangle$ is correct.

  Finally, if $\TreeChildSequence(\T, S, k)$ returns {\None} in
  line~\ref{lin:FailDugTooDeep}, then $|C| > 8k$ or $C \ne \emptyset$ and
  $k'(\T, S_2, k) \ge k$.

  If $|C| > 8k$, then $\T / S_2$ has more than $4k$ unique cherries.
  Since $\T / S_2$ has no trivial cherries,
  Proposition~\ref{prop:few-non-trivial-cherries} shows that $(\T, S_2)$ has
  no solution \markj{of weight at most $k$}, and neither does $(\T, S_1)$.
  Thus, returning {\None} is correct.

  If $C \ne \emptyset$ and $k'(\T, S_2) \ge k$, then observe that $\{x, y\}$
  is a non-trivial cherry of $\T / S_2$ for every pair $(x, y) \in C$.
  Lemma~\ref{lem:non-trivial-cherries-add-weight} shows that $k'(\T, S_2 \circ
  \langle (x, y) \rangle) = k'(\T, S_2) + 1 > k$ for all $(x, y) \in C$.
  By Lemma~\ref{lem:k-prime-is-lower-bound}, this shows that
  $(\T, S_2 \circ \langle (x, y) \rangle)$ has no solution of weight at most
  $k$ for any $(x, y) \in C$.
  By Lemma~\ref{lem:choose-from-C}, this implies that $(\T, S_2)$ has no
  solution of weight at most $k$, and neither does $(\T, S_1)$.
  Thus, returning {\None} is correct.
  This finishes the proof that every invocation $\TreeChildSequence(\T, S, k)$
  that makes no recursive calls gives a correct answer.
  
  Next consider an invocation $\TreeChildSequence(\T, S, k)$ that does make
  recursive calls.
  Then $C \ne \emptyset$.
  By Lemma~\ref{lem:choose-from-C}, $(\T, S_2)$ (and thus $(\T, S_1)$) has
  a solution of weight at most $k$ if and only if there exists a pair $(x, y)
  \in C$ such that $(\T, S_2 \circ \langle (x, y) \rangle)$ has a solution
  of weight at most $k$.
  Moreover, if such a pair exists, then one such pair has the property
  that any optimal solution of $(\T, S_2 \circ \langle (x, y) \rangle)$ also is
  an optimal solution of $(\T, S_2)$ and, thus, of $(\T, S_1)$.

  If there exists a pair $(x, y)$ such that $(\T, S_2 \circ \langle (x, y)
  \rangle)$ has a solution of weight at most $k$, then choose $(x, y)$ so
  that any optimal solution of $(\T, S_2 \circ \langle (x, y) \rangle)$ also is
  an optimal solution of $(\T, S_1)$.
  By the induction hypothesis, the invocation $\TreeChildSequence(\T, S_2 \circ
  \langle (x, y) \rangle, k)$ in line~\ref{lin:RecursiveCall} returns an
  optimal solution $S'$ of $(\T, S_2 \circ \langle (x, y) \rangle)$.
  The solution $S_{\textrm{opt}}$ of $(\T, S_1)$ returned in
  line~\ref{lin:RecursiveSolutionEnd} is no longer than $S'$.
  Since $S_{\textrm{opt}}$ is a solution of some instance $(\T, S_2 \circ
  \langle (x', y') \rangle)$ with $(x', y') \in C$, it is a solution of
  $(\T, S_2)$ and is thus an optimal solution of $(\T, S_2)$ and $(\T, S_1)$.
  Thus, the algorithm produces the correct answer.

  If there is no pair $(x, y) \in C$ such that $(\T, S_2 \circ \langle (x, y)
  \rangle)$ has a solution of weight at most $k$, then all recursive calls made
  in line~\ref{lin:RecursiveCall} of the invocation $\TreeChildSequence(\T, S,
  k)$ return \textsc{None}.
  Thus, $\TreeChildSequence(\T, S, k)$ also returns \textsc{None}.
  Since Lemma~\ref{lem:choose-from-C} shows that $(\T, S_1)$ has no solution
  of weight at most $k$ in this case, this is correct.
\end{proof}

\section{Redundant Branch Elimination: A Heuristic Improvement}

\label{sec:redundant-branch-elimination}

In this section, we discuss a method used in our implementation of procedure
\procref{alg:tree-child-sequence} to improve its running time.
We prove that it preserves the correctness of the algorithm, but we do not
know whether it provably improves the algorithm's running time.
In this sense, it is a heuristic.

The intuition behind rendundant branch elimination is the following: Suppose
that $\reduce{\T}{\langle(x,y), (z,w)\rangle}$ and $\reduce{\T}{\langle(z,w),
  (x,y)\rangle}$ result in the same set of trees.
(This can easily happen, for example, if $x,y,z,w$ are all distinct.)
Then the branch of the algorithm that starts by applying the sequence $\langle
(x, y), (z, w) \rangle$ finds a solution if and only if the branch that
starts by applying the sequence $\langle (z, w), (x, y) \rangle$ does.
So the algorithm does not need to explore this second branch; it is redundant,
and redundant branch elimination ensures that the algorithm does not make
this recursive call.

Procedure~\procref{alg:tree-child-sequence-refined} below is a modified version
of procedure~\procref{alg:tree-child-sequence} that uses redundant branch
elimination.
The only difference between procedures~\ref{alg:tree-child-sequence}
and~\ref{alg:tree-child-sequence-refined} is
that~\ref{alg:tree-child-sequence-refined} maintains a set $R$ of redundant
pairs (with $R$ set to $\emptyset$ in the top-level invocation
$\TreeChildSequencePruned(\T, \langle\rangle, k, \emptyset)$) and ignores
extensions $S \circ \langle (x, y) \rangle$ of the current sequence $S$ such
that $(x, y) \in R$.
If $\{x, y\}$ is a trivial cherry, this means that the invocation
$\TreeChildSequencePruned(\T, S, k, R)$ returns {\None}.
If $\{x, y\}$ is a non-trivial cherry, then
$\TreeChildSequencePruned(\T, S, k, R)$ does not make the recursive call
$\TreeChildSequencePruned(\T, S \circ \langle (x, y) \rangle, k, R)$.
Note that $R$ does not contain \emph{all} redundant pairs for $S$, only
a subset for which we prove below that they can safely be ignored based on the
recursive calls the algorithm has made so far.

Procedure~\procref{alg:tree-child-sequence-refined} calls a
procedure~\procref{alg:update-R} in lines~\ref{lin:ModTrivialCherryUpdateR}
and~\ref{lin:ModRecursiveCallUpdateR}.
Given a partial tree-chlid cherry picking sequence $S$, a set of pairs $R$ that
are redundant for $S$, and a pair $(x, y)$, $\UpdateR(\T, S, (x, y), R)$
returns the subset $R' \subseteq R$ containing all pairs that are redundant
also for $S \circ \langle (x, y) \rangle$.

The following definition formalizes the concept of a redundant pair.

\begin{procedure}[p]
  \caption{TCS2()$(\T, S, k, R)$\label{alg:tree-child-sequence-refined}}
  \KwIn{A collection of phylogenetic trees $\T$, a partial tree-child cherry
    picking sequence $S$, an integer $k$, and a set $R$ of redundant pairs for
    $S$}
  \KwOut{An optimal solution of $(\T, S)$ if $(\T, S)$ has a solution of
    weight at most $k$ and there do not exist a proper prefix $S_p \subset S$
    and a pair $(x, y) \in R$ such that $S_p \circ \langle (x, y) \rangle$
    dominates some optimal solution of $(\T, S)$.
    {\None} if $(\T, S)$ has no solution of weight at most $k$.
    In any other case, the output may be {\None} or a (possibly suboptimal)
    solution of $(\T, S)$.}
  \While{there exists a trivial cherry $\{x, y\}$ of $\T/S$ with $y$ not
    forbidden with respect to
    $S$\label{lin:ModLoopTrivialCherryReductionStart}}{ 
    \uIf{$(x,y) \notin R$}{
      $R \gets \UpdateR(\T, S, (x,y), R)$\;\label{lin:ModTrivialCherryUpdateR}
      $S \gets S \circ \langle (x,y) \rangle$\;\label{lin:ModTrivialCherryUpdateS}
    }\Else{
      Return $\None$\; \label{lin:ModFailRedundantTrivialCherry}
    }
  } \label{lin:ModLoopTrivialCherryReductionEnd}
  $\T' \gets \reduce{\T}{S}$\;\label{lin:ModTrivialCherriesDone}
  \uIf{$\T'$ contains a cherry $\{x, y\}$ with $x, y$ both forbidden with
    respect to $S$}{
    \Return{$\None$\;}\label{lin:ModFailForbiddenCherry}
  }\Else{
    $n' \gets |\{x \in X: x \text{ is a leaf of a tree in } \T'\}|$\;
    $k' \gets |S| - |X| + n'$\;  \label{lin:ModDefineCurrentWeight}
    $C \gets \{(x, y) \mid \text{$\{x, y\}$ is a cherry of some tree in
        $\T'$}\}$\;\label{lin:ModDefineCherrySet}
    \uIf{$|C| = 0$}{
      \Return $S \circ \langle(x, -)\rangle$, where $x$ is the last remaining
      leaf in all trees\;\label{lin:ModSolutionFound}
    }\uElseIf{$|C| > 8k$ or $k' \ge k$}{
      \Return{$\None$\;}\label{lin:ModFailTooManyCherries}\label{lin:ModFailDugTooDeep}
    }\Else{
      $S_{\textrm{opt}} \gets \None$\;
      $R' \gets R$\;\label{lin:ModInitRprime}
      \ForEach{$(x,y) \in C \setminus R$ with $y$ not forbidden with respect to
        $S$}{\label{lin:ModRecursiveSolutionStart}
        $R'' \gets \UpdateR(\T, S, (x,y), R')$\;\label{lin:ModRecursiveCallUpdateR}
        $S_{\textrm{temp}} \gets
        \TreeChildSequencePruned(\T, S \circ \langle(x, y)\rangle, k,
        R'')$\;\label{lin:ModRecursiveCall}
        \If{$w(S_{\textrm{temp}}) < w(S_{\textrm{opt}})$}{
          $S_{\textrm{opt}} \gets S_{\textrm{temp}}$\;
        }
        $R' \gets R' \cup \{(x, y)\}$\;
      }
      \Return{$S_{\textrm{opt}}$\;}\label{lin:ModRecursiveSolutionEnd}
    }
  }
\end{procedure}
\begin{procedure}[p]
  \caption{UpdateR()$(\T, S, (x,y), R)$\label{alg:update-R}}
  \KwIn{A collection of phylogenetic $X$-trees $\T$, a partial tree-child
    cherry picking sequence $S$, a pair $(x, y) \in X \times X$, a set $R$ of
    redundant pairs for $S$\;}
  \KwOut{A subset $R' \subseteq R$  of redundant pairs for $S \circ \langle{(x,
      y)}\rangle$\;}
  \Return $\{(x', y') \in R \mid x' \ne y \text{ and }
    \cc{x'}{y'}{\reduce{\T}{(S \circ \langle(x,y)\rangle)}} =
    \cc{x'}{y'}{\reduce{\T}{S}}\}$\;\label{lin:define-R-prime}
\end{procedure}

\begin{defn}\label{def:dominatingSequence}
  Let $\T$ be a set of $X$-trees, $S$ a tree-child sequence, and $(x, y) \in
  X \times X$.
  Let $\cc{x}{y}{\T/S}$ be the number of trees in $\T/S$ that have $\{x, y\}$
  as a cherry.
  An extension $S \circ S'$ of $S$ is \emph{dominated by} $S \circ
  \langle(x,y)\rangle$ if there exists an index $j > 1$ that satisfies the
  following conditions:
  \begin{itemize}[noitemsep]
    \item $(x,y)$ is the $j$th element of $S'$;
    \item $\cc{x}{y}{\reduce{\T}{S}} = \cc{x}{y}{\reduce{\T}{(S \circ
          S'_{1, j-1})}}$; and
    \item for all $(x', y') \in S_{1,j-1}'$, $y' \neq x$ and $\{x', y'\} \neq
      \{x,y\}$.
  \end{itemize}
  If a sequence $S \circ S' \circ \langle (x, y) \rangle$ is dominated by $S
  \circ \langle(x,y)\rangle$, we say that $(x,y)$ is a \emph{redundant pair}
  for $S \circ S'$.
\end{defn}

\begin{lem}
  \label{lem:DominatedSameCherryCount}
  If a sequence $S \circ S'$ is dominated by $S \circ \langle(x,y)\rangle$,
  $(x, y)$ is the $j$th pair in $S'$, and $(x, y) \notin S'_{1,j-1}$,
  then $\cc{x}{y}{\reduce{\T}{(S \circ S'_{1,i})}} =
  \cc{x}{y}{\reduce{\T}{(S \circ S'_{1,i-1})}}$ for all $i \in [j-1]$.
\end{lem}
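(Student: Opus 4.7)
The plan is to prove the equality stepwise by showing that the cherry count $\cc{x}{y}{\cdot}$ is monotonically non-decreasing across each of the first $j-1$ applications in $S'$, and then to invoke the dominance hypothesis $\cc{x}{y}{\reduce{\T}{S}} = \cc{x}{y}{\reduce{\T}{(S \circ S'_{1,j-1})}}$ together with a telescoping argument to force every intermediate inequality to be an equality.

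The heart of the proof is a per-tree, per-pair analysis. Fix any pair $(x',y') \in S'_{1,j-1}$; by the dominance conditions, $y' \neq x$ and $\{x',y'\} \neq \{x,y\}$. Consider a single tree $T' \in \reduce{\T}{(S \circ S'_{1,i-1})}$: applying $(x',y')$ either leaves $T'$ unchanged (if $\{x',y'\}$ is not a cherry of $T'$) or removes $x'$ and suppresses the parent of $y'$. I would split on how $\{x',y'\}$ meets $\{x,y\}$. When $\{x',y'\} \cap \{x,y\} = \emptyset$, the operation modifies no edges incident to $x$ or $y$, so the cherry status of $\{x,y\}$ is preserved. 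When $x' \in \{x,y\}$ (say $x' = x$), the dominance condition forces the sibling $y' \notin \{x,y\}$, so $\{x,y\}$ was not a cherry beforehand, and $x$ is deleted, so it is not a cherry afterwards either; the case $x' = y$ is symmetric. The only remaining case is $y' = y$ with $x' \notin \{x,y\}$: then $\{x,y\}$ cannot have been a cherry before the operation (since $y$'s sibling is $x' \neq x$), while afterwards $y$ is attached to its former grandparent, which can only create, never destroy, a $\{x,y\}$-cherry. In every allowed case, the per-tree indicator of $\{x,y\}$ being a cherry cannot decrease.

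Summing over all trees in $\reduce{\T}{(S \circ S'_{1,i-1})}$ yields $\cc{x}{y}{\reduce{\T}{(S \circ S'_{1,i-1})}} \leq \cc{x}{y}{\reduce{\T}{(S \circ S'_{1,i})}}$ for each $i \in [j-1]$. Chaining gives
\begin{equation*}
\cc{x}{y}{\reduce{\T}{S}} \leq \cc{x}{y}{\reduce{\T}{(S \circ S'_{1,1})}} \leq \cdots \leq \cc{x}{y}{\reduce{\T}{(S \circ S'_{1,j-1})}},
\end{equation*}
and since the two endpoints are equal by the dominance hypothesis, all intermediate inequalities must be equalities. This is exactly the conclusion of the lemma.

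The main obstacle is executing the tree-surgery case analysis cleanly: the operation both deletes $x'$ and suppresses $y'$'s parent, so one must be careful to keep track of which parents change and which do not, especially in the boundary case $y' = y$, where it is essential to verify both that $\{x,y\}$ cannot have been a cherry before and that the only structural way for a new cherry $\{x,y\}$ to appear is when $x$'s parent coincides with $y$'s grandparent. The hypothesis $(x,y) \notin S'_{1,j-1}$ is in fact implied by $\{x',y'\} \neq \{x,y\}$ in the dominance definition, so it is not separately needed.
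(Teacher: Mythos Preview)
Your proof is correct and follows essentially the same route as the paper's: establish that the cherry count $\cc{x}{y}{\cdot}$ is non-decreasing across each of the first $j-1$ steps of $S'$, then use the equality of the endpoints from the domination hypothesis to collapse the chain of inequalities. Your per-tree case analysis is more detailed than the paper's one-line justification (which simply notes that $\{x'_i,y'_i\}\neq\{x,y\}$ prevents the cherry $\{x,y\}$ from being destroyed), but the structure is identical. One small quibble: your closing remark that the hypothesis $(x,y)\notin S'_{1,j-1}$ is ``not separately needed'' is not quite right---without it, the $j$ in the lemma statement need not coincide with the witnessing index from Definition~\ref{def:dominatingSequence}, since $(x,y)$ could occur multiple times in $S'$; the hypothesis pins $j$ down as the first occurrence.
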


\begin{proof}
  Let~$\T' = \reduce{\T}{S}$ and let $(x_i', y_i')$ be the $i$th pair in $S'$,
  for some $i \in [j-1]$.
  Since $\{x'_i, y'_i\} \neq \{x, y\}$, the pair $(x_i', y_i')$ does not
  eliminate the cherry $\{x, y\}$ from any tree in $\T'/S_{1,i-1}'$ that
  contains this cherry, so $\cc{x}{y}{\reduce{\T'}{S'_{1,i}}} \geq
  \cc{x}{y}{\reduce{\T'}{S'_{1,i-1}}}$.
  Since $(x, y) \in S'_{1,j}$, Observation~\ref{obs:dominate-extension} shows
  that $\cc{x}{y}{\T'} = \cc{x}{y}{\T'/S_{1,j-1}'}$.
  Thus, if $\cc{x}{y}{\reduce{\T'}{S_{1,i}'}} >
  \cc{x}{y}{\reduce{\T'}{S_{1,i-1}'}}$, then there also
  exists an index $i' \in [j-1]$ such that $\cc{x}{y}{\reduce{\T'}{S_{1,i'}'}}
  < \cc{x}{y}{\reduce{\T'}{S_{1,i'-1}'}}$, a contradiction.
  This proves that $\cc{x}{y}{\reduce{\T'}{S_{1,i}'}} =
  \cc{x}{y}{\reduce{\T'}{S_{1,i-1}'}}$ for all $i \in [j-1]$.
\end{proof}

\markj{The next observation follows immediately from Definition~\ref{def:dominatingSequence} and Lemma~\ref{lem:DominatedSameCherryCount}.}

\begin{obs}
  \label{obs:dominate-extension}
  If a sequence $S \circ S'$ is dominated by $S \circ \langle (x,y) \rangle$,
  then so is any extension of $S \circ S'$ and any prefix $S \circ S''
  \subseteq S \circ S'$ such that $(x, y) \in S''$.
\end{obs}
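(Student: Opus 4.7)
The plan is to verify directly that the three defining conditions of Definition~\ref{def:dominatingSequence} transfer to both extensions of $S \circ S'$ and prefixes $S \circ S'' \subseteq S \circ S'$ that still contain the pair $(x,y)$. The key observation underlying everything is that the domination conditions depend only on $S$ and on the first $j$ elements of $S'$ (where $j$ is the witness index), so any modification of $S'$ outside of this prefix does not affect them.

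For extensions, suppose $S \circ S'$ is dominated by $S \circ \langle(x,y)\rangle$ via witness index $j$, and consider any extension $S \circ S' \circ S'''$. Then $(x,y)$ is still the $j$th element of $S' \circ S'''$, and $(S' \circ S''')_{1,j-1} = S'_{1,j-1}$. Consequently the cherry-count equality $\cc{x}{y}{\reduce{\T}{S}} = \cc{x}{y}{\reduce{\T}{(S \circ (S'\circ S''')_{1,j-1})}}$ and the ``avoid-$\{x,y\}$-and-$x$-on-the-right'' condition hold with the same $j$.

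For prefixes, let $S \circ S'' \subseteq S \circ S'$ with $(x,y) \in S''$. The third condition of the definition guarantees that $\{x',y'\} \neq \{x,y\}$ for every $(x',y') \in S'_{1,j-1}$, so $(x,y)$ does not occur in $S'_{1,j-1}$ and $j$ is the first index at which $(x,y)$ appears in $S'$. Since $S''$ is a prefix of $S'$ that contains $(x,y)$, we must have $|S''| \geq j$, and therefore $S''_{1,j} = S'_{1,j}$. Taking the same $j$ as the witness, the three conditions for domination of $S \circ S''$ by $S \circ \langle(x,y)\rangle$ now hold verbatim.

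I do not expect any real obstacle: the argument is essentially bookkeeping once one notices that the witness index $j$ and the prefix $S'_{1,j}$ are what matter. Lemma~\ref{lem:DominatedSameCherryCount} is not strictly required for the observation itself but can be cited to reinforce that the cherry count $\cc{x}{y}{\cdot}$ actually remains constant along every intermediate step $S \circ S'_{1,i}$ for $i \in [j-1]$, making the equality of cherry counts robust to restricting to or extending beyond these intermediate points.
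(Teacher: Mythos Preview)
Your proof is correct and proceeds exactly as one would expect: the witness index $j$ from Definition~\ref{def:dominatingSequence} depends only on $S$ and the first $j$ entries of $S'$, so it survives both extension and truncation to any prefix that still contains the $j$th entry. Your observation that condition~3 forces $j$ to be the \emph{first} occurrence of $(x,y)$ in $S'$ is exactly the point that makes the prefix case go through.

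The paper does not give a detailed argument; it simply asserts that the observation follows from Definition~\ref{def:dominatingSequence} and Lemma~\ref{lem:DominatedSameCherryCount}. Your direct verification from the definition is in fact cleaner, since the paper's presentation has a mild circularity (the proof of Lemma~\ref{lem:DominatedSameCherryCount} itself invokes Observation~\ref{obs:dominate-extension}). As you correctly note, Lemma~\ref{lem:DominatedSameCherryCount} is not actually needed here.
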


\begin{lem}
  \label{lem:transitive}
  Let $(x, y) \in X \times X$, and $S \circ S_1 \circ S_2 \circ S_3$ a cherry
  picking sequence.
  If $S \circ \langle (x, y) \rangle$ dominates $S \circ S_1 \circ S_2 \circ
  \langle (x, y) \rangle$ and $S \circ S_1 \circ \langle (x, y) \rangle$
  dominates $S \circ S_1 \circ S_2 \circ S_3 \circ \langle (x, y) \rangle
  \circ S'$, for some sequence $S'$, then
  $S \circ \langle (x, y) \rangle$ also dominates
  $S \circ S_1 \circ S_2 \circ S_3 \circ \langle (x, y) \rangle \circ S''$,
  for any sequence $S''$.
\end{lem}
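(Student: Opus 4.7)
The plan is to verify the three clauses of Definition~\ref{def:dominatingSequence} directly, using the two hypotheses together with Lemma~\ref{lem:DominatedSameCherryCount}. First I would unpack the hypotheses. The first domination tells us, with the pair $(x,y)$ sitting at position $|S_1|+|S_2|+1$ in $S_1 \circ S_2 \circ \langle (x,y) \rangle$, that every pair $(x',y') \in S_1 \circ S_2$ satisfies $y' \neq x$ and $\{x',y'\} \neq \{x,y\}$, and that $\cc{x}{y}{\reduce{\T}{S}} = \cc{x}{y}{\reduce{\T}{(S \circ S_1 \circ S_2)}}$. The second domination, viewed as a statement about the extension $S_2 \circ S_3 \circ \langle (x,y) \rangle \circ S'$ of $S \circ S_1$, yields the analogous forbidden-pair condition for all pairs in $S_2 \circ S_3$, together with $\cc{x}{y}{\reduce{\T}{(S \circ S_1)}} = \cc{x}{y}{\reduce{\T}{(S \circ S_1 \circ S_2 \circ S_3)}}$.

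Next I would check that $S \circ \langle (x,y) \rangle$ dominates $S \circ S_1 \circ S_2 \circ S_3 \circ \langle (x,y) \rangle \circ S''$ by exhibiting the occurrence of $(x,y)$ at position $j = |S_1|+|S_2|+|S_3|+1$ in the dominated extension. This $j$ exceeds $1$ because the first hypothesis already requires $S_1 \circ S_2$ to be nonempty. The forbidden-pair clause for $S_1 \circ S_2 \circ S_3$ is immediate: any pair there lies in $S_1 \circ S_2$ or in $S_2 \circ S_3$, and in both cases we have the required properties from the respective hypothesis.

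The only nontrivial clause is the cherry-count equality $\cc{x}{y}{\reduce{\T}{S}} = \cc{x}{y}{\reduce{\T}{(S \circ S_1 \circ S_2 \circ S_3)}}$. Here the two hypothesized equalities cannot be concatenated naively: the first ends at $S \circ S_1 \circ S_2$, the second starts at $S \circ S_1$, so they overlap on $S_2$ instead of meeting end-to-end. To bridge the gap I would apply Lemma~\ref{lem:DominatedSameCherryCount} to the first hypothesis. Since $(x,y) \notin S_1 \circ S_2$ (by the forbidden-pair clause extracted above), the lemma gives $\cc{x}{y}{\reduce{\T}{(S \circ (S_1 \circ S_2)_{1,i})}} = \cc{x}{y}{\reduce{\T}{(S \circ (S_1 \circ S_2)_{1,i-1})}}$ for every $i$, so the count is constant along every prefix of $S \circ S_1 \circ S_2$; in particular $\cc{x}{y}{\reduce{\T}{S}} = \cc{x}{y}{\reduce{\T}{(S \circ S_1)}}$. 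Chaining this with the second hypothesis completes the required equality.

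The main obstacle is precisely this mismatch between the two hypotheses: a direct substitution does not close the chain, and the key insight is that Lemma~\ref{lem:DominatedSameCherryCount} upgrades the first hypothesis from an equality between two endpoints to invariance along every prefix of the dominated segment. Once that invariance is in hand, the remaining verification is routine.
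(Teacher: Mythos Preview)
Your argument captures the heart of the proof—using Lemma~\ref{lem:DominatedSameCherryCount} to upgrade the endpoint equality to invariance along every prefix—and in the main case it coincides with the paper's. But there is a gap concerning the position of the witnessing index.

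When you unpack the first hypothesis ``with the pair $(x,y)$ sitting at position $|S_1|+|S_2|+1$'', you are assuming that the witness $j$ from Definition~\ref{def:dominatingSequence} is this last position. However, the third clause of the definition forces the witness to be the \emph{first} occurrence of $(x,y)$ in the dominated extension. So if $(x,y)$ already appears somewhere in $S_1 \circ S_2$, the witness is strictly earlier, and your extracted claim ``every pair $(x',y') \in S_1 \circ S_2$ satisfies $\{x',y'\} \neq \{x,y\}$'' is false. Your later line ``since $(x,y) \notin S_1 \circ S_2$ (by the forbidden-pair clause extracted above)'' is then circular: you derived that clause only under the tacit assumption that $(x,y) \notin S_1 \circ S_2$. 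The same issue arises with the second hypothesis when $(x,y) \in S_3$.

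The paper handles these as separate easy cases via Observation~\ref{obs:dominate-extension}: if $(x,y) \in S_1 \circ S_2$, then $S \circ \langle (x,y) \rangle$ already dominates a prefix of $S \circ S_1 \circ S_2$ containing $(x,y)$, hence every extension of it, and we are done; if $(x,y) \in S_3$ but not in $S_1 \circ S_2$, one passes to the longest prefix $S_3' \subset S_3$ avoiding $(x,y)$, applies the main argument with $S_3'$ in place of $S_3$, and then extends. Once you dispose of these side cases first—so that you may legitimately assume $(x,y) \notin S_1 \circ S_2 \circ S_3$—your argument goes through and matches the paper's main case verbatim.
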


\begin{proof}
  First assume that $|S_1| > 0$, $|S_3| > 0$, and $(x, y) \notin S_1
  \circ S_2 \circ S_3$.
  Then $(x, y)$ is the $j$th element of $S_1 \circ S_2 \circ S_3 \circ (x,y)
  \circ S''$, for $j = |S_1 \circ S_2 \circ S_3| + 1 > 1$. 
  Since $S \circ \langle (x, y) \rangle$ dominates
  $S \circ S_1 \circ S_2 \circ \langle (x, y) \rangle$ and $(x, y) \notin S_1
  \circ S_2$, we have $y' \ne x$ and $\{x, y\} \ne \{x', y'\}$ for every pair
  $(x', y') \in S_1 \circ S_2$ and
  Lemma~\ref{lem:DominatedSameCherryCount} shows that
  $\cc{x}{y}{\reduce{\T}{S}} = \cc{x}{y}{\reduce{\T}{(S \circ S_1)}} =
  \cc{x}{y}{\reduce{\T}{(S \circ S_1 \circ S_2)}}$.
  Similarly, since $S \circ S_1 \circ \langle (x, y) \rangle$ dominates
  $S \circ S_1 \circ S_2 \circ S_3 \circ \langle (x, y) \rangle \circ S'$ and
  $(x, y) \notin S_2 \circ S_3$, we have
  $y' \ne x$ and $\{x', y'\}$ for every pair $(x', y') \in S_2 \circ S_3$ and
  $\cc{x}{y}{\reduce{\T}{(S \circ S_1)}} = \cc{x}{y}{\reduce{\T}{(S \circ S_1
      \circ S_2 \circ S_3}}$.
  Together, these two observations imply that
  $\cc{x}{y}{\reduce{\T}{S}} = \cc{x}{y}{\reduce{\T}{(S \circ S_1 \circ S_2
      \circ S_3)}}$ and $y' \ne x$ and $\{x', y'\} \ne \{x, y\}$ for every
  pair $(x', y') \in S_1 \circ S_2 \circ S_3$.
  Thus, $S \circ \langle (x, y) \rangle$ dominates
  $S \circ S_1 \circ S_2 \circ S_3 \circ \langle (x, y) \rangle \circ S''$.

  If $|S_1| = 0$, then $S \circ \langle (x, y) \rangle = S \circ S_1
  \circ \langle (x, y) \rangle$ and it follows immediately that
  $S \circ \langle (x, y) \rangle$ dominates
  $S \circ S_1 \circ S_2 \circ S_3 \circ \langle (x, y) \rangle \circ S'$.
  By Observations~\ref{obs:dominate-extension}, this implies that
  $S \circ \langle (x, y) \rangle$ dominates
  $S \circ S_1 \circ S_2 \circ S_3 \circ \langle (x, y) \rangle$ and thus also
  $S \circ S_1 \circ S_2 \circ S_3 \circ \langle (x, y) \rangle \circ S''$.

  If $|S_3| = 0$, then $S \circ S_1 \circ S_2 \circ \langle (x, y) \rangle = S
  \circ S_1 \circ S_2 \circ S_3 \circ \langle (x, y) \rangle$, so it follows
  immediately that $S \circ \langle (x, y) \rangle$ dominates
  $S \circ S_1 \circ S_2 \circ S_3 \circ \langle (x, y) \rangle$.
  By Observation~\ref{obs:dominate-extension}, this implies that
  $S \circ \langle (x, y) \rangle$ also dominates
  $S \circ S_1 \circ S_2 \circ S_3 \circ \langle (x, y) \rangle \circ S''$.
  
  If $(x, y) \in S_1 \circ S_2$, then the fact that $S \circ \langle (x, y)
  \rangle$ dominates $S \circ S_1 \circ S_2 \circ \langle (x, y) \rangle$ and
  Observation~\ref{obs:dominate-extension} imply that it also
  dominates $S \circ S_1 \circ S_2$ and thus also
  $S \circ S_1 \circ S_2 \circ S_3 \circ \langle (x, y) \rangle \circ S''$.

  Finally, if $(x, y) \in S_3$, then consider the longest prefix $S_3'
  \subseteq S_3$ such that $(x, y) \notin S_3'$.
  Then, by Observation~\ref{obs:dominate-extension},
  $S \circ S_1 \circ \langle (x, y) \rangle$ dominates
  $S \circ S_1 \circ S_2 \circ S_3' \circ \langle (x, y) \rangle$.
  As shown so far, this implies that
  $S \circ \langle (x, y) \rangle$ dominates $S \circ S_1 \circ S_2 \circ S_3'
  \circ \langle (x, y) \rangle$.
  Since $S_3' \circ \langle (x, y) \rangle$ is a prefix of $S_3$ and, thus,
  of $S_3 \circ \langle (x, y) \rangle \circ S''$,
  Observation~\ref{obs:dominate-extension} now shows that
  $S \circ \langle (x, y) \rangle$ dominates $S \circ S_1 \circ S_2 \circ S_3
  \circ \langle (x, y) \rangle \circ S''$.
\end{proof}

The significance of redundant pairs stems from the following proposition.

\begin{prop}
  \label{prop:dominationCorrectness}
  Let $\T$ be a set of $X$-trees, and $S \circ S'$ a tree-child cherry picking
  sequence for $\T$.
  Suppose that $S \circ S'$ is dominated by $S \circ \langle (x,y) \rangle$,
  for some pair $(x,y) \in X \times X$.
  Then there exists a tree-child cherry picking sequence $S \circ
  \langle(x,y)\rangle \circ S''$ for $\T$ with $w(S \circ \langle(x,y)\rangle
  \circ S'') \leq w(S \circ S')$.
\end{prop}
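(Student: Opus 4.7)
The plan is to construct $S''$ explicitly by pulling the pair $(x,y)$ out of position $j$ in $S'$ and prepending it: set $S'' = S'_{1,j-1} \circ S'_{j+1,|S'|}$, so that $|S \circ \langle(x,y)\rangle \circ S''| = |S \circ S'|$ and the two sequences will have equal weight. What remains is to verify that $S \circ \langle(x,y)\rangle \circ S''$ is a tree-child cherry picking sequence for~$\T$.

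For the tree-child property, I will let $(x'_i, y'_i)$ denote the $i$th pair of~$S'$. The set of forbidden leaves accumulated by $S \circ \langle(x,y)\rangle \circ S'_{1,j-1}$ is identical to the set accumulated by $S \circ S'_{1,j}$ in the original sequence, namely the first elements of~$S$ together with $\{x, x'_1, \dots, x'_{j-1}\}$. Hence tree-childness on $S'_{j+1,|S'|}$ is inherited directly from the original sequence. For each pair $(x'_i, y'_i)$ with $i < j$, the only new forbidden leaf introduced by prepending $(x,y)$ is $x$ itself, and the domination condition $y'_i \ne x$ keeps $y'_i$ unforbidden. Finally, $y$ is unforbidden with respect to~$S$ because $S \circ S'$ is tree-child and $(x,y)$ appears in~$S'$.

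For the cherry picking property, I will proceed tree-by-tree, after establishing the crucial preliminary that the set of trees in $\T/S$ having $\{x,y\}$ as a cherry coincides with the corresponding set in $\T/(S \circ S'_{1,j-1})$. The conditions $y' \ne x$ and $\{x',y'\} \ne \{x,y\}$, together with a short case analysis on whether $x' \in \{x,y\}$, show that every pair $(x',y') \in S'_{1,j-1}$ either leaves a tree with cherry $\{x,y\}$ unchanged or removes a leaf outside $\{x,y\}$; hence the set in question is monotone non-decreasing over $S'_{1,j-1}$. Lemma~\ref{lem:DominatedSameCherryCount} pins the cardinality, so the set is in fact constant.

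Now fix $T \in \T$ and let $T^* = \reduce{T}{S}$. If $T^*$ does not have $\{x,y\}$ as a cherry, neither does $\reduce{T^*}{S'_{1,j-1}}$, so $(x,y)$ is a no-op in both orderings and $\reduce{T}{(S \circ \langle(x,y)\rangle \circ S'')} = \reduce{T}{(S \circ S')}$. Otherwise, I will set $T_a = \reduce{T^*}{S'_{1,j-1}}$ and $T_b = \reduce{\reduce{T^*}{\langle(x,y)\rangle}}{S'_{1,j-1}}$, and apply Lemma~\ref{lem:subtrees-are-preserved} with the subtree $\reduce{T^*}{\langle(x,y)\rangle} \subseteq T^*$ (the removed label~$x$ is disjoint from $\{y'_1, \dots, y'_{j-1}\}$ by the domination condition) to obtain $T_b \subseteq T_a$; since $x \notin T_b$, in fact $T_b \subseteq \reduce{T_a}{\langle(x,y)\rangle}$. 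Any label of $\reduce{T_a}{\langle(x,y)\rangle}$ missing from $T_b$ must equal some $x'_i$ with $i < j$, and tree-childness of $S \circ S'$ ensures no such $x'_i$ appears as a second element in $S'_{j+1,|S'|}$. A second application of Lemma~\ref{lem:subtrees-are-preserved} will then yield $\reduce{T_b}{S'_{j+1,|S'|}} \subseteq \reduce{\reduce{T_a}{\langle(x,y)\rangle}}{S'_{j+1,|S'|}} = \reduce{T^*}{S'}$, which is a single leaf, so the left-hand side is the same single leaf. The main obstacle will be this final case: double-checking the containment hypotheses for both invocations of Lemma~\ref{lem:subtrees-are-preserved} so that swapping the order of $(x,y)$ and $S'_{1,j-1}$ still reduces each tree to the correct single leaf.
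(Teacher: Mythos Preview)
Your plan is correct, and it takes a somewhat different route from the paper's proof. The paper argues by induction on the position of $(x,y)$: writing $S' = S'' \circ \langle(x,y)\rangle \circ S'''$ with $(x,y)\notin S''$, it defines $S'_i = S''_{1,i}\circ\langle(x,y)\rangle\circ S''_{i+1,|S''|}\circ S'''$ and shows that each adjacent swap (from $S'_{i+1}$ to $S'_i$) preserves the cherry-picking property, via a local three-case analysis comparing $T'/\langle(x',y'),(x,y)\rangle$ with $T'/\langle(x,y),(x',y')\rangle$, followed by one application of Lemma~\ref{lem:subtrees-are-preserved} to propagate through the common suffix. You instead move $(x,y)$ to the front in a single shot: you first strengthen Lemma~\ref{lem:DominatedSameCherryCount} from equal counts to equal \emph{sets} (via the monotonicity argument), then apply Lemma~\ref{lem:subtrees-are-preserved} twice---once to push the removal of $x$ through $S'_{1,j-1}$, and once to push the residual difference (a subset of $\{x'_1,\dots,x'_{j-1}\}$) through $S'_{j+1,|S'|}$. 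Your approach avoids the induction and the per-step case split, at the cost of a slightly more careful global bookkeeping of which labels can be missing from $T_b$ relative to $T_a/\langle(x,y)\rangle$; note in particular that the claim ``any label of $T_a/\langle(x,y)\rangle$ missing from $T_b$ equals some $x'_i$'' uses that $T_a$ still has $\{x,y\}$ as a cherry (so $x$ itself is already gone from $T_a/\langle(x,y)\rangle$), which you get from your set-constancy argument---make that dependence explicit when you write it out.
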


In other words: If some branch of the algorithm already looks for an optimal
solution of $(\T, S \circ \langle (x, y) \rangle)$, then there is no need
to also look for an optimal solution of $(\T, S \circ S''')$,
for any sequence $S \circ S'''$ that is dominated by $S \circ \langle (x,
y) \rangle$.

\begin{proof}
  We can write $S' = S'' \circ \langle (x, y) \rangle \circ S'''$ such that
  $(x, y) \notin S''$.
  Let $|S''| = k$.
  For $0 \le i \le k$, let $S'_i = S''_{1,i} \circ \langle (x, y) \rangle
  \circ S''_{i+1,k} \circ S'''$.
  We prove by induction on $k - i$ that
  $S \circ S'_i$ is a tree-child cherry picking sequence for $\T$, for all $0
  \le i \le k$.
  Since $S'_0 = \langle (x, y) \rangle \circ S'' \circ S'''$ and
  $w(S \circ S'_0) = w(S \circ S')$, this proves the proposition.

  $S \circ S'_k$ is clearly a tree-child cherry picking sequence for $\T$
  because $S'_k = S'$.
  So assume that $i < k$ and that $S \circ S'_{i+1}$ is a tree-child cherry
  picking sequence for $\T$.

  Let $(x', y')$ be the $(i+1)$st pair in $S''$, that is, $(x', y')$ is
  the predecessor pair of $(x, y)$ in $S'_{i+1}$.
  Since $S \circ \langle (x, y) \rangle$ dominates $S \circ S'$, the choice
  of $S''$ implies that $y' \ne x$ and, by
  Lemma~\ref{lem:DominatedSameCherryCount},
  $\cc{x}{y}{\reduce{\T}{(S \circ S''_{1,i})}} =
  \cc{x}{y}{\reduce{\T}{(S \circ S''_{1,i+1})}}$.
  Since $S \circ S'_{i+1}$ is tree-child, the former implies that
  $S \circ S'_i$ is tree-child.
  We use the latter in the following proof that $S \circ S'_i$ is a cherry
  picking sequence for $\T$.

  Let $T \in \T$ be an arbitrary tree, let $T' = \reduce{T}{(S \circ
    S''_{1,i})}$, let $T_a = \reduce{T'}{\langle(x',y'), (x,y)\rangle}$, and
  let $T_b = \reduce{T'}{\langle(x,y), (x',y')\rangle}$.
  We show that $T_b \subseteq T_a$ and that $T_a \setminus T_b \subseteq
  \{x'\}$.
  Thus, since $S \circ S'_{i+1}$ is a tree-child cherry picking sequence and,
  therefore, $x' \ne y''$ for all $(x'', y'') \in S''_{i+2,k} \circ S'''$,
  Lemma~\ref{lem:subtrees-are-preserved} shows that $\reduce{T}{(S \circ S'_i)}
  = \reduce{T_b}{(S''_{i+2,k} \circ S''')} \subseteq \reduce{T_a}{(S''_{i+2,k}
    \circ S''')} = \reduce{T}{(S \circ S'_{i+1})}$.
  Since $\reduce{T}{(S \circ S'_{i+1})}$ has a single leaf and $\reduce{T}{(S
    \circ S'_i)}$ has at least one leaf, this shows that $\reduce{T}{(S \circ
    S'_i)} = \reduce{T}{(S \circ S'_{i+1})}$, that is,
  $S \circ S'_i$ is a cherry picking sequence for~$T$.
  Since this is true for every tree $T \in \T$,
  $S \circ S'_i$ is a cherry picking sequence for $\T$.
 
  It remains to show that $T_b \subseteq T_a$ and $T_a \setminus T_b \subseteq
  \{x'\}$.
  Since $\cc{x}{y}{\reduce{\T}{(S \circ S''_{1,i})}} =
  \cc{x}{y}{\reduce{\T}{(S \circ S''_{1,i+1})}}$,
  either both $T' = \reduce{T}{(S \circ S''_{1,i})}$ and
  $\reduce{T'}{\langle (x', y') \rangle} = \reduce{T}{(S \circ S''_{1,i+1})}$
  contain $\{x, y\}$ as a cherry or neither of them does.
  
  If neither $T'$ nor $\reduce{T'}{\langle(x',y')\rangle}$ contains $\{x, y\}$
  as a cherry, then $T_a = \reduce{T'}{\langle(x',y'), (x,y)\rangle} =
  \reduce{T'}{\langle(x',y')\rangle} \break =
  \reduce{T'}{\langle(x,y), (x',y')\rangle} = T_b$, so $T_b \subseteq T_a$
  and $T_a \setminus T_b = \emptyset \subseteq \{x'\}$.

  If both $T'$ and $\reduce{T'}{\langle(x',y')\rangle}$ contain $\{x, y\}$ as
  a cherry, then observe that $\reduce{T'}{\langle(x',y')\rangle}$ does not
  contain $\{x', y'\}$ as a cherry.
  If $T'$ also does not contain $\{x',y'\}$ as a cherry, then we have that
  $T_a = \reduce{T'}{\langle(x',y'),(x,y)\rangle} =
  \reduce{T'}{\langle(x,y)\rangle}$ and
  $T_b = \reduce{T'}{\langle(x,y),(x',y')\rangle} = \reduce{T_a}{\langle (x',
    y') \rangle}$.
  Since applying the pair $(x', y')$ to $T_a$ can only remove the leaf $x'$,
  this shows that $T_a \subseteq T_b$ and $T_a \setminus T_b \subseteq \{x'\}$.
 
  The final case is when $T'$ contains both $\{x, y\}$ and $\{x', y'\}$ as
  cherries.
  Since $\{x',y'\} \neq \{x,y\}$, $T'$~must contain distinct vertices
  $p$ and $q$ such that $p$ is the common parent of $x$ and $y$, and $q$ is the
  common parent of $x'$ and $y'$.
  It follows that  $T_b$ and $T_a$ can both be derived from $T'$ by deleting
  $x$ and $x'$ and suppressing $p$ and $q$.
  Thus, $T_a = T_b$, that is, once again, $T_b \subseteq T_a$ and $T_a
  \setminus T_b = \emptyset \subseteq \{x'\}$.
\end{proof}

While our algorithm uses redundant pairs to ignore some dominated sequences
in its search for a shortest tree-child cherry picking sequence, it cannot
ignore \emph{all} dominated sequences.
Indeed, in many cases, every possible tree-child cherry picking sequence for
$\T$ is dominated by another sequence.
Consider, for example, a binary tree on $X = \{a,b,c,d\}$ with cherries
$\{a,b\}$ and $\{c,d\}$.
Any sequence for this tree must begin with $(a,b)$, $(b,a)$, $(c,d)$ or
$(d,c)$.
If the first pair is $(a,b)$, then the second pair must be either $(c,d)$ or
$(d,c)$.
But the sequence $\langle(a,b),(c,d)\rangle$ is dominated by
$\langle(c,d)\rangle$, and similarly $\langle(a,b),(d,c)\rangle$ is dominated
by $\langle(d,c)\rangle$.
A similar argument applies to any other sequence we might try.
Thus, if we did ignore \emph{all} redundant pairs for \emph{every} sequence,
the algorithm would not find any cherry picking sequence for $\T$.
This is the reason why procedure~\procref{alg:tree-child-sequence-refined}
explicitly keeps a set $R$ of redundant pairs that are safe to ignore;
it ignores a sequence $S \circ \langle (x, y) \rangle$ \emph{only} if $(x, y)
\in R$.

Following the terminology of Linz and Semple \cite{LinzSemple2017}, we call a
pair $(x_j, y_j)$ in a partial cherry picking sequence $S = \langle (x_1, y_1),
\ldots, (x_r, y_r) \rangle$ \emph{essential} if $\T/S_{1,j} \ne \T/S_{1,j-1}$,
that is, $\{x_j, y_j\}$ is a cherry of at least one tree in $\T/S_{1,j-1}$ and,
therefore, applying the pair $(x_j, y_j)$ to $\T/S_{1,j-1}$ removes $x_j$
from at least one tree in $\T/S_{1,j-1}$.

Our correctness proof of procedure~\procref{alg:tree-child-sequence-refined}
is divided into two parts:
First we prove that if, for a given invocation $\TreeChildSequencePruned(\T, S,
k, R)$, every pair in $S$ is essential and every pair in $R$ is redundant for
$S$, then
\begin{enumerate}[label=(\roman{*}),leftmargin=*,widest=ii,noitemsep]
  \item This is true at any time during the execution of of this invocation
    (even though the invocation may modify $S$ and $R$) and
  \item For every recursive call $\TreeChildSequencePruned(\T, S'', k, R'')$
    this invocation makes, every pair in $S''$ is essential and every pair in
    $R''$ is redundant for~$S''$.
\end{enumerate}
Since the top-level invocation $\TreeChildSequencePruned(\T, \langle\rangle, k,
\emptyset)$ satisfies $S = \langle\rangle$ and $R = \emptyset$, that is, all
pairs in $S$ are trivially essential and all pairs in $R$ are trivially
redundant for $S$, an inductive argument then implies that every pair in $S$
is essential and every pair in $R$ is redundant for $S$ at any time during the
execution of any invocation $\TreeChildSequencePruned(\T, S, k, R)$.
The second part of the proof shows that, under this condition, the invocation
$\TreeChildSequencePruned(\T, \langle\rangle, k, \emptyset)$ returns a
shortest tree-child cherry picking sequence for $\T$ if this sequence has
weight at most $k$; otherwise, it returns {\None}.

The following lemma shows that replacing $R$ with the set returned by
$\UpdateR(\T, S, (x, y), R)$ whenever we append a pair $(x, y)$ to a
sequence $S$ maintains the property that every pair in $R$ is redundant for
$S$.

\begin{lem}
  \label{lem:oneStepDominated}
  Let $S \circ \langle (x, y) \rangle$ be a partial tree-child cherry picking
  sequence whose pairs are all essential, and let $R \subseteq X \times X$.
  For every pair $(x', y')$ in the subset $R' \subseteq R$ returned by
  $\UpdateR(\T, S, (x,y), R)$, the sequence $S \circ \langle (x, y), (x', y')
  \rangle$ is dominated by $S \circ \langle (x', y') \rangle$.
\end{lem}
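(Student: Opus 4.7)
The plan is to apply Definition~\ref{def:dominatingSequence} directly, taking the dominating pair to be $(x',y')$ and the dominated extension to be $S \circ S''$ with $S'' = \langle (x,y), (x',y') \rangle$. The natural (and only) witness is $j = 2$. With this choice $(x',y')$ is literally the second pair of $S''$, so the first bullet of the definition holds by construction. The second bullet becomes $\cc{x'}{y'}{\reduce{\T}{S}} = \cc{x'}{y'}{\reduce{\T}{(S \circ \langle (x,y) \rangle)}}$, which is exactly the count condition built into $\UpdateR$ in line~\ref{lin:define-R-prime}, so it holds by virtue of $(x',y') \in R'$.

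The third bullet asks, for the unique pair $(x,y) \in S''_{1,1}$, that both $y \neq x'$ and $\{x,y\} \neq \{x',y'\}$. The inequality $x' \neq y$ is the other condition enforced by $\UpdateR$, so it is immediate. The remaining inequality $\{x,y\} \neq \{x',y'\}$ is where I would invoke the hypothesis that every pair of $S \circ \langle (x,y) \rangle$ is essential. Essentialness of $(x,y)$ means that $\{x,y\}$ is a cherry of at least one tree in $\reduce{\T}{S}$; applying the pair $(x,y)$ removes $x$ from every tree having $\{x,y\}$ as a cherry, so $\cc{x}{y}{\reduce{\T}{(S \circ \langle (x,y) \rangle)}} = 0 < \cc{x}{y}{\reduce{\T}{S}}$. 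Since the \emph{count} function is symmetric in its first two arguments, the identity $\{x,y\} = \{x',y'\}$ would turn this strict inequality into a strict inequality for $\cc{x'}{y'}{\cdot}$ as well, contradicting the second-bullet equality already established. Hence $\{x,y\} \neq \{x',y'\}$, and the three bullets of Definition~\ref{def:dominatingSequence} are satisfied for $j = 2$.

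The main (essentially only) obstacle is the recognition that essentialness is precisely what rules out the equality $\{x,y\} = \{x',y'\}$: without that hypothesis an $(x',y')$ surviving the $\UpdateR$ filter could in principle equal $(x,y)$ or $(y,x)$, and the third bullet of the domination definition would then fail. Everything else is bookkeeping that translates the two explicit conditions of $\UpdateR$ directly into two of the three required conditions of domination with $j = 2$, leaving only this last case to be excluded by appealing to how the cherry count for $\{x,y\}$ must strictly drop when an essential pair $(x,y)$ is applied.
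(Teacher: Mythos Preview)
Your proposal is correct and follows essentially the same approach as the paper's own proof: both verify Definition~\ref{def:dominatingSequence} directly with $j=2$, read off the conditions $x' \neq y$ and the count equality from the definition of $\UpdateR$, and rule out $\{x,y\} = \{x',y'\}$ by observing that essentialness of $(x,y)$ forces the count of $\{x,y\}$ to strictly drop (to zero) upon applying $(x,y)$, contradicting the count equality. Your write-up is in fact slightly more explicit than the paper's in noting the symmetry of $\textit{count}$ in its first two arguments.
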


\begin{proof}
  By the definition of $R'$ in line~\ref{lin:define-R-prime} of
  procedure~\procref{alg:update-R}, we have $x' \ne y$ and
  $\cc{x'}{y'}{\reduce{\T}{S}} = \cc{x'}{y'}{\reduce{\T}{(S \circ \langle (x,y)
      \rangle)}}$ for all $(x', y') \in R'$.
  Observe also that $\{x, y\} \neq \{x', y'\}$.
  Indeed, since every pair in $S \circ \langle (x, y) \rangle$ is essential,
  there exists a tree in $\reduce{\T}{S}$ that has $\{x, y\}$ as a cherry,
  while there is no tree in $\reduce{\T}{(S \circ \langle (x, y) \rangle)}$ that
  has $\{x, y\}$ as a cherry.
  Thus, if $\{x, y\} = \{x', y'\}$, we would have
  $\cc{x'}{y'}{\reduce{\T}{S}} \ne \cc{x'}{y'}{\reduce{\T}{(S \circ \langle
      (x,y) \rangle)}}$, so $(x', y') \notin R'$.
  Since $(x', y')$ is not the first pair in $\langle (x, y), (x', y') \rangle$,
  the sequence $S \circ \langle (x, y), (x', y') \rangle$ is therefore
  dominated by $S \circ \langle (x', y') \rangle$.
\end{proof}

We are now ready to prove Claims~(i) and~(ii) above.
Since each invocation $\TreeChildSequencePruned(\T, S, k, R)$ may modify
$S$ and $R$, we use $S_0$ and $R_0$ to refer to the values of $S$ and $R$
passed as arguments to this invocation, and $S$ and $R$ to refer to the current
values of $S$ and $R$ at any point during the execution of
$\TreeChildSequencePruned(\T, S, k, R)$.

\begin{lem}
  \label{lem:maintain-redundant-pairs}
  Consider any invocation $\TreeChildSequencePruned(\T, S_0, k, R_0)$ such that
  every pair in $S_0$ is essential and every pair in $R_0$ is redundant for
  $S_0$.
  Then
  \begin{enumerate}[label=(\roman{*}),leftmargin=*,widest=ii,noitemsep]
    \item At any time during the execution of this invocation,
      every pair in $S$ is essential and there exists a proper prefix $S_p
      \subset S_0$ for each pair $(x', y') \in R$ such that $S_p \circ \langle
      (x', y') \rangle$ dominates $S \circ \langle (x', y') \rangle$; and
    \item For every recursive call $\TreeChildSequencePruned(\T, S'', k, R'')$
      this invocation makes, every pair in $S''$ is essential and every pair in
      $R''$ is redundant for $S''$.
  \end{enumerate}
\end{lem}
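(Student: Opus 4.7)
The plan is to prove part~(i) by strong induction on the number of state transitions during the execution of $\TreeChildSequencePruned(\T, S_0, k, R_0)$, and then to derive part~(ii) as the snapshot of the invariant just before each recursive call fires. The key structural observation is that $(S, R)$ is modified only in line~\ref{lin:ModTrivialCherryUpdateS}, where $S$ is extended by a trivial cherry pair $(x, y)$ and $R$ is replaced by $\UpdateR(\T, S, (x, y), R)$; everywhere else $(S, R)$ is stable, so the induction has a single kind of step to analyze. In the base case, $S = S_0$ has only essential pairs by hypothesis and the redundancy of each $(x', y') \in R_0$ for $S_0$ produces, by Definition~\ref{def:dominatingSequence}, a proper prefix $S_p \subsetneq S_0$ with $S_p \circ \langle (x', y') \rangle$ dominating $S_0 \circ \langle (x', y') \rangle$.

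For the inductive step in the trivial-cherry loop, I would first note that the freshly added $(x, y)$ is essential: since $y$ is not forbidden, Proposition~\ref{prop:basic-sequence-properties}\ref{item:yNotForbidden} guarantees $y$ is a leaf of every tree in $\T/S$, so $\{x, y\}$ is a genuine cherry of at least one tree and applying $(x, y)$ removes $x$ from it. Next fix $(x', y') \in R_{\text{new}} = \UpdateR(\T, S, (x, y), R) \subseteq R$. The inductive hypothesis furnishes $S_p \subsetneq S_0$ such that $S_p \circ \langle (x', y') \rangle$ dominates $S \circ \langle (x', y') \rangle$; simultaneously, Lemma~\ref{lem:oneStepDominated}---applicable because $S \circ \langle (x, y) \rangle$ now consists entirely of essential pairs---shows that $S \circ \langle (x', y') \rangle$ dominates $S \circ \langle (x, y), (x', y') \rangle$. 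Composing these via Lemma~\ref{lem:transitive}, instantiated with its ``$S$'' $= S_p$, ``$S_1$'' equal to the suffix of $S$ past $S_p$, ``$S_2 = \langle\rangle$'', and ``$S_3 = \langle (x,y) \rangle$'', yields $S_p \circ \langle (x', y') \rangle$ dominating $S \circ \langle (x, y), (x', y') \rangle$, with $S_p \subsetneq S_0$ untouched.

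For part~(ii), I would apply the invariant from~(i) to the current $(S, R)$ just before line~\ref{lin:ModRecursiveCall} executes, producing the recursive call with $S'' = S \circ \langle (x, y) \rangle$, $(x, y) \in C \setminus R$, and $R'' = \UpdateR(\T, S, (x, y), R')$, where $R' = R \cup \{(x_1, y_1), \ldots, (x_{m-1}, y_{m-1})\}$ collects the cherry pairs chosen in the previous $m-1$ iterations of the for-loop. Essentialness of the pairs in $S''$ is immediate from the invariant plus the fact that $(x, y) \in C$ is a cherry of $\T/S$. Now split $R''$ into two pieces. For $(x', y') \in R'' \cap R$, the argument of the trivial-cherry case applies verbatim and yields a prefix $S_p \subsetneq S_0 \subseteq S \subsetneq S''$ witnessing redundancy of $(x', y')$ for $S''$. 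For $(x', y') = (x_j, y_j)$ with $j < m$, Lemma~\ref{lem:oneStepDominated} applied directly to $S$, $(x, y)$, and $R'$ shows that $S \circ \langle (x_j, y_j) \rangle$ dominates $S \circ \langle (x, y), (x_j, y_j) \rangle = S'' \circ \langle (x_j, y_j) \rangle$, so taking $S_p = S \subsetneq S''$ works.

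The main obstacle will be keeping the bookkeeping of sequences straight---in particular distinguishing $S_p$, $S_0$, and $S$ within the current invocation from the fresh $S_0'' = S''$ handed to the recursive call---and correctly instantiating the decomposition demanded by Lemma~\ref{lem:transitive} so that its hypotheses match the dominations produced by the inductive hypothesis and by Lemma~\ref{lem:oneStepDominated}. Once this matching is set up cleanly, no new structural facts about trees or cherries are needed; the combinatorial content is fully carried by Lemmas~\ref{lem:oneStepDominated} and~\ref{lem:transitive}.
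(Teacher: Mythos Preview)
Your proposal is correct and follows essentially the same approach as the paper: induction over the state updates in the trivial-cherry loop, using Lemma~\ref{lem:oneStepDominated} to get one-step domination and Lemma~\ref{lem:transitive} (with the same instantiation $S_2=\langle\rangle$, $S_3=\langle(x,y)\rangle$) to propagate the prefix $S_p\subsetneq S_0$. Your treatment of part~(ii) is slightly more elaborate than necessary---the paper simply observes that Lemma~\ref{lem:oneStepDominated} applied to all of $R'$ (old and newly added pairs alike) directly yields redundancy of every $(x',y')\in R''$ for $S''$, with $S$ itself as the witnessing proper prefix of $S''$; your case split on $R''\cap R$ versus the newly added pairs is correct but not needed for the stated claim.
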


\begin{proof}
  (i) Initially, we have $S = S_0$ and $R = R_0$.
  Thus, since every pair in $S_0$ is essential and every pair in $R_0$ is
  redundant for $S_0$, (i) holds for this choice of $S$ and $R$.
  Next we prove that any modification the invocation makes to $S$ and $R$
  maintains (i).
  Observe that $\TreeChildSequencePruned(\T, S_0, k, R_0)$ modifies $S$ and
  $R$ only in lines~\ref{lin:ModTrivialCherryUpdateR}
  and~\ref{lin:ModTrivialCherryUpdateS}.
  Consider one iteration of the loop in
  lines~\ref{lin:ModLoopTrivialCherryReductionStart}--\ref{lin:ModLoopTrivialCherryReductionEnd} and let $(x, y)$ be the pair added to $S$ in this iteration.
  Since $\{x, y\}$ is a trivial cherry of $\T/S$ in this case and every pair
  in $S$ essential, every pair in $S \circ \langle (x, y) \rangle$ is
  essential.
  By Lemma~\ref{lem:oneStepDominated}, every pair $(x', y')$ in the set
  $R'$ returned by $\UpdateR(\T, S, (x, y), R)$ in
  line~\ref{lin:ModTrivialCherryUpdateR} has the property that
  $S \circ \langle (x', y') \rangle$ dominates $S \circ \langle (x, y), (x',
  y') \rangle$.
  Since $R' \subseteq R$, there exists a proper prefix $S_p \subset S_0$
  such that $S_p \circ \langle (x', y') \rangle$ dominates
  $S \circ \langle (x', y') \rangle$.
  Thus, by Lemma~\ref{lem:transitive}, $S_p \circ \langle (x', y')
  \rangle$ also dominates $S \circ \langle (x, y), (x', y') \rangle$
  \markj{(where $S$ and $S \circ S_1$ in Lemma~\ref{lem:transitive} correspond to $S_p$ and $S$ respectively, $S_2 = \langle \rangle$, and $S_3 = \langle(x,y)\rangle$)}
  .
  Therefore, replacing $S$ with $S \circ \langle (x, y) \rangle$, and $R$ with
  the set returned by $\UpdateR(\T, S, (x, y), R)$ maintains that every
  pair in $S$ is essential and, for every every pair $(x', y') \in R$, there
  exists a proper prefix $S_p \subset S_0$ such that $S_p \circ \langle (x',
  y') \rangle$ dominates $S \circ \langle (x', y') \rangle$.

  (ii) Consider any recursive call
  $\TreeChildSequencePruned(\T, S \circ \langle (x, y) \rangle, k, R'')$ the
  invocation $\TreeChildSequencePruned(\T, S, k, R)$ makes in
  line~\ref{lin:ModRecursiveCall}.
  By (i), all pairs in $S$ are essential.
  Since $(x, y) \in C$, $\{x, y\}$ is a cherry of $\T/S$.
  Thus, every pair in $S \circ \langle (x, y) \rangle$ is essential.
  By Lemma~\ref{lem:oneStepDominated}, the set $R''$ returned by
  $\UpdateR(\T, S, (x, y), R')$ in line~\ref{lin:ModRecursiveCallUpdateR}
  contains only pairs that are redundant for $S \circ \langle (x, y) \rangle$.
  Thus, (ii) holds.
\end{proof}

The following corollary follows by applying
Lemma~\ref{lem:maintain-redundant-pairs} inductively after observing that
$S_0 = \langle\rangle$ and $R_0 = \emptyset$ for the top-level invocation
$\TreeChildSequencePruned(\T, \langle\rangle, k, \emptyset)$.

\begin{cor}
  \label{cor:correct-redundant-pairs}
  At any point during the execution of an invocation
  $\TreeChildSequencePruned(\T, S_0, k, R_0)$, there exists a proper
  prefix $S_p \subset S_0$ for each pair $(x', y') \in R$ such that
  $S_p \circ \langle (x', y') \rangle$ dominates $S \circ \langle (x', y')
  \rangle$.
\end{cor}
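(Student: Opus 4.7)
The plan is to derive the corollary by a straightforward induction on the depth of the recursion tree of $\TreeChildSequencePruned$, using Lemma~\ref{lem:maintain-redundant-pairs} as the engine at each step. What needs to be tracked through the induction are two invariants at the moment an invocation $\TreeChildSequencePruned(\T, S_0, k, R_0)$ begins: that every pair in $S_0$ is essential, and that every pair in $R_0$ is redundant for $S_0$.

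For the base case, consider the top-level invocation $\TreeChildSequencePruned(\T, \langle\rangle, k, \emptyset)$. Here $S_0 = \langle\rangle$ contains no pairs, so the claim that all pairs in $S_0$ are essential holds vacuously; likewise $R_0 = \emptyset$ trivially consists only of pairs redundant for $S_0$. Thus the hypotheses of Lemma~\ref{lem:maintain-redundant-pairs} are met. Part~(i) of that lemma then gives, throughout the execution of this top-level call, that for each $(x', y') \in R$ there is a proper prefix $S_p \subset S_0$ with $S_p \circ \langle (x', y') \rangle$ dominating $S \circ \langle (x', y') \rangle$; since $S_0 = \langle\rangle$ in this case, $R$ itself stays empty, so the statement is vacuous here but the general form needed for descendants is precisely what we need to propagate.

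For the inductive step, suppose the invariants hold at the start of some invocation $\TreeChildSequencePruned(\T, S_0, k, R_0)$. Lemma~\ref{lem:maintain-redundant-pairs}(ii) then guarantees that every recursive call $\TreeChildSequencePruned(\T, S \circ \langle (x, y) \rangle, k, R'')$ issued in line~\ref{lin:ModRecursiveCall} again begins in a state where all pairs in its first argument are essential and all pairs in its fourth argument are redundant for that first argument. Hence the invariants propagate down the recursion tree, and every invocation the algorithm ever executes starts in a state satisfying the hypotheses of Lemma~\ref{lem:maintain-redundant-pairs}.

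The corollary is then immediate from Lemma~\ref{lem:maintain-redundant-pairs}(i): at any point during any invocation $\TreeChildSequencePruned(\T, S_0, k, R_0)$, for each $(x', y') \in R$ there exists a proper prefix $S_p \subset S_0$ such that $S_p \circ \langle (x', y') \rangle$ dominates $S \circ \langle (x', y') \rangle$. There is no real obstacle here; the only subtlety worth being explicit about is that both parts of Lemma~\ref{lem:maintain-redundant-pairs} must be invoked in tandem, with (ii) used to set up the hypotheses for the child invocation and (i) used to extract the desired statement for the current invocation, so the induction should be phrased over the invariants rather than directly over the corollary's conclusion.
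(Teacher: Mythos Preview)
Your proposal is correct and follows essentially the same approach as the paper: the paper's proof is the single sentence that the corollary ``follows by applying Lemma~\ref{lem:maintain-redundant-pairs} inductively after observing that $S_0 = \langle\rangle$ and $R_0 = \emptyset$ for the top-level invocation,'' and you have simply spelled out that induction explicitly, using part~(ii) to propagate the hypotheses to child invocations and part~(i) to extract the conclusion.
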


The next lemma states the fairly weak correctness guarantee each
invocation $\TreeChildSequencePruned(\T, S_0, k, R_0)$ provides.
As we show below, in Corollary~\ref{cor:sequenceAlgCorrectnessRefined}, this
lemma implies that the invocation $\TreeChildSequencePruned(\T, \langle\rangle,
k, \emptyset)$ returns a shortest tree-child cherry picking sequence for $\T$
if there is such a sequence of weight at most $k$.

\begin{lem}
  \label{lem:sequenceAlgCorrectnessRefined}
  Consider any invocation $\TreeChildSequencePruned(\T, S_0, k, R_0)$ the
  algorithm makes.
  If $(\T, S_0)$ has a solution of weight at most $k$, then either
  $\TreeChildSequencePruned(\T, S_0, k, R_0)$ returns an optimal solution
  of $(\T, S_0)$ or there exist an extension $S_0 \circ S'$ of $S_0$, a pair
  $(x, y) \in R_0$, and a proper prefix $S_p \subset S_0$ such that
  $S_p \circ \langle (x, y) \rangle$ dominates $S_0 \circ S'$.
\end{lem}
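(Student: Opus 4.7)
The plan is to prove Lemma~\ref{lem:sequenceAlgCorrectnessRefined} by strong induction on the number of descendant invocations of $\TreeChildSequencePruned(\T, S_0, k, R_0)$, paralleling the induction in Proposition~\ref{prop:algCorrectness}; the finiteness of this count follows from an adaptation of Proposition~\ref{prop:running-time} to procedure~\procref{alg:tree-child-sequence-refined}. Throughout, I would use the invariant that the current $R$ is a subset of $R_0$ at any point during the trivial cherry loop, since $\UpdateR$ only returns subsets of its input, together with Lemma~\ref{lem:maintain-redundant-pairs} and Corollary~\ref{cor:correct-redundant-pairs}.

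For the base case (no recursive calls), I would analyze each termination line. The genuinely new case is line~\ref{lin:ModFailRedundantTrivialCherry}, where a trivial cherry $\{x, y\}$ with $(x, y)$ in the current $R$ forces a {\None} return. Here Proposition~\ref{prop:move-up-common-cherries} supplies an optimal solution of $(\T, S_0)$ extending $S \circ \langle (x, y) \rangle$; Corollary~\ref{cor:correct-redundant-pairs} gives a proper prefix $S_p \subset S_0$ with $S_p \circ \langle (x, y) \rangle$ dominating $S \circ \langle (x, y) \rangle$; and Observation~\ref{obs:dominate-extension} lifts the domination to the whole optimal solution, establishing condition~(b) with $(x, y) \in R \subseteq R_0$. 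The failure lines~\ref{lin:ModFailForbiddenCherry} and~\ref{lin:ModFailDugTooDeep} contradict the assumption that $(\T, S_0)$ has a solution of weight at most $k$ (by the arguments used in Proposition~\ref{prop:algCorrectness}); line~\ref{lin:ModSolutionFound} returns an optimal solution as in that proposition; and the situation in which line~\ref{lin:ModRecursiveSolutionEnd} is reached without a recursive call (every candidate filtered by $R$) reduces via Lemma~\ref{lem:choose-from-C} to the argument for line~\ref{lin:ModFailRedundantTrivialCherry}.

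For the inductive step I would first dispose of the case in which some optimal solution of $(\T, S)$ begins (after $S$) with a pair already in $R$: condition~(b) follows exactly as in the base case. Otherwise, by Lemma~\ref{lem:choose-from-C} every optimal solution begins with some candidate $(x_i^c, y_i^c)$ on which the outer loop iterates. Applying the inductive hypothesis to each sub-invocation whose candidate begins some optimal solution, if any such sub-invocation returns an optimal solution of its sub-problem, then $S_{\textrm{opt}}$ of the outer invocation is optimal for $(\T, S_0)$ and condition~(a) holds; otherwise (b') holds for every such sub-invocation, providing a dominated extension $E_i$ and a pair $(x^{**}_i, y^{**}_i) \in R''^{(i)} \subseteq R \cup \{(x_1^c, y_1^c), \ldots, (x_{i-1}^c, y_{i-1}^c)\}$.

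Translating (b') into (b) for the outer invocation is where the genuine work lies. When $(x^{**}_i, y^{**}_i) \in R$, Corollary~\ref{cor:correct-redundant-pairs} supplies a proper prefix $S_p^{\mathrm{outer}} \subset S_0$ with $S_p^{\mathrm{outer}} \circ \langle (x^{**}_i, y^{**}_i) \rangle$ dominating $S \circ \langle (x^{**}_i, y^{**}_i) \rangle$, and a short case analysis on $S_p^{(i)}$ (distinguishing $S_p^{(i)} \subset S_0$, $S_0 \subseteq S_p^{(i)} \subset S$, and $S_p^{(i)} = S$) combined with Lemma~\ref{lem:transitive} shows that $S_p^{\mathrm{outer}} \circ \langle (x^{**}_i, y^{**}_i) \rangle$ also dominates $E_i$, yielding condition~(b). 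I expect the hardest case to be $(x^{**}_i, y^{**}_i) = (x_j^c, y_j^c)$ a previously tried candidate, since then $(x^{**}_i, y^{**}_i)$ need not lie in $R_0$; my plan is a minimality argument on the iteration index, combining (b') with Lemma~\ref{lem:oneStepDominated} (which yields that $S \circ \langle (x_j^c, y_j^c) \rangle$ dominates $S \circ \langle (x_i^c, y_i^c), (x_j^c, y_j^c) \rangle$) via Lemma~\ref{lem:transitive} to reduce to the earliest iteration for which (b') applies. Because $R''^{(1)} \subseteq R$ at iteration~$1$, the reduction must eventually land in the easy sub-case. The principal subtlety — where I expect to spend the most effort — is verifying that the count conditions in Definition~\ref{def:dominatingSequence} compose correctly along each step of this reduction, which I would check by repeated application of Lemma~\ref{lem:DominatedSameCherryCount}.
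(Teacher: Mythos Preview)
Your overall structure—strong induction on the number of descendant invocations, with a case split on the line at which the invocation returns—is the same as the paper's, and your treatment of the base cases and of the ``easy sub-case'' $(x^{**}_i, y^{**}_i) \in R$ matches the paper's argument using Corollary~\ref{cor:correct-redundant-pairs} together with Lemma~\ref{lem:transitive}.

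The gap is in the hard sub-case where the witness $(x^{**}_i, y^{**}_i) = (x_j^c, y_j^c)$ is a previously tried candidate. Your minimality argument needs, at the reduction step, that the inductive hypothesis yields condition~(b') at iteration~$j$; but the inductive hypothesis applies to the sub-invocation at~$j$ only under the premise that $(\T, S \circ \langle (x_j^c, y_j^c) \rangle)$ has a solution of weight at most~$k$, and for the minimality contradiction you further need that $(x_j^c, y_j^c)$ begins an \emph{optimal} solution of $(\T, S)$. The tools you invoke—Lemma~\ref{lem:oneStepDominated}, Lemma~\ref{lem:transitive}, Lemma~\ref{lem:DominatedSameCherryCount}—only manipulate domination relations; none of them manufactures a tree-child cherry picking sequence for~$\T$ beginning with $(x_j^c, y_j^c)$. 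What is missing is precisely Proposition~\ref{prop:dominationCorrectness}: from the fact that $S \circ \langle (x_j^c, y_j^c) \rangle$ dominates an optimal solution extending $S \circ \langle (x_i^c, y_i^c) \rangle$, that proposition produces a sequence $S \circ \langle (x_j^c, y_j^c) \rangle \circ S''$ for~$\T$ of no greater weight, which is exactly what places $(x_j^c, y_j^c)$ in your analogue of $C_{\mathrm{opt}}$ and yields the contradiction to the minimality of~$i$. Without it, nothing certifies that iteration~$j$ is one ``for which (b') applies,'' and your chain can stall at an iteration outside $C_{\mathrm{opt}}$ before reaching iteration~$1$. The count-composition issue you flag is secondary; the essential missing step is the passage from a domination relation to an actual cherry picking sequence via Proposition~\ref{prop:dominationCorrectness}.
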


\begin{proof}
  Since no invocation $\TreeChildSequencePruned(\T, S, k, R)$ makes more
  recursive calls than the corresponding invocation $\TreeChildSequence(\T, S,
  k)$, Proposition~\ref{prop:running-time} shows that each invocation
  $\TreeChildSequencePruned(\T, S, k, R)$ has a finite number of descendant
  invocations, which we denote by $|\TreeChildSequencePruned(\T, S, k, R)|$.
  Thus, if the lemma does not hold, we can choose an invocation
  $\TreeChildSequencePruned(\T, S_0, k, R_0)$ that violates the lemma and
  has the minimum number of descendant invocations
  $|\TreeChildSequencePruned(\T, S_0, k, R_0)|$ among all such invocations.

  Since $\TreeChildSequencePruned(\T, S_0, k, R_0)$ fails to find an optimal
  solution of $(\T, S_0)$, $\TreeChildSequencePruned(\T, S_0, k, R_0)$ returns
  {\None} in line~\ref{lin:ModFailRedundantTrivialCherry},
  \ref{lin:ModFailForbiddenCherry}, \ref{lin:ModFailTooManyCherries}
  or~\ref{lin:ModRecursiveSolutionEnd}, or it returns a suboptimal solution of
  $(\T, S_0)$ in line~\ref{lin:ModSolutionFound}
  or~\ref{lin:ModRecursiveSolutionEnd}.
  Next we consider these different cases:

  \begin{description}
    \item[\boldmath$\textrm{TCS2}(\T, S_0, k, R_0)$ returns {\None}
      in line~\ref{lin:ModFailForbiddenCherry}
      or~\ref{lin:ModFailTooManyCherries}:]
      In this case, $\TreeChildSequence(\T, S_0, k)$ would have returned
      {\None} in line~\ref{lin:FailForbiddenCherry}
      or~\ref{lin:FailTooManyCherries}.
      Thus, by Proposition~\ref{prop:algCorrectness}, $(\T, S_0)$ has no
      solution of weight at most $k$, a contradiction.
    \item[\boldmath$\textrm{TCS2}(\T, S_0, k, R_0)$ returns a sequence
      $S_0 \circ S'$ in line~\ref{lin:ModSolutionFound}:]
      In this case, $\TreeChildSequence(\T, S_0, k)$ would have returned the
      same sequence in line~\ref{lin:SolutionFound}.
      Thus, by Proposition~\ref{prop:algCorrectness}, $S_0 \circ S'$ is an
      optimal solution of $(\T, S_0)$, a contradiction.
    \item[\boldmath$\textrm{TCS2}(\T, S_0, k, R_0)$ returns {\None} in line~6:]
      In this case, consider the contents of $S$ and $R$ immediately before
      $\TreeChildSequencePruned(\T, S_0, k, R_0)$ returns.
      There exists a trivial cherry $\{x, y\}$ of $\reduce{\T}{S}$ such that
      $y$ is not forbidden with respect to $S$ and $(x, y) \in R$.
      Since $(\T, S_0)$ has a solution of weight at most $k$,
      Proposition~\ref{prop:move-up-common-cherries} shows that
      $(\T, S \circ \langle (x, y) \rangle)$ also has a solution of
      weight at most $k$ and any optimal solution of $(\T, S \circ \langle
      (x, y) \rangle)$ is also an optimal solution of $(\T, S)$.
      By Corollary~\ref{cor:correct-redundant-pairs}, there exists a proper
      prefix $S_p \subseteq S_0$ such that $S_p \circ \langle (x, y) \rangle$
      dominates $S \circ \langle (x, y) \rangle$ and, thus, by
      Observation~\ref{obs:dominate-extension},
      $S \circ \langle (x, y) \rangle \circ S'$, for any optimal solution
      $S \circ \langle (x, y) \rangle \circ S'$ of
      $(\T, S \circ \langle (x, y) \rangle)$, a contradiction.
    \item[\boldmath$\textrm{TCS2}(\T, S_0, k, R_0)$ returns {\None} or a
      suboptimal solution in line~\ref{lin:ModRecursiveSolutionEnd}:]
      In this case, the corresponding invocation $\TreeChildSequence(\T, S_0,
      k)$ would have reached line~\ref{lin:RecursiveSolutionEnd}.
      Since $(\T, S_0)$ has a solution of weight at most~$k$,
      Proposition~\ref{prop:algCorrectness} shows that $\TreeChildSequence(\T,
      S_0, k)$ would have returned an optimal solution $S_0 \circ S'$ of $(\T,
      S_0)$.
      This solution satisfies $S_0 \circ S' = S \circ \langle (x, y) \rangle
      \circ S''$, for some pair $(x, y) \in C$, referring to the state of $S$ in
      line~\ref{lin:PruneCherries} of $\TreeChildSequence(\T, S, k)$.
      This shows that there exists a pair $(x, y) \in C$ such that $(\T, S
      \circ \langle (x, y) \rangle)$ has a solution of weight at most $k$
      and any optimal solution of $(\T, S \circ \langle (x, y) \rangle)$
      is also an optimal solution of $(\T, S_0)$.

      Now consider the subset $C_{\textrm{opt}} \subseteq C$ of all pairs
      $(x, y)$ such that $(\T, S \circ \langle (x, y) \rangle)$ has a solution
      of weight at most $k$ and any optimal solution of $(\T, S \circ \langle
      (x, y) \rangle)$ is an optimal solution of $(\T, S_0)$.
      Order the pairs in $C_{\textrm{opt}}$ so that the pairs in
      $C_{\textrm{opt}} \setminus R$ precede the pairs in $C_{\textrm{opt}}
      \cap R$, and the pairs in $C_{\textrm{opt}} \setminus R$ are arranged in
      the order in which $\TreeChildSequencePruned(\T, S_0, k, R_0)$ makes the
      corresponding recursive calls $\TreeChildSequencePruned(\T, S \circ
      \langle (x, y) \rangle, R'')$.
      If for a pair $(x, y) \in C_{\textrm{opt}}$,
      $\TreeChildSequencePruned(\T, S_0, k, R_0)$ makes the recursive call
      $\TreeChildSequencePruned(\T, S \circ \langle (x, y) \rangle, R'')$ and
      this recursive call returns an optimal solution $S \circ \langle (x, y)
      \rangle \circ S''$ of $(\T, S \circ \langle (x, y))$, then
      $\TreeChildSequencePruned(\T, S_0, k, R_0)$ returns a solution
      $S_0 \circ S'$ of $(\T, S_0)$ that is no longer than $S \circ \langle (x,
      y) \rangle \circ S''$.
      By the choice of $C_{\textrm{opt}}$, $S_0 \circ S'$ is thus an
      optimal solution of $(\T, S_0)$.
      Since we assume that $\TreeChildSequencePruned(\T, S_0, k, R_0)$ does not
      return an optimal solution of $(\T, S_0)$, it follows that for each pair
      $(x, y) \in C_{\textrm{opt}}$, either $\TreeChildSequencePruned(\T, S_0,
      k, R_0)$ does not make the recursive call
      $\TreeChildSequencePruned(\T, S \circ \langle (x, y) \rangle, k, R'')$
      (that is, $(x, y) \in C_{\textrm{opt}} \cap R$) or it makes this
      recursive call (that is, $(x, y) \in C_{\textrm{opt}} \setminus R$) but
      the recursive call returns {\None} or a suboptimal solution of
      $(\T, S \circ \langle (x, y) \rangle)$.

      Now let $(x, y)$ be the first pair in $C_{\textrm{opt}}$ according to the
      ordering defined above.

      \begin{itemize}
        \item If $\TreeChildSequencePruned(\T, S_0, k, R_0)$ does not make the
          recursive call $\TreeChildSequencePruned(\T, S \circ \langle (x, y)
          \rangle, k, R'')$, then $(x, y) \in R$.
          Thus, by Corollary~\ref{cor:correct-redundant-pairs}, there exists a
          proper prefix $S_p \subset S_0$ such that $S_p \circ \langle (x, y)
          \rangle$ dominates $S \circ \langle (x, y) \rangle$.
          Since $S \circ \langle (x, y) \rangle$ is an extension of $S_0$, this
          is a contradiction.
        \item If $\TreeChildSequencePruned(\T, S_0, k, R_0)$ does make the
          recursive call $\TreeChildSequencePruned(\T, S \circ \langle (x, y)
          \rangle, k, R'')$, then $\TreeChildSequencePruned(\T, S \circ \langle
          (x, y) \rangle, k, R'')$ does not return an optimal solution of
          $(\T, S \circ \langle (x, y) \rangle)$.
          Thus, since $|\TreeChildSequencePruned(\T, S \circ \langle (x, y)
          \rangle, k, R'')| < |\TreeChildSequencePruned(\T, S_0, k, R_0)|$, the
          choice of $\TreeChildSequencePruned(\T, S_0, k, R_0)$ implies that
          there exist an extension $S \circ \langle (x, y) \rangle \circ S'$ of
          $S \circ \langle (x, y) \rangle$, a prefix $S_p \subseteq S$, and a
          pair $(x', y') \in R''$ such that $S_p \circ \langle (x', y')
          \rangle$ dominates $S \circ \langle (x, y) \rangle \circ S'$.
          Now we distinguish two cases.

          \begin{itemize}
            \item If $(x', y') \in R$, we prove that there exists a proper
              prefix $S_p' \subset S_0$ such that $S_p' \circ \langle (x', y')
              \rangle$ dominates $S \circ \langle (x, y) \rangle \circ S'$.
              Since $S_0 \subseteq S \circ \langle (x, y) \rangle \circ S'$ and
              $R \subseteq R_0$, this implies that
              $\TreeChildSequencePruned(\T, S_0, k, R_0)$ does not violate the
              lemma, a contradiction.
              If $S_p \subset S_0$, we can set $S_p' = S_p$.
              So assume that $S_0 \subseteq S_p \subseteq S$.
              Since $(x', y') \in R$,
              Cororally~\ref{cor:correct-redundant-pairs} shows that there
              exists a proper prefix $S_p' \subset S_0$ such that $S_p' \circ
              \langle (x', y') \rangle$ dominates $S \circ \langle (x', y')
              \rangle$.
              By Lemma~\ref{lem:transitive}, $S_p' \circ \langle (x', y')
              \rangle$ also dominates $S \circ \langle (x, y) \rangle \circ
              S'$
              \markj{(where $(x,y)$ in Lemma~\ref{lem:transitive} corresponds to $(x',y')$, and $S, S\circ S_1, S\circ S_1 \circ S_2, S\circ S_1 \circ S_2 \circ S_3 \circ \langle(x,y) \rangle \circ S''$ correspond to $S_p', S_0, S_p, S \circ \langle(x,y)\rangle \circ S'$ respectively).}

            \item If $(x', y') \notin R$, then $(x', y') \in R' \setminus R$,
              which implies that $(x', y') \in C \setminus R$ and, therefore,
              $\TreeChildSequencePruned(\T, S_0, k, R_0)$ makes a recursive
              call $\TreeChildSequencePruned(\T, S \circ \langle (x', y')
              \rangle, k, R'')$ before the recursive call
              $\TreeChildSequencePruned(\T, S \circ \langle (x, y) \rangle, k,
              R'')$.
              Since $S \circ \langle (x', y') \rangle$ dominates $S \circ
              \langle (x, y) \rangle \circ S'$,
              Proposition~\ref{prop:dominationCorrectness} shows that there
              exists a solution $S \circ \langle (x', y') \rangle \circ S''$ of
              $(\T, S \circ \langle (x', y') \rangle)$ that satisfies $w(S
              \circ \langle (x', y') \rangle \circ S'') \le w(S \circ \langle
              (x, y) \rangle \circ S')$.
              Since $S \circ \langle (x, y) \rangle \circ S'$ is an optimal
              solution of $(\T, S_0)$, this implies that $S \circ \langle (x',
              y') \rangle \circ S''$ is also an optimal solution of
              $(\T, S_0)$.
              Thus, $(x', y') \in C_{\textrm{opt}}$, a contradiction because
              $(x, y)$ is the first pair in $C_{\textrm{opt}}$.\qedhere
          \end{itemize}
      \end{itemize}
  \end{description}
\end{proof}

\begin{cor}
  \label{cor:sequenceAlgCorrectnessRefined}
  The invocation $\TreeChildSequencePruned(\T, \langle\rangle, k, \emptyset)$
  returns a shortest tree-child cherry picking sequence for $\T$ if there
  exists such a sequence of weight at most $k$.
  Otherwise, $\TreeChildSequencePruned(\T, \langle\rangle, k, \emptyset)$
  returns {\None}.
\end{cor}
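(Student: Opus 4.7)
The plan is to derive the corollary as a direct specialization of Lemma~\ref{lem:sequenceAlgCorrectnessRefined} to the top-level invocation. For $\TreeChildSequencePruned(\T, \langle\rangle, k, \emptyset)$ we have $S_0 = \langle\rangle$ and $R_0 = \emptyset$, so every pair in $S_0$ is vacuously essential and every pair in $R_0$ is vacuously redundant for $S_0$; hence the hypothesis needed to invoke Lemma~\ref{lem:sequenceAlgCorrectnessRefined} at the top level is trivially satisfied.

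Suppose first that $\T$ admits a tree-child cherry picking sequence of weight at most $k$, that is, $(\T, \langle\rangle)$ has a solution of weight at most $k$. Lemma~\ref{lem:sequenceAlgCorrectnessRefined} then leaves two alternatives: either the invocation returns an optimal solution of $(\T, \langle\rangle)$, or there exist a pair $(x, y) \in R_0$ and a proper prefix $S_p \subset S_0$ with the stated domination property. Since $R_0 = \emptyset$ contains no pair (and the empty sequence has no proper prefix), the second alternative is vacuously excluded, so the first must hold and $\TreeChildSequencePruned(\T, \langle\rangle, k, \emptyset)$ returns a shortest tree-child cherry picking sequence for $\T$.

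For the converse, I would verify by a short induction on $|\TreeChildSequencePruned(\T, S, k, R)|$ that every non-$\None$ value returned by any invocation is a genuine tree-child cherry picking sequence for $\T$ of weight at most $k$. In the base case the sequence is produced in line~\ref{lin:ModSolutionFound}, it reduces every tree in $\T$ to the same single leaf $x$, and its weight bound follows from Lemma~\ref{lem:invariant-parameter} exactly as in the corresponding case of the proof of Proposition~\ref{prop:algCorrectness}. The inductive step is immediate because line~\ref{lin:ModRecursiveSolutionEnd} merely propagates a sequence returned by a recursive call, which is valid by the induction hypothesis. Consequently, if $\T$ has no tree-child cherry picking sequence of weight at most $k$, the top-level invocation can only return $\None$.

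The real content of the argument has already been absorbed into Lemma~\ref{lem:sequenceAlgCorrectnessRefined}, so the corollary itself is routine. The only conceptual point worth highlighting, and what I would make explicit in the writeup, is why the top-level case is clean while the lemma itself must leave open the possibility of skipping an optimal solution: deeper in the recursion $R$ is nonempty, and the pairs in $R$ give the algorithm permission to ignore branches dominated by proper prefixes of the current $S_0$. Initialising $R_0 = \emptyset$ removes this loophole at the root of the recursion, and the correctness of the entire procedure reduces to this one observation.
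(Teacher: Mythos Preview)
Your argument is correct and, for the main direction (a solution of weight at most $k$ exists), matches the paper's proof: both apply Lemma~\ref{lem:sequenceAlgCorrectnessRefined} and observe that the escape clause is vacuous because $R_0 = \emptyset$ and $\langle\rangle$ has no proper prefix. For the converse the paper takes a slightly different shortcut: rather than your direct induction showing that any returned sequence is a valid tree-child cherry picking sequence of weight at most $k$, it observes that $\TreeChildSequencePruned$ returns a sequence only when the unrefined $\TreeChildSequence$ would, and then cites Proposition~\ref{prop:algCorrectness}. Both routes are sound; yours is more self-contained, the paper's simply reuses what was already established for the unrefined algorithm. One small remark: the actual hypothesis of Lemma~\ref{lem:sequenceAlgCorrectnessRefined} is merely that the call is an invocation the algorithm makes, not the essential/redundant condition you cite (that condition is the hypothesis of Lemma~\ref{lem:maintain-redundant-pairs}); this does not affect your argument, since the top-level call is trivially such an invocation.
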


\begin{proof}
  If there is no tree-child cherry picking sequence for $\T$ of weight at most
  $k$, then Proposition~\ref{prop:algCorrectness} shows that the invocation
  $\TreeChildSequence(\T, \langle\rangle, k)$ returns {\None}.
  Since each invocation $\TreeChildSequencePruned(\T, S, k, R)$ 
  is easily seen to return a sequence only if $\TreeChildSequence(\T, S, k)$
  returns a sequence, this implies that
  $\TreeChildSequence(\T, \langle\rangle, k, \emptyset)$ returns {\None}
  if there is no tree-child cherry picking sequence of weight at most $k$.

  So assume that there exists a tree-child cherry picking sequence for $\T$
  of weight at most $k$.
  If $\TreeChildSequencePruned(\T, \langle\rangle, k, \emptyset)$ does not
  return a shortest tree-child cherry picking sequence for $\T$, then
  Lemma~\ref{lem:sequenceAlgCorrectnessRefined} states that there exists an
  extension $S$ of $\langle\rangle$, proper prefix $S_p \subseteq
  \langle\rangle$, and a pair $(x, y) \in \emptyset$ such that $S_p \circ
  \langle (x, y) \rangle$ dominates $S$.
  However, neither $S_p$ nor the pair $(x, y)$ can exist.
  Thus,
  $\TreeChildSequencePruned(\T, \langle\rangle, k, \emptyset)$
  returns a shortest tree-child cherry picking sequence for $\T$.
\end{proof}

As already observed in the proof of
Lemma~\ref{lem:sequenceAlgCorrectnessRefined}, each invocation
$\TreeChildSequencePruned(\T, S, k, R)$ makes at most as many recursive calls
as its corresponding invocation $\TreeChildSequence(\T, S, k)$, so the total
number of recursive calls made by the algorithm is still bounded by
$O((8k)^k)$.
Using standard techniques, including binary search trees and integer sorting,
and a careful implementation of
lines~\ref{lin:ModLoopTrivialCherryReductionStart}--\ref{lin:ModLoopTrivialCherryReductionEnd}
that avoids calling $\UpdateR$ in each iteration,
it is possible to show that the cost per recursive call remains
$O(nt \lg t)$, including the cost to query and maintain $R$.
Thus, the worst-case running time of the algorithm remains $O((8k)^k nt \lg t +
nt \lg nt)$.
Since we are interested in using redundant branch elimination mainly as a
heuristic improvement of the running time of the algorithm in practice, we do
not prove this here.
Note that redundant branch elimination is a heuristic only as far as improving
the running time is concerned;
Corollary~\ref{cor:sequenceAlgCorrectnessRefined} above shows that it
preserves the algorithm's correctness.

\section{Implementation and Experiments}

\label{sec:experiments}

In order to evaluate the usefulness of the algorithm presented in this paper,
we implemented it and ran experiments on synthetic and realistic inputs to
answer the following questions:
\begin{itemize}[noitemsep]
  \item How difficult inputs can our algorithm handle, both in terms of
    the number of reticulations in the computed network and the number of
    trees in the input?
  \item How does the running time of our algorithm compare to that of its
    closest competitor, \textsc{HybroScale}?
\end{itemize}
The answer to this second question is that, for inputs with at least 3 trees,
our algorithm runs significantly faster than \textsc{HybroScale}.
Since \textsc{HybroScale} computes optimal hybridization networks, without any
restrictions on their structure, while our algorithm computes optimal
\emph{tree-child} networks, we effectively buy this faster running time at the
price of restricting the types of outputs we can compute and, consequently,
possibly missing some optimal networks that are not tree-child.
This raises the following natural question:
\begin{itemize}
  \item For inputs for which both our algorithm and \textsc{HybroScale} were
    able to compute a network, by how much did the reticulation numbers of the
    computed networks differ?
\end{itemize}

The discussion of our experimental results is divided into the following
subsections:
Section~\ref{sec:environment} discusses the hardware and software environment
on which we ran our experiments, as well as some high-level characteristics
of our implementation.
The complete source code, test data, and the programs we used to prepare
the test data are available from
\verb|https://github.com/nzeh/tree_child_code|, including detailed
documentation.
Section~\ref{sec:data} describes the data sets used in our experiments.
Section~\ref{sec:parameters} briefly discusses the tuning parameters of our
implementation used throughout our experiments.
Section~\ref{sec:results} discusses our experimental results.

\subsection{Evaluation Environment and Some Implementation Details}

\label{sec:environment}

Our evaluation platform was a Linux system with a quad-core Intel Xeon W3570
running at 1.7GHz and 24GB of DDR3 RAM clocked at 1333MHz.
The operating system was Debian GNU/Linux 9 with a 4.19.46-64 Linux kernel.
Our code for computing a tree-child network was implemented in Rust version
1.27.0.
Hybroscale was implemented in Java, and we used Java version 1.8.0\_161 to run it.

Our code implements procedure~\procref{alg:tree-child-sequence-refined}, that
is, it uses redundant branch optimization.
It also uses a number of additional optimizations:
\begin{description}
  \item[Check for redundant pairs using occurrence counts:] The check for
    redundant pairs (pairs in $R$) was implemented by recording for each cherry
    $\{x, y\}$ of $\reduce{\T}{S}$ how many trees contained the cherry $\{x,
      y\}$ the last time an ancestor invocation made a recursive call
    $\TreeChildSequencePruned(\T, S \circ \langle (x, y) \rangle, k, R)$ or
    $\TreeChildSequencePruned(\T, S \circ \langle (x, y) \rangle, k, R)$.
    It is easy to verify that $(x, y)$ is redundant for the current sequence
    $S$ if and only if the number of trees in $\reduce{\T}{S}$ that contain
    the cherry $\{x, y\}$ is the same as the number of trees that contained
    the cherry $\{x, y\}$ the last time a recursive call
    $\TreeChildSequencePruned(\T, S \circ \langle (x, y) \rangle, k, R)$
    was made.
  \item[No copying of an invocation's state for each recursive call:]
    The state of each invocation (current set of trees, set of trivial
    cherries, set of non-trivial cherries, partial tree-child cherry picking
    sequence, and information about the cherries and trees containing each
    leaf) is fairly large.
    To avoid the overhead of copying this state for each recursive call,
    each recursive call instead modifies its parent invocation's state without
    making a copy.
    These modifications are recorded in a log and are undone when the recursive
    call returns, thereby restoring the parent invocation's state.
  \item[\boldmath Search for the optimal $k$:] The search for an optimal
    tree-child cherry picking sequence calls the procedure
    $\TreeChildSequencePruned(\T, \langle\rangle, k, \emptyset)$ with
    increasing values of $k$ until it reports success.
    This guarantees that the parameter $k$ is no larger than the tree-child
    hybridization number of each input.
  \item[Parallelization:] The different branches of the recursive search for an
    optimal tree-child cherry picking sequence are clearly independent and can
    thus be assigned to different threads of a parallel implementation of
    procedure~\procref{alg:tree-child-sequence-refined}.
    One challenge is that, especially in the presence of redundant branch
    elimination, the computational costs of different branches can differ
    substantially.

    To balance the load between threads, we implemented a work sharing
    scheduler that allows idle threads to send messages to busy threads to
    request part of their workload.
    In response to such a request, the busy thread sends a branch on its
    recursion stack that is yet to be explored to the requesting thread.
    In the interest of minimizing the number of messages exchanged between
    threads, the busy thread always shares the next branch from the
    \emph{bottom} of its recursion stack, hopefully corresponding to a large
    subtree in the algorithm's recursion.

    The communication protocol was implemented using light-weight spinlocks
    to minimize the amount of time busy threads spend on communicating with
    other threads.
  \item[Cluster reduction:] Cluster reduction
    \cite{linz-binary-clustering,baroni-binary-clustering} has been observed
    to be the most important optimization in phylogenetic network construction
    methods for pairs of trees \cite{Li2017Clustering}.
    While we expect cluster reduction to be less effective for more than two
    trees, our implementation still applies cluster reduction because it is
    relatively cheap and should still have a significant impact on the
    algorithm's running time for real-world inputs.
\end{description}

In order to complete all our experiments in a reasonable amount of time, we
limited every run of our algorithm or of \textsc{HybroScale} to 60 minutes.
If the algorithm did not produce a result within this time limit, we consider
this input to be unsolvable by the algorithm in the context of this evaluation.

\subsection{Test Data}

\label{sec:data}

We used synthetic and real-world data for the performance evaluation of our
algorithm.

\subsubsection{Synthetic Data}

To generate a test instance with $t$ trees over a set of $n$ leaves and with
tree-child hybridization number close to $k$, we generated a random
tree-child network $N$ on $n$ leaves and with $k$ reticulations.
Then we extracted a random set of $t$ trees displayed by $N$.

\paragraph{Network generation.}

To generate the network $N$, we initialized $N$ to be a tree with two leaves.
A network with $n$ leaves and $k$ reticulations can then be obtained by adding
$s_r = n + k - 2$ tree nodes and $k_r = k$ reticulations to $N$.
The total number of non-leaf nodes to be added is $s_r + k_r$.
Thus, as long as $s_r > 0$ and $k_r > 0$, we added either a tree node or a
reticulation.

To add a tree node, we chose an existing leaf $u$ and added two new leaves $v$
and $w$ with parent $u$.
This turns $u$ into a tree node while not affecting any existing reticulations
or tree nodes.
Thus, $s_r$ decreases by one while $k_r$ remains unchanged.

To add a reticulation, we choose two leaves $u$ and $v$; merge $v$ into $u$,
making $u$ and $v$ the same node; and then add a new leaf $w$ with parent $u$.
This turns $u$ into a reticulation while not affecting any existing
reticulations or tree nodes.
Thus, $k_r$ decreases by one while $s_r$ remains unchanged.

In order to ensure that the network is tree-child, the two nodes $u$ and $v$
to be merged are chosen from the set $M$ of all nodes whose parents and
siblings are tree nodes.
We also ensure that the network has no parallel edges by picking $u$ and $v$
so that they have different parents.
Thus, if $|M| = 1$ or $|M| = 2$ and the two nodes in $M$ have the same parent,
then there exist no two nodes $u$ and $v$ that can be added while keeping the
network tree-child and not introducing any parallel edges.
In this case, we add a new tree node.
If it is possible to add a reticulation node, then we add a tree node
with probability $\frac{s_r}{s_r + k_r}$ and a reticulation with probability
$\frac{k_r}{s_r + k_r}$.

If we add a tree node, we choose the leaf $u$ to be turned into a tree node
uniformly at random from the current set of leaves.

If we add a reticulation, we choose $u$ and $v$ uniformly at random from the
set $M$.  If the two chosen nodes $u$ and $v$ have the same parent, we repeat
this selection process until they do not.

This random addition of tree nodes and reticulations continues until $s_r = 0$
or $k_r = 0$.
If $k_r = 0$ and $s_r > 0$, we keep adding tree nodes using the procedure above
until $s_r = 0$.
If $s_r = 0$ and $k_r > 0$, we keep adding reticulations using the procedure
above until either $k_r = 0$ or it is impossible to add more reticulation
because either $|M| = 1$ or $|M| = 2$ and the two leaves in $M$ have the same
parent.

\paragraph{Tree generation.}

We select $t$ \yuki{(or fewer)} trees displayed by $N$ by repeating the following process:
Delete one of the parent edges of each reticulation in $N$ uniformly at random
and suppress every node with only one child in the resulting tree.
\yuki{If the newly generated tree already exists within the list of trees (with the same Newick representation) then we do not add it to the list. We maintain a count on the number of times this occurs. Once this count reaches~$100$ or if we have~$t$ trees in our list then we terminate the process and return the trees.}

Note that the set of trees we obtain using this process may have tree-child
hybridization number less than $k$.
First, the network generation does not guarantee that we obtain a
network with $k$ reticulations if we stop the network generation with a value
of $k_r > 0$ and without any pairs of leaves that can still be unified.
Second, since we return only a subset of the trees displayed by $N$, there may
exist a tree-child network with fewer reticulations than $N$ that also displays
this set of trees.

\subsubsection{Real-World Data}

The real-world data we used in our experiments
was derived from a collection of gene trees for 159,905 distinct
homologous gene sets found in a set of 1,173 bacterial and archaeal genomes.
These gene trees were constructed by Beiko and are described in more detail in
\cite{Beiko_2011}.
They were also used as a test data set, for example, in the
evaluation of a method for constructing SPR supertrees
\cite{whidden2014supertrees}.
Beiko's data set (as almost every real-word data set) poses two challenges for
our algorithm.
First, bipartitions with low support in this data set were collapsed, so
the input trees are multifurcating.
Second, since not all genes are present in all taxa, the label sets of
the input trees differ.

To obtain a collection of binary trees over the same label set, we used
a two-step process:
First, given the desired number of leaves $n$ as a parameter, we selected a
subset of $n$ taxa $X$ and all trees that contain all of these taxa.
Then we restricted the selected trees to the chosen label set $X$, thereby
obtaining a collection of multifurcating trees over this set of $n$ taxa.
Second, we resolved multifurcations in these trees to obtain a collection
of binary trees.
If we had resolved multifurcations randomly, it would have been very likely
that any network displaying the constructed trees contains many reticulations
that result only from inconsistent resolutions of the input trees.
To avoid this, we introduced inconsistent resolutions into different input
trees only if the input trees forced us to do so.
This procedure is described in more detail below and at
\verb|https://github.com/nzeh/tree_child_code|.

We did not evaluate whether the resulting trees are biologically plausible
(beyond the degree to which every binary resolution of a well supported
multifurcating tree is plausible).
Our only goal was to construct a test data set whose characteristics,
in terms of number of reticulations and existence of clusters that allow
the input to be decomposed into easier inputs, resemble those of typical
real-world inputs, in order to evaluate the usefulness of our algorithm
to construct phylogenetic networks for non-trivial real-world inputs.

\paragraph{Selection of leaf set and trees.}

To extract as many trees with a given number of common leaves $n$, we used the
following strategy:
We started with an empty set of leaves $X = \emptyset$ and the entire set of
159,905 input trees $\T$.
Then we repeated the following process $n$ times:
Let $Y$ be the set of all unique taxa of the trees in $\T$ and let $x \in Y
\setminus X$ be a taxon that occurs in the maximum number of trees in $\T$.
Then we added $x$ to $X$ and discarded all trees from $\T$ that did not contain
$x$.
At the end of this iterative process, we obtained a set of trees $\T$ that
contained all taxa in $X$.
As already mentioned, the next step was to restrict every tree in $\T$ to the
label set $X$.

\paragraph{Binary resolution.}

Binary resolutions were obtained by repeating the following process until
all trees were binary:
Inspect the trees in $\T$ in an arbitrary order.
For each tree, inspect its multifurcations in an arbitrary order.
For each multifuraction $u$, consider all pairs $\{v, w\}$ such that $v$ and
$w$ are children of $u$.
For each such pair, count the number of resolved triples (triplets of the
form $ab|c$ as opposed to $a|b|c$) that would be introduced by resolving
$\{v, w\}$ (that is, by making $v$ and $w$ children of a new node $u'$ and
making $u'$ a child of $u$) and which are also present in at least one other
tree in $\T$.

If there exists such a pair $\{v, w\}$ with at least one introduced resolved
triplet that exists also in some other tree in $\T$, then resolve the pair
$\{v, w\}$ that maximizes the number of introduced resolved triplets that
exist on other trees.
If no such pair is found, then move on to the next multifurcation in the
current tree or to the next tree if there are no more multifurcations left
to inspect in the current tree.

If the above steps resolve at least one multifurcation, then start another
iteration.
Otherwise, pick an arbitrary multifurcation in one of the trees and a random
pair of children of this multifurcation and resolve it.
Then start another iteration.
(This random resolution will be matched by all other trees in the next
iteration, thus forcing consistency between the trees.)

\paragraph{Test instances.}

By running the above procedure with parameter $n \in \{10, 20, 30, 40, 50, 60,
  80, 100, 150\}$, we generated tree sets with this number of leaves and with
between 21 and 1,684 trees for $n = 150$ and $n = 20$, respectively.
To obtain an input with a given number of leaves $n$ and a given number
of trees $t$, we selected $t$ of the trees with $n$ leaves uniformly at random.

\subsection{Parameter Tuning}

\label{sec:parameters}

Our implementation of procedure~\procref{alg:tree-child-sequence-refined}
accepts a number of command-line arguments, mainly to facilitate the type of
performance evaluation we conducted.
The most important options are turning cluster reduction on and off,
turning redundant branch elimination on or off, configuring the number of
threads across which to distribute the algorithm's work, and controlling how
frequently busy threads check for work requests from idle threads.
More threads allow the operating system to help with load balancing but too
many threads result in scheduling overhead.
Similarly, frequent checks for work requests from idle threads help with load
balancing by ensuring that idle threads never remain idle for too long but
increase the overhead that slows down busy threads.

In preliminary experiments, we determined that we obtained the best performance
using 8 threads (\verb|-p 8|) on our system.
The frequency of checks for work requests had negigible impact on the
algorithm's performance as long as idle threads do not wait for work for
too long.
Throughout the experiments discussed here, we made a busy thread check for
work requests from idle threads every 100 iterations through its main
loop (\verb|-w 100|).
Cluster reduction never hurt performance but helped substantially on most
real-world inputs, so we never turned it off.
Since redundant branch elimination is a potentially important optimization
of our algorithm discussed in Section~\ref{sec:redundant-branch-elimination},
we dedicate a separate section to discussing its impact on the algorihm's
performance.

\subsection{Results}

\label{sec:results}

\subsubsection{Does Redundant Branch Elimination Help?}

Our first experiments concerned whether redundant branch elimination helps
to reduce the running time of the algorithm in practice.
To evaluate this, we ran the algorithm with redundant branch elimination
on a synthetic data set.
For the runs with redundant branch elimination, we used
three test inputs for every possible combination of the following parameters:
\begin{itemize}[noitemsep]
\item Number of trees: $t \in \{2,5,10,15,20,50,100\}$
\item Number of reticulations: $k \in \{2,3,4,5,6,7,8,9,10,11,12\}$
\item Number of leaves: $n \in \{20,50,100,150,200\}$
\end{itemize}
resulting in a set of 1,155 inputs.
The algorithm was able to solve 1,016 of these inputs within the 1-hour time
limit.
Without redundant branch elimination, the algorithm was not able to solve
any synthetic inputs with $k > 8$ within the time limit.
Of the 735 inputs with $k \le 8$, it was able to solve 658 inputs within
the time limit.

Figure~\ref{fig:BranchReduction} shows the speed-up achieved by using redundant
branch elimination on the 658 inputs the algorithm was able to solve without
it.  
As can be seen, the effect of redundant branch elimination increases with
increasing reticulation number and, correspondingly, with increasing running
time of the algorithm, reaching a speed-up of up to 1,000 on some instances
with 6 and 7 reticulations.

Figure~\ref{fig:Raw+-BR} shows that redundant branch elimination increases the
difficulty of inputs our algorithm can solve within the 1-hour time limit.
Without branch reduction, the algorithm was able to solve all instances with
reticulation numbers up to 6 and some instances with up to 8 reticulations.
With redundant branch reduction, the algorithm was able to solve all instances
with reticulation numbers up to 8 and some instances with up to 11
reticulations.

\begin{figure}[p]
  \includegraphics[width=.4\textwidth]{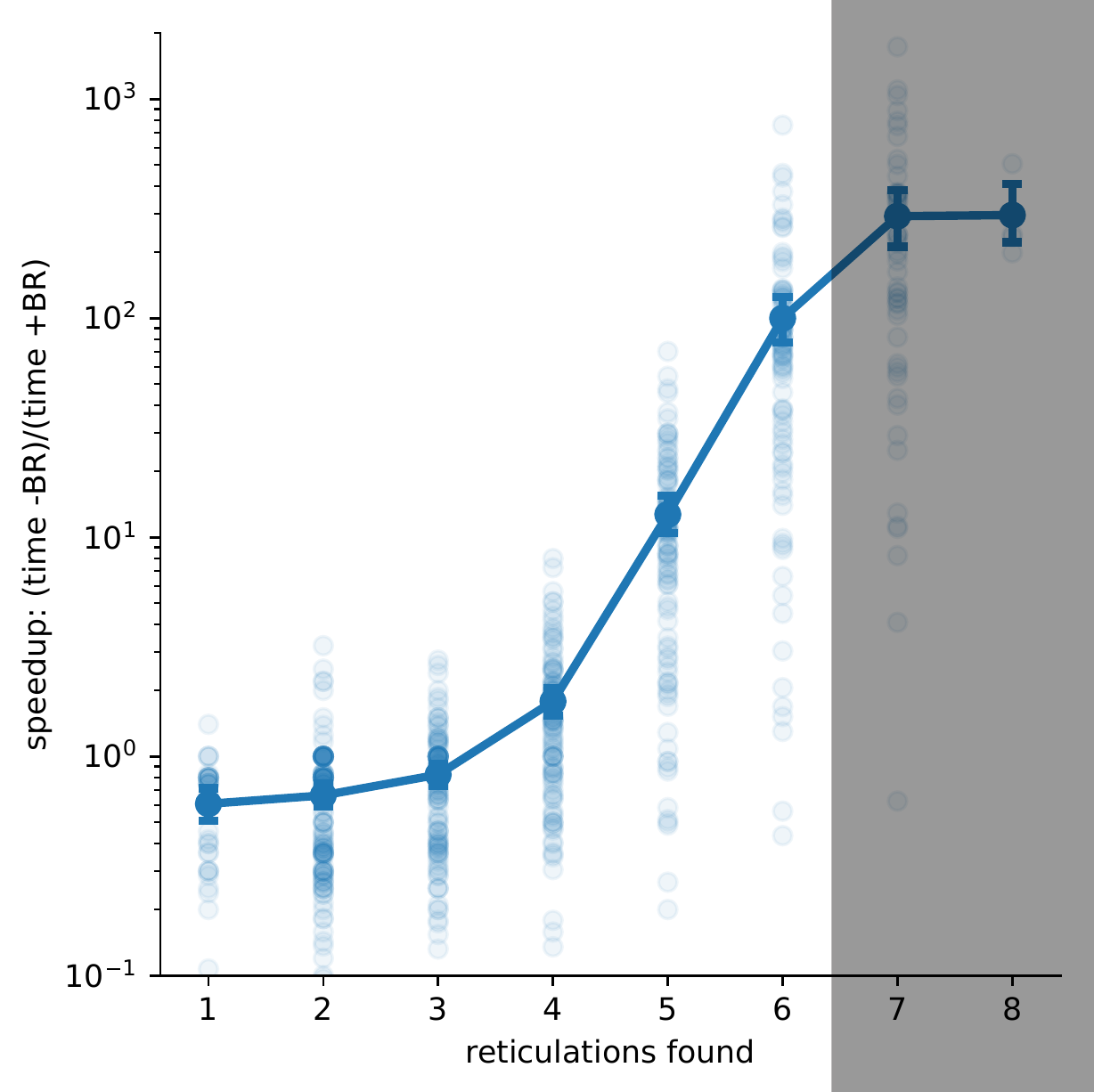}%
  \hspace{\stretch{1}}%
  \includegraphics[width=.56\textwidth]{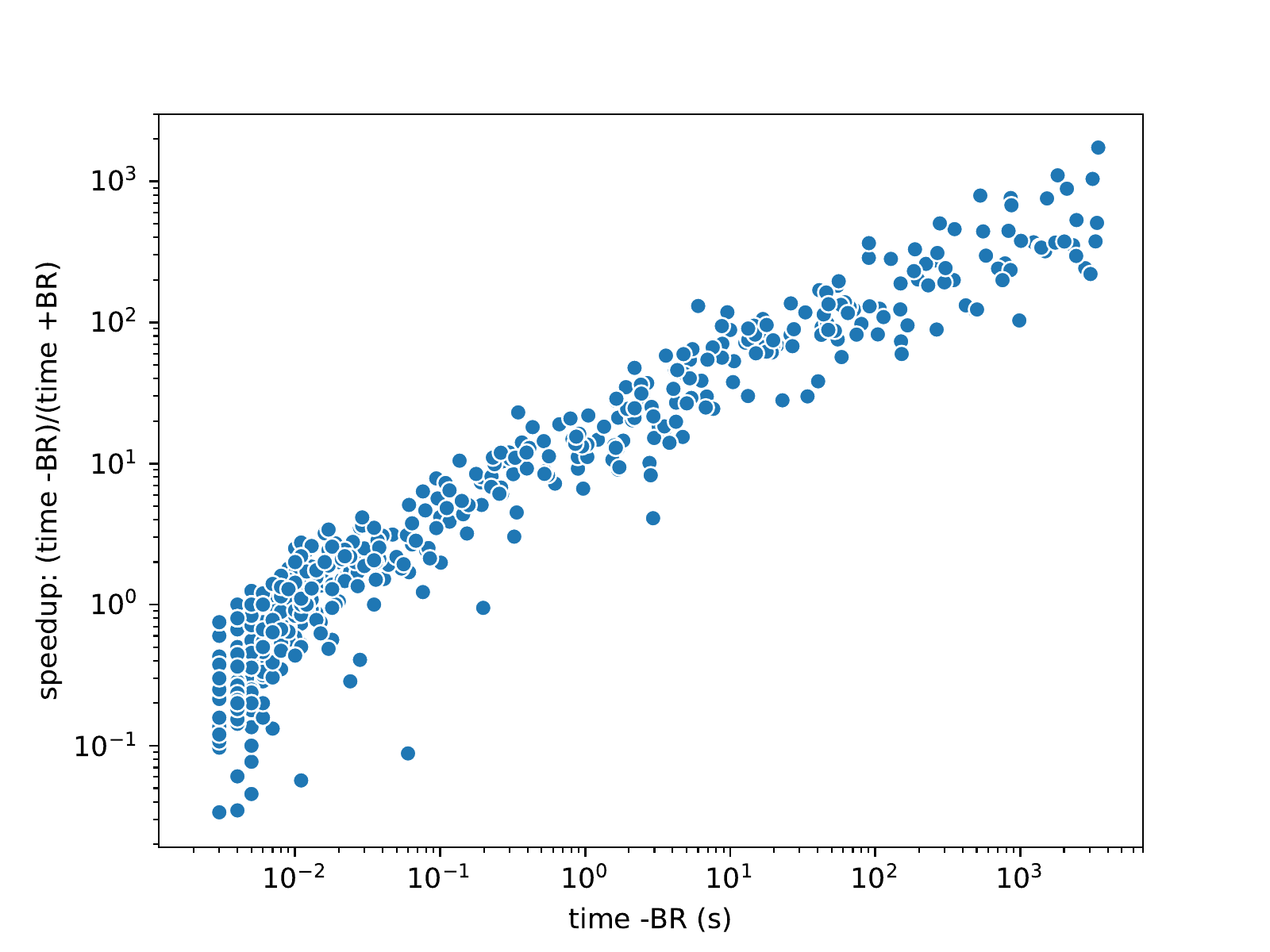}\hspace{-20pt}
  \caption{The speed-up (running time without redundant branch elimination
    divided by the running time with redundant branch elimination) achieved by
    redundant branch elimination on 658 instances solvable with and without
    redundant branch elimination. (a) as a function of the number of
    reticulations and (b) as a function of the running time without redundant
    branch elimination.
    The shading of reticulation numbers 7 and 8 indicate that not all inputs
    with 7 or 8 reticulations were solved by the algorithm, so particularly
    the flattening of the curve may be the result of limiting the running
    time of the algorithm and testing only a restricted set of inputs.
    We would expect that the effect of redundant branch elimination keeps
    increasing as the number of reticulations increases, given that the seems
    to be no plateauing of the speed-up as a function of running time in
    Figure~(b).}\label{fig:BranchReduction}
\end{figure}
\begin{figure}[p]
  \centering
  \includegraphics[width=.48\textwidth]{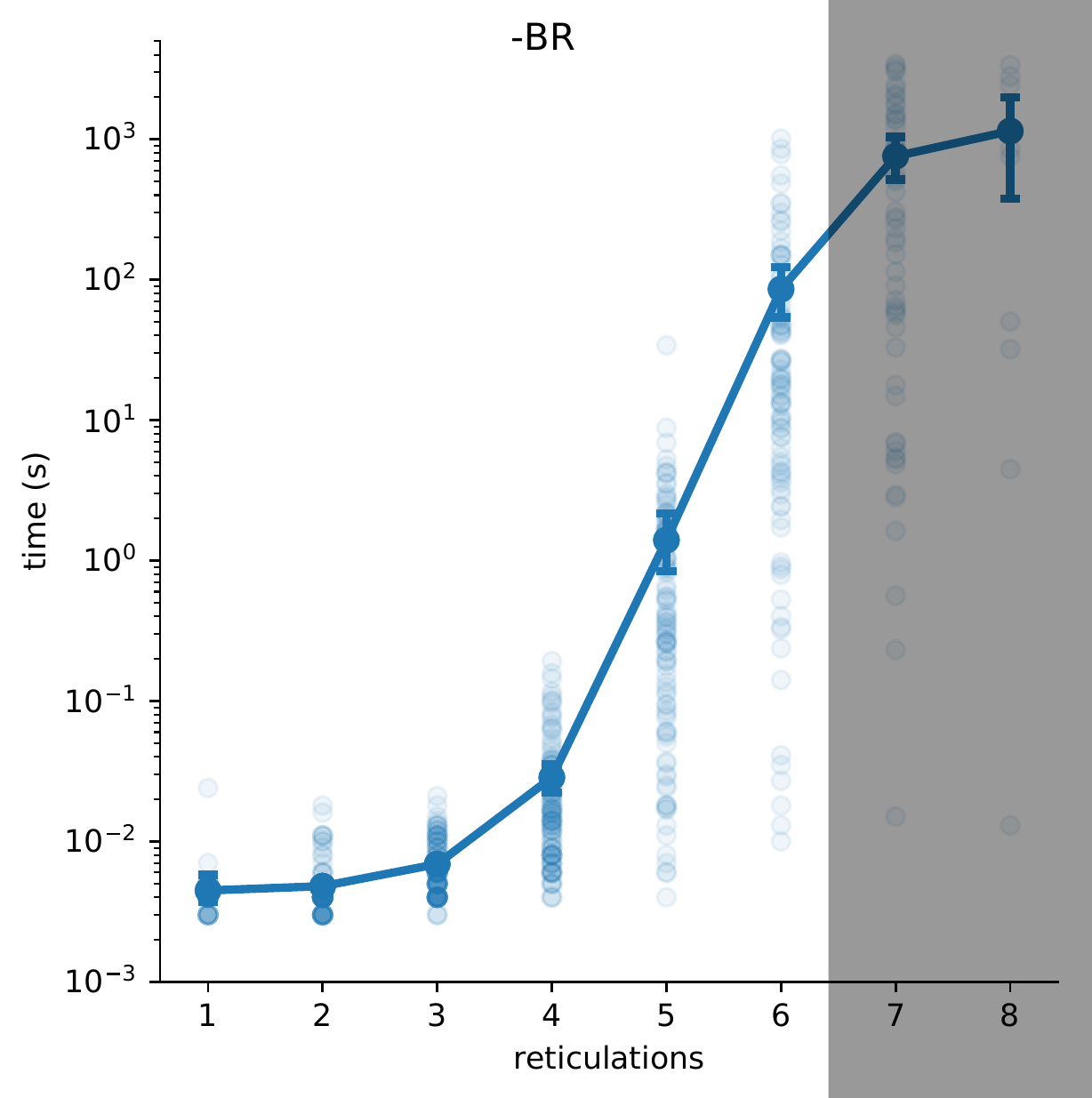}
  \includegraphics[width=.48\textwidth]{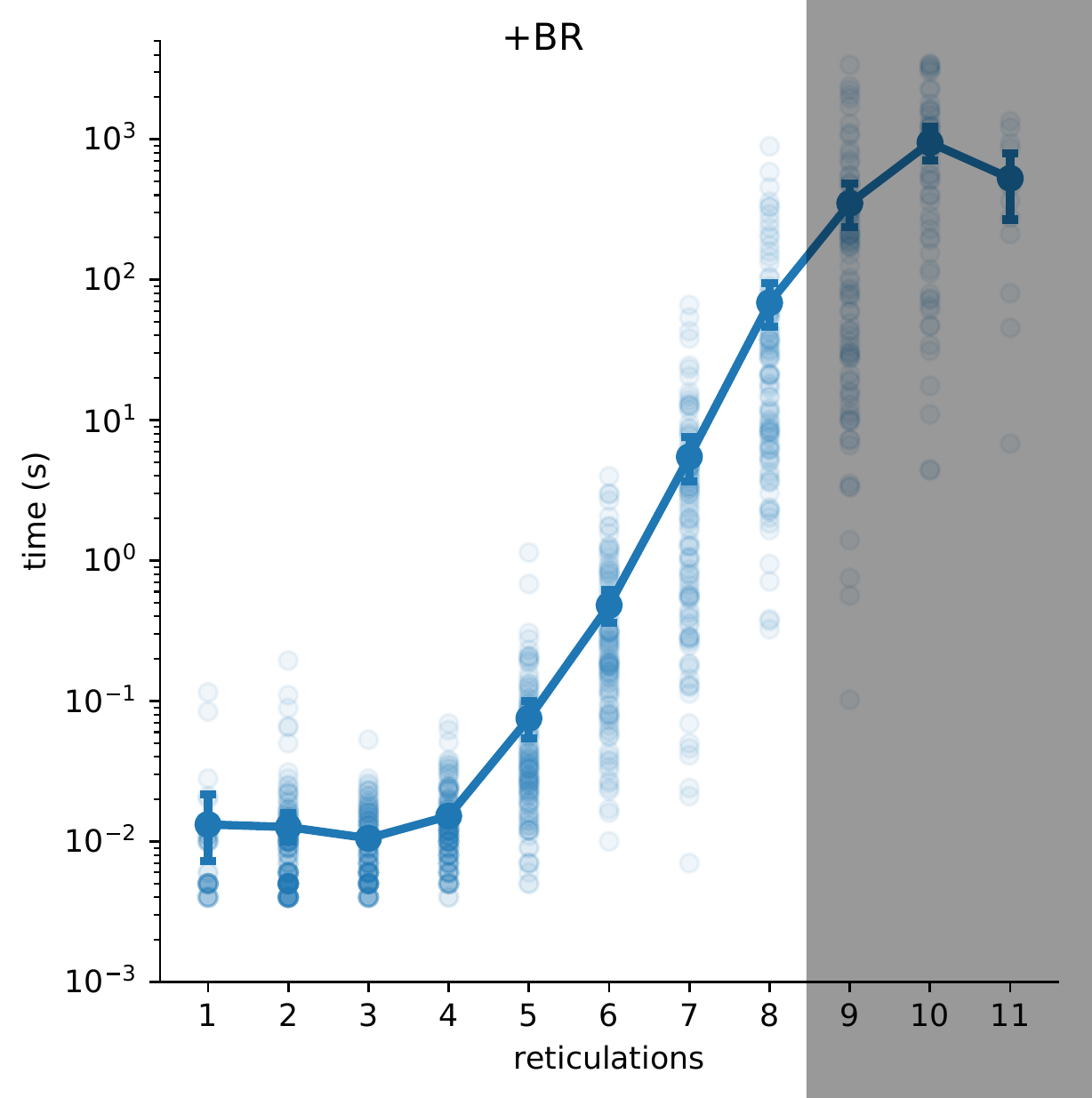}
  \caption{Running times of our algorithm with and without redundant branch
    elimination, as functions of the number of reticulations.
    As in Figure~\ref{fig:BranchReduction}, the shaded regions indicate
    reticulation numbers for which not all input instances were solved
    within the 1-hour time limit.
    Transparent dots are data points, opaque dots indicate the average together
    with the 95\% confidence intervals.}\label{fig:Raw+-BR}
\end{figure}

\subsubsection{Real-World Inputs That Can Be Solved}

\label{sec:real-data}

Our next experiment tested whether we can solve real-world instances with
non-trivial numbers of reticulations efficiently using our algorithm.
For this experiment, we extracted 10 test instances from the real-world
data set for every possible combination of the following parameters:

\begin{itemize}[noitemsep]
  \item Number of trees: $t \in \{2, 3, 4, 5, 6, 7, 8\}$
  \item Number of leaves: $n \in \{10, 20, 30, 40, 50, 60, 80, 100, 150\}$
\end{itemize}

\begin{figure}[t]
  \includegraphics[width=.58\textwidth]{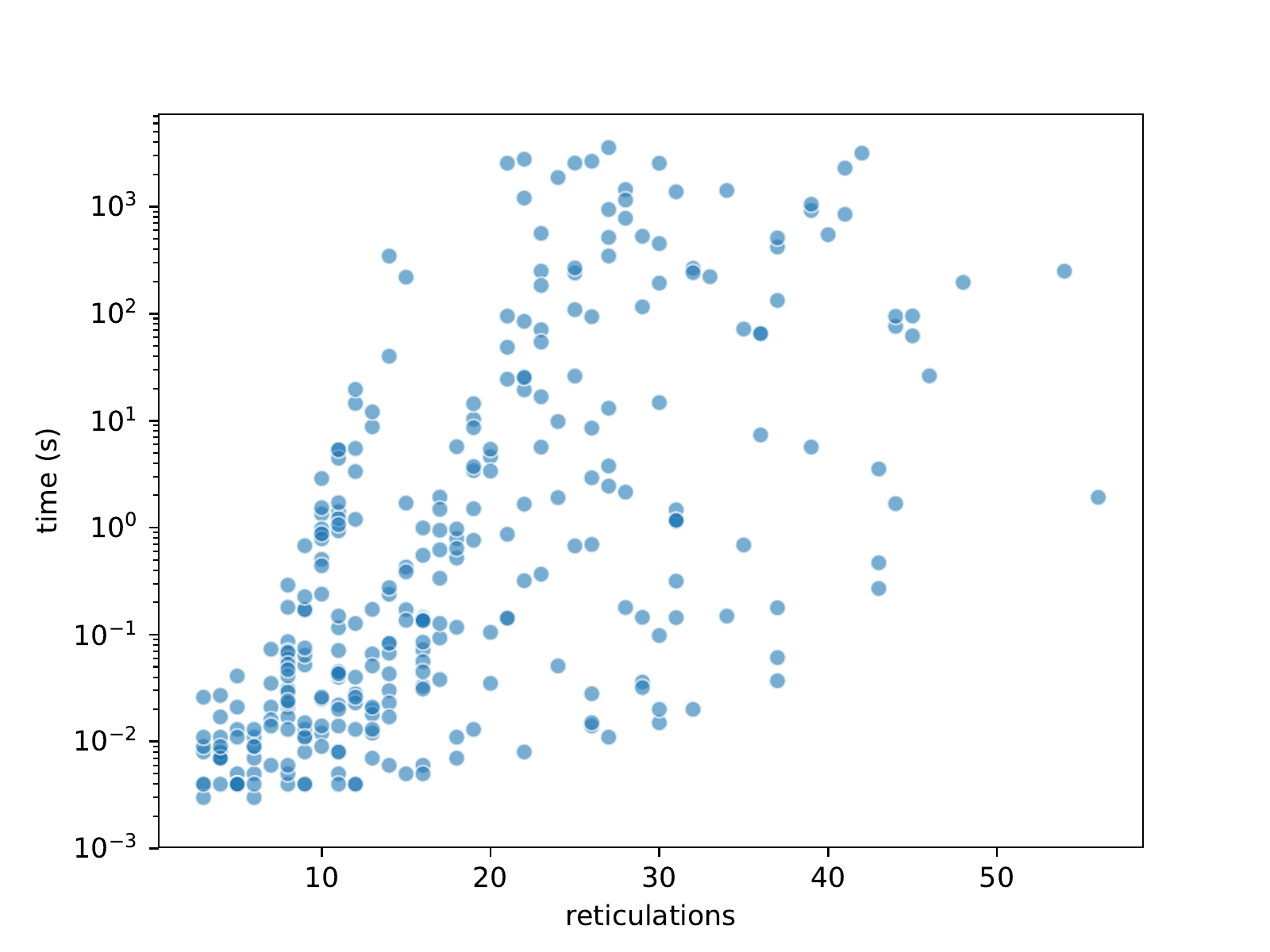}%
  \hspace*{\stretch{1}}%
  \includegraphics[width=.4\textwidth]{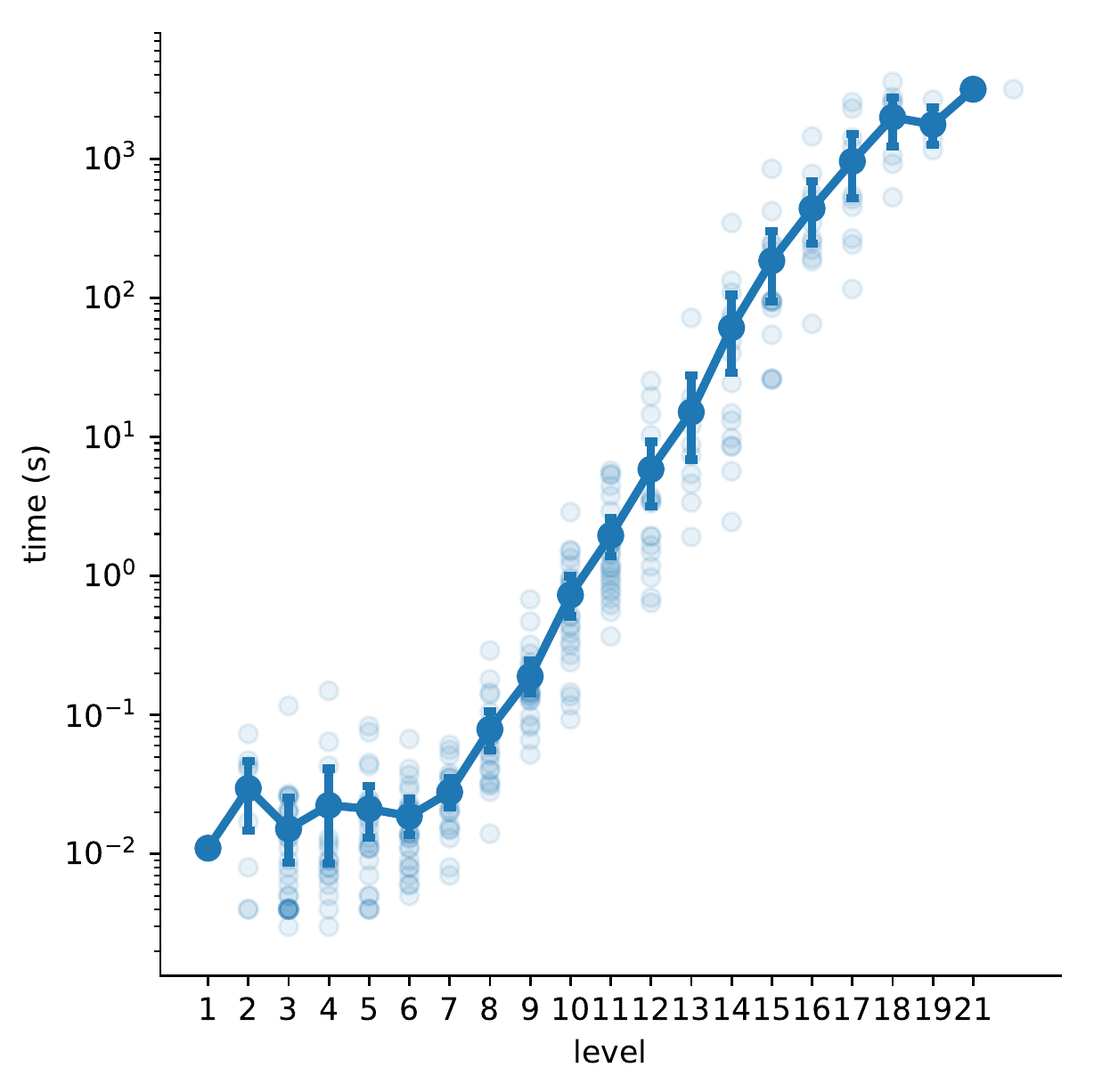}
  \caption{Running times of our algorithm on real-world data as a function
    of the reticulation number (left) or the level
    (right).}\label{fig:RealData}
\end{figure}
\begin{figure}[t]
  \includegraphics[width=.45\textwidth]{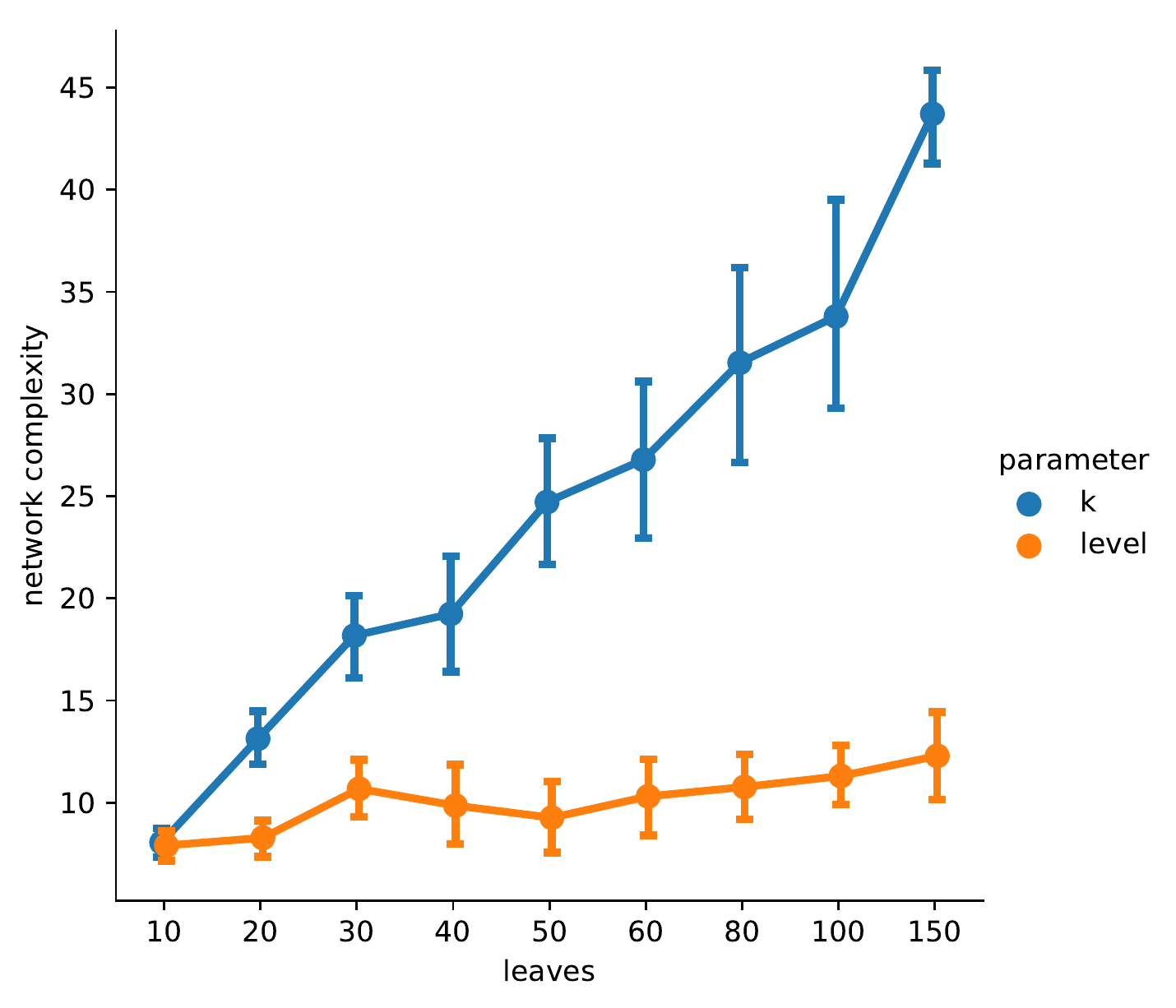}%
  \hspace{\stretch{1}}%
  \includegraphics[width=.45\textwidth]{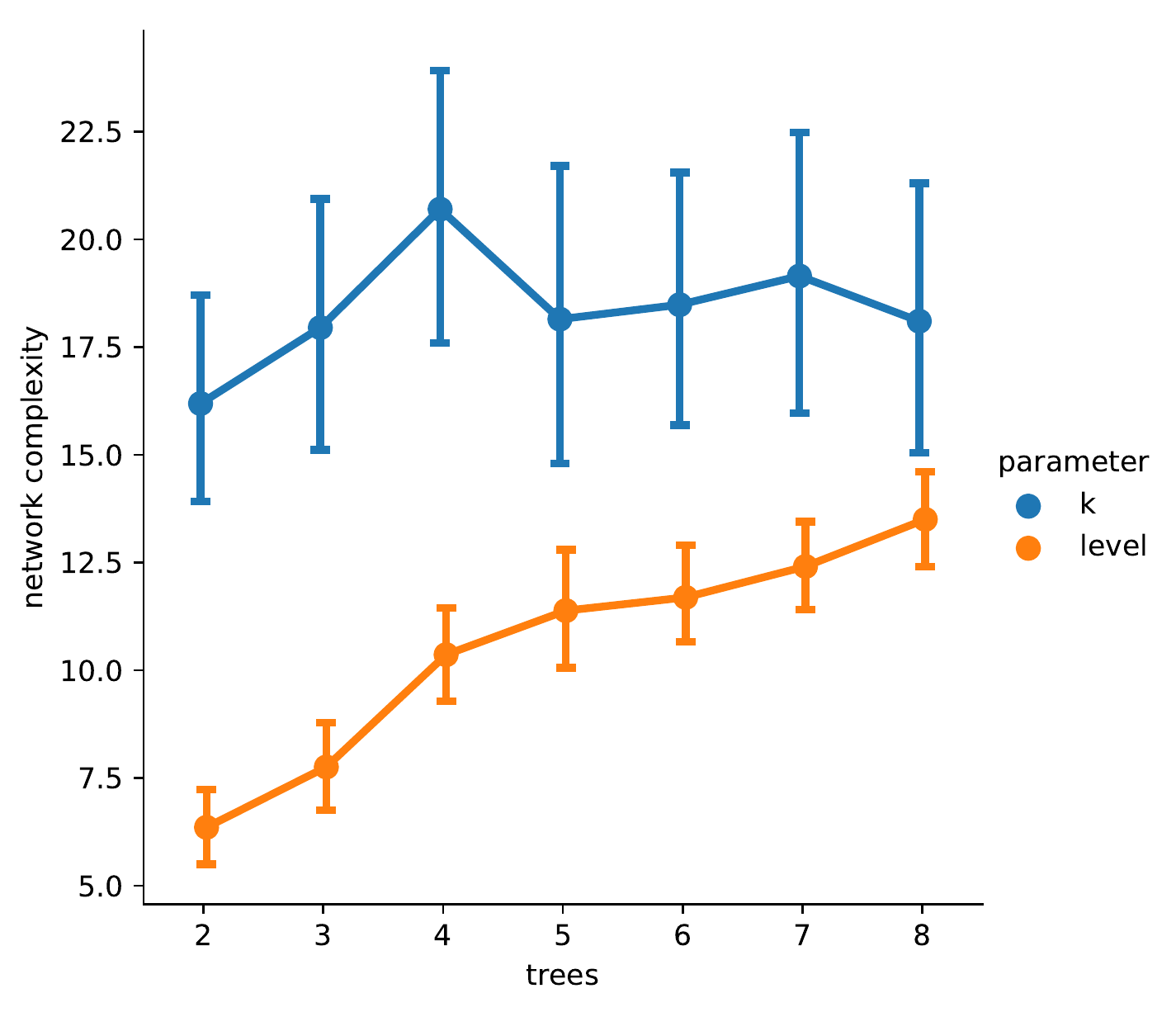}
  \caption{The reticulation number and the level as a function of the number of
    leaves and trees in the real-world inputs.}\label{fig:RealDataTreesLeaves}
\end{figure}

The algorithm was run with redundant branch elimination but with cluster
reduction.
Of the 630 test inputs, our algorithm was able to solve 306 within the 1-hour time limit.
The left graph in Figure~\ref{fig:RealData} shows the running time of our
algorithm on the instances it was able to solve as a function of the number
of reticulations.
We make two important observations:
First, even though our algorithm was not able to solve any synthetic inputs
with more than 11 reticulation even with redundant branch elimination turned
on, it was able to solve real-world inputs with up to 50 reticulations.
Second, the running time varies greatly across instances with the same number
of reticulations.
Both observations can be explained by the fact that the real-world data has
much more structure and can be decomposed into non-trivial clusters.
Figure~\ref{fig:RealDataTreesLeaves} shows the number of reticulations and
the level of the real-world inputs as a function of the number of trees.
These figures demonstrate that the network levels are significantly lower than
the number of reticulations, something that was observed for inputs consisting
of two trees and which is the key to the fast running times of MAF-based
algorithms for pairs of trees.
It comes a bit of a surprise that the same is true also for more than two
trees.
However, the right graph in Figure~\ref{fig:RealDataTreesLeaves} demonstrates
that the gap between level and reticulation number narrows as the number of
trees increases.

Using cluster reduction, the running time of the algorithm is determined by the
level of the computed network rather than the reticulation number.
Thus, the right graph in Figure~\ref{fig:RealData} shows the running time as a
function of the level of the computed network.
This figure highlights another important fact:
We were able to solve real-world instances with level up to 21 whereas level
11 was the limit for synthetic inputs.
This suggests that even the clusters seem have significantly more structure
than random instances, which allows the algorithm to branch on fewer
non-trivial cherries in each recursive call than on synthetic instances.

\subsubsection{Dependence of the Running Time on the Number of Trees and
  Number of Leaves}

The theoretical analysis of our algorithm predicts an exponential dependence
of its running time only on the number of reticulations $k$, whereas the
running time should only depend nearly linearly on both $n$ and $t$.
To verify this, we divided the observed running times, for each value of
$k$ between $1$ and $8$, by $n$ and then by $t$.
Figures~\ref{fig:NorbertTreesLeavesCorrected} shows the results.
The negative slopes of these curves confirms that the running time in
practice depends at most linearly on each of $n$ and $t$.

\begin{figure}[t]
  \includegraphics[width=.48\textwidth]{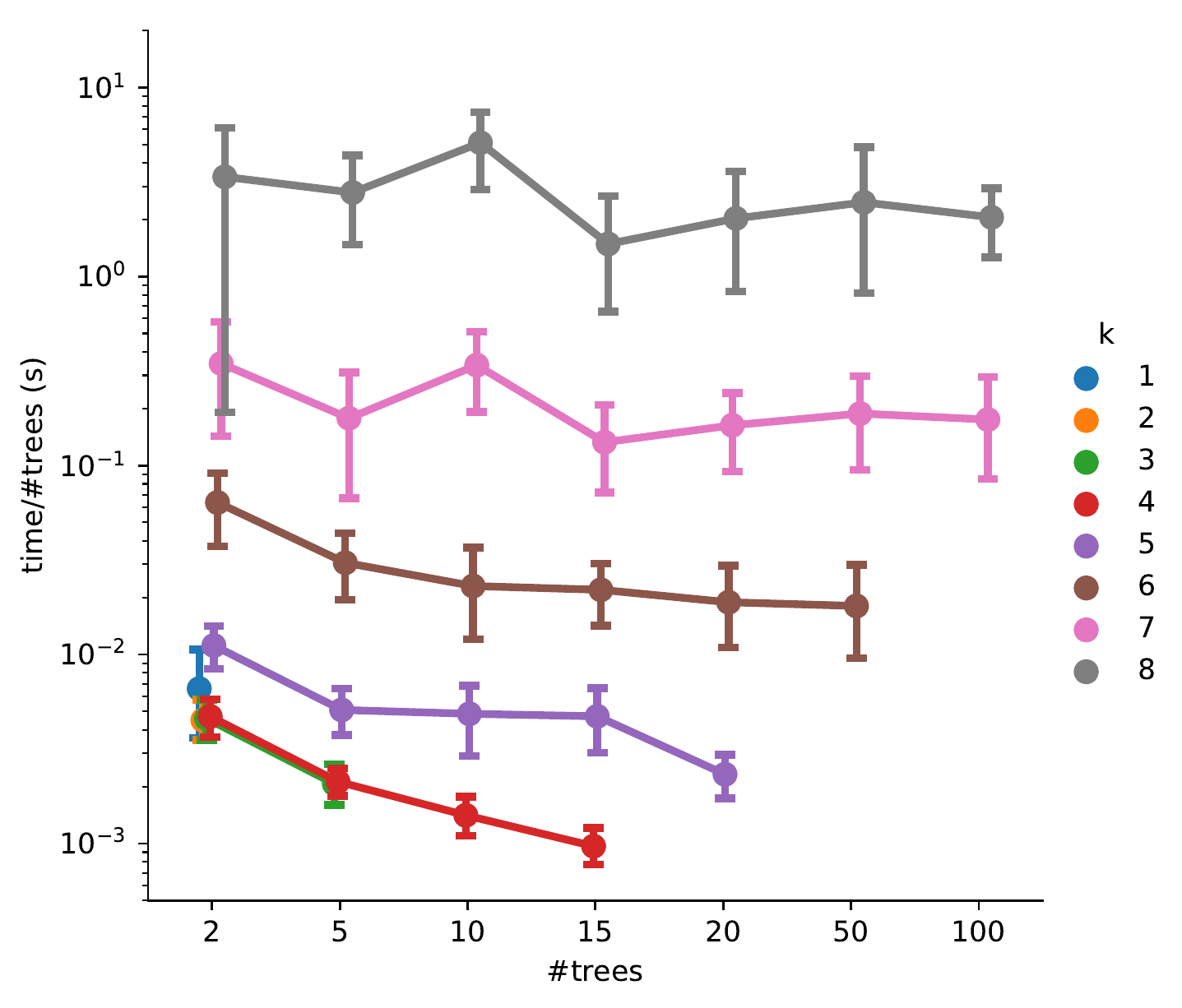}%
  \hspace{\stretch{1}}%
  \includegraphics[width=.48\textwidth]{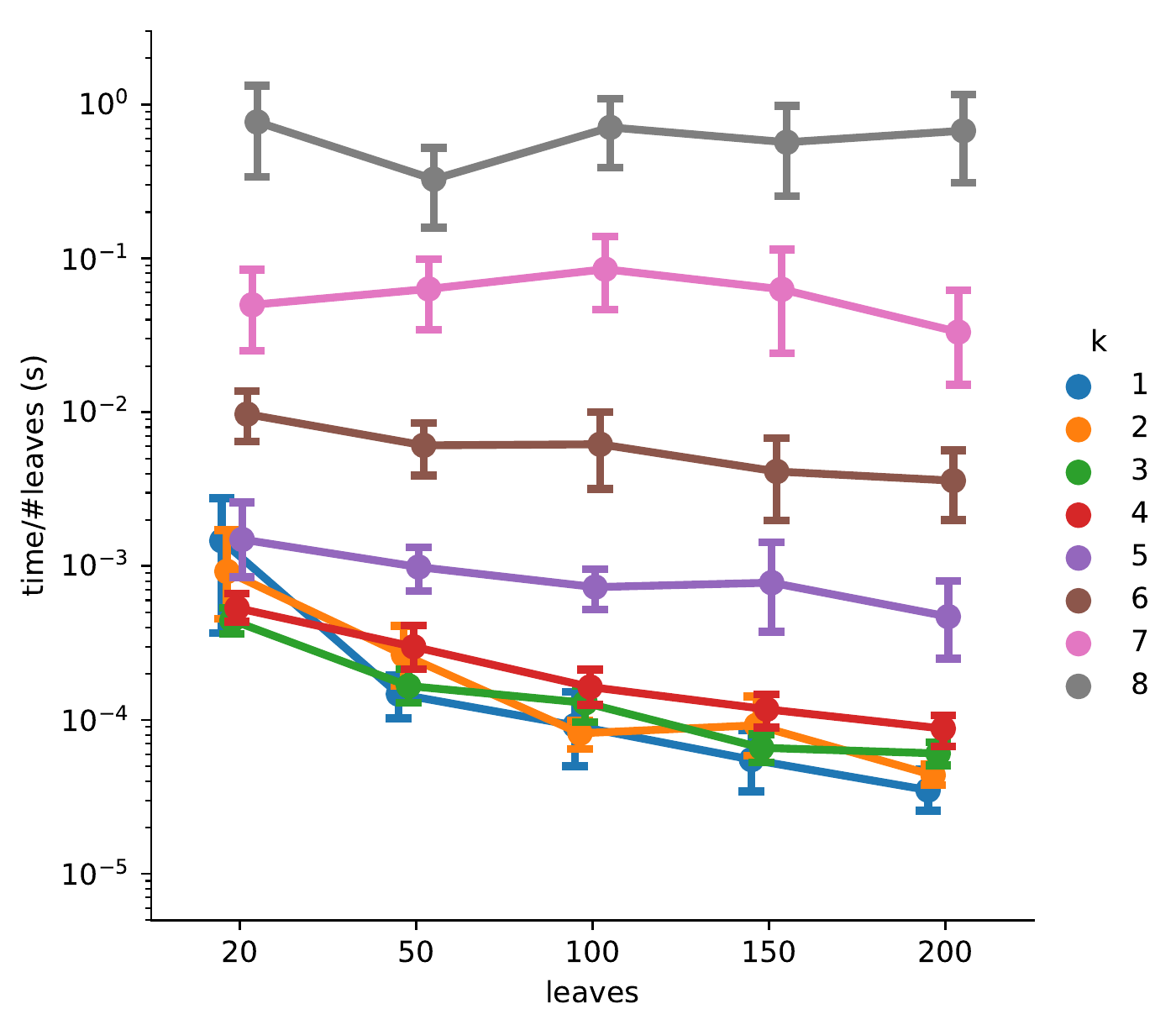}
  \caption{Running times of the algorithm with redundant branch elimination
    on al synthetic test inputs divided by the number of trees (left) and
    the number of leaves (right).
    Error bars denote a 95\% confidence
    interval.}\label{fig:NorbertTreesLeavesCorrected}
\end{figure}

\subsubsection{Comparison with \textsc{HybroScale}}

The most interesting question is whether optimal tree-child networks can
be computed significantly faster than unrestricted hybridization networks.
To answer this question, we compared the running time of our algorithm
against that of its closest competitor \textsc{HybroScale}, which computes
unrestricted hybridization networks.
For this comparison, we used synthetic data and real-world data.
In order to test a wide range of test inputs, we
limited the time per run to 20 minutes for synthetic inputs and to 60 minutes
for real-world inputs.
Since we ran our algorithm with 8 threads, we did the same for
\textsc{HybroScale}.

\paragraph{Synthetic data.}

We tested both our algorithm and \textsc{HybroScale} on 6 test inputs
for every possible combination of the following parameters:
\begin{itemize}[noitemsep]
\item Number of trees: $t \in \{3,5,10,20\}$
\item number of reticulations: $k \in \{1,2,3,4,5,6,7,8,9,10,11,12\}$
\end{itemize}
and on six inputs with 2 trees and $k \in \{2,4,\ldots,28, 30\}$. 
All instances had 20 leaves.
We use a wider range of reticulation numbers (and compensate for this by
using only three instances for each value of $k$) for inputs with only two
trees because we expected \textsc{HybroScale} to run very fast on such inputs
(because MAF-based algorithms are very fast for pairs of trees).

As can be seen in Figure~\ref{fig:HybroVsNorbertTrees} 
and as expected,
\textsc{HybroScale} outperforms our algorithm on inputs consisting of two trees
and for more than 7 reticulations.
For more than two trees, our algorithm runs faster than \textsc{HybroScale} due
to the near-linear dependence of our algorithm on the number of trees and the
exponential dependence of \textsc{HybroScale} on the number of trees.
The difference becomes very pronounced for 10 and 20 trees, where
\textsc{HybroScale} was unable to solve most instances whereas our algorithm
solved all test instances within the 20-minute time limit.
Additionally, Hybroscale ran out of memory on certain occasions. 

\begin{figure}[t]
  \centering
  \includegraphics[width=\textwidth]{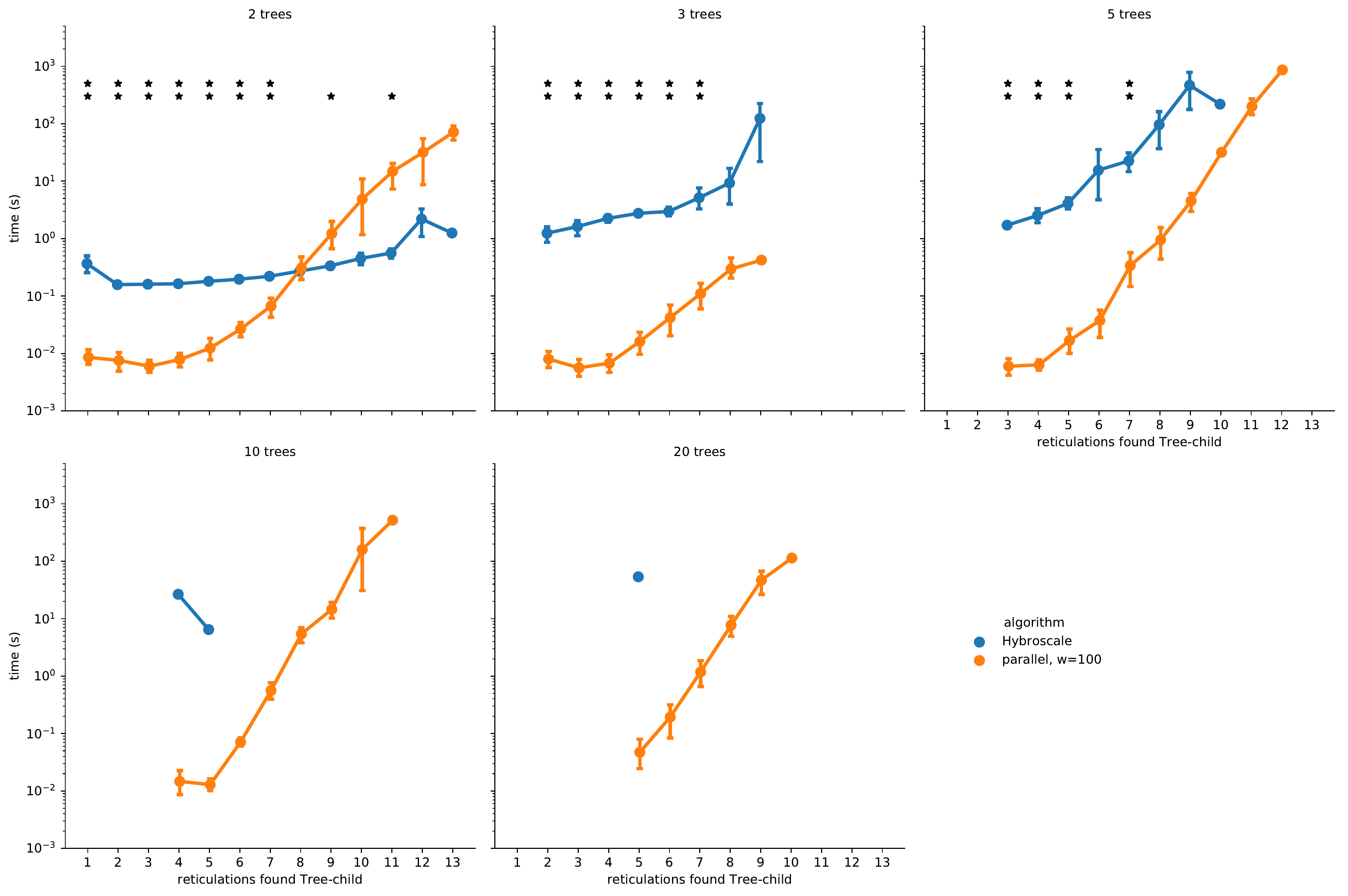}
  \caption{Running times of our algorithm and \textsc{HybroScale} on
    synthetic inputs.
    Since our algorithm solves all test instances and \textsc{HybroScale} does
    not, we choose the tree-chlid hybridization number as the $x$-axis.
    Bars indicate a 95\% confidence interval.
    Stars indicate significant differences between the running times of
    the two algorithms using an independent $t$-test with unequal variances (*:
    $p < 0.05$, **: $p < 0.01$).}\label{fig:HybroVsNorbertTrees}
\end{figure}

\paragraph{Real-world data.}

For this experiment, we used the same data set as in
Section~\ref{sec:real-data}.
As mentioned before, our algorithm solved 306 of the 630 inputs in the 1-hour time limit;
\textsc{HybroScale} solved 152 inputs, which were a subset of the 306
inputs solved by our algorithm.
On 5 of the 2-tree inputs, \textsc{HybroScale} outperformed our algorithm.
On all other inputs, including all other 2-tree inputs, our algorithm
was faster.
Figure~\ref{fig:HybroVsNorbertTreesRealLevel} shows the detailed results.

\begin{figure}[p]
  \centering
  \includegraphics[width=\textwidth]{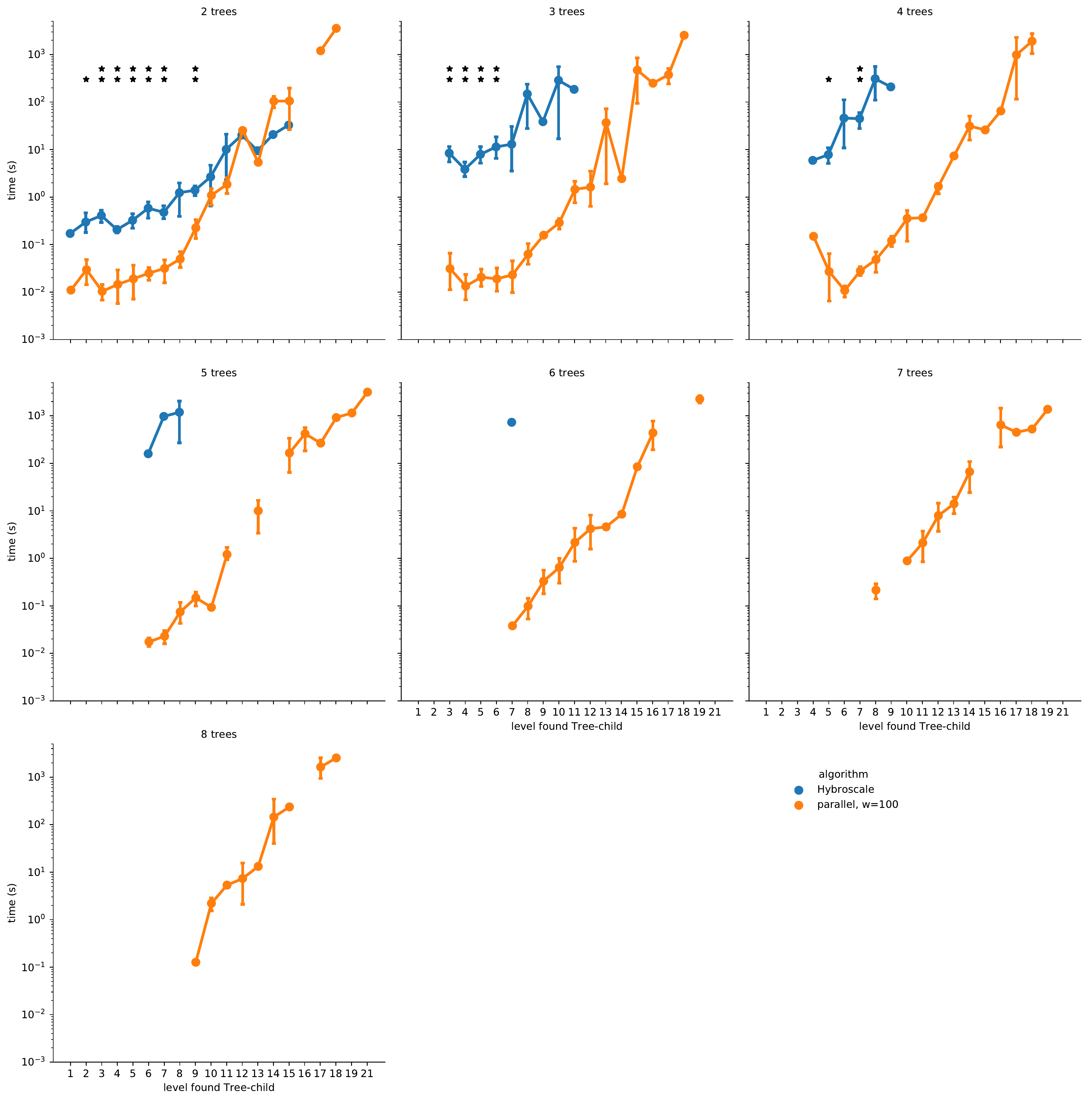}
  \caption{Running times of our algorithm and \textsc{HybroScale} on
    real-world inputs.
    Since our algorithm solves all test instances that \textsc{HybroScale} was
    able to solve, we chose the tree-child level as the $x$-axis.
    Bars indicate a 95\% confidence interval.
    Stars indicate significant differences between the running times of
    the two algorithms using an independent $t$-test with unequal variances (*:
    $p < 0.05$, **: $p < 0.01$).}\label{fig:HybroVsNorbertTreesRealLevel}
\end{figure}

\subsubsection{Hybridization versus Tree-Child Hybridization}

The final question we were interested in was whether optimal tree-child
hybridization networks have significantly more reticulations than the
optimal unrestricted hybridization networks for the same sets of trees or
whether tree-child hybridization networks are often also optimal hybridization
networks.

Of the 268 synthetic inputs that both our algorithm and
\textsc{HybroScale} were able to solve, only 3 had a greater
tree-child hybridization number than their hybridization number.
For all three inputs, the difference was 1.

Of the 142 real-world inputs solved by both our algorithm and
\textsc{HybroScale}, 21 had a greater tree-child hybridization number than
their hybridization number.
For 20 of these inputs, the difference was 1; for 1 input, the difference
was 2.

This indicates that very often, tree-child hybridization networks achieve
the optimal hybridization number and, even when they do not, they offer
a reasonable approximation of optimal hybridization networks.
Given that they are substantially easier to compute, as our results in
the previous subsection demonstrate, tree-child networks therefore offer
a useful analysis tool that can be used in place of hybridization networks
in many instances.

\section{Conclusion}

\label{sec:conclusions}

We have presented the first fixed-parameter algorithm for computing optimal
tree-child networks for many trees, based on the recently introduced
concept of tree-child cherry picking sequences.
While the theoretical running time of our algorithm is substantially greater
than MAF-based network construction methods for two trees,
our experimental results confirm that our algorithm can be used to solve
non-trivial real-world inputs efficiently.
Similarly to MAF-based algorithms for two trees, a key factor determining
whether an instance can be solved efficiently is whether it can be decomposed
into non-trivial clusters.
While it comes as no surprise that randomly generated inputs consisting of
more than two trees (almost) cannot be decomposed into clusters and thus
cannot be solved efficiently, except for fairly small numbers of reticulations,
the real-world inputs in our experiments contained sufficiently many
non-trivial clusters, which allowed us to solve some inputs with up to 50
reticulations within one hour or less.

The closest competitor of our algorithm, \textsc{Hybroscale}, which computes
unrestricted hybridization networks, outperforms our algorithm on inputs
consisting of two trees, which is to be expected because MAF-based methods
are very efficient for computing optimal hybridization networks for pairs
of trees.
Already for 3 trees, our algorithm outperforms \textsc{Hybroscale} and,
for more than 6 trees, \textsc{Hybroscale} cannot solve any of the inputs
our algorithm can solve, due to its exponential dependence on the number
of trees.

While our results are promising, they should only be considered to be a first
important step towards efficient algorithms for computing (tree-child)
hybridization networks from many input trees.
Here are two natural and important open questions to be addressed
by future work:

Can tree-child hybridization networks be computed faster than in
$O((ck)^k \cdot \textrm{poly}(n, t))$ time, ideally in $O(c^k \cdot
\textrm{poly}(n, t))$ time?
For temporal networks, a recent result \cite{sander} shows that this is indeed
the case.
An interesting open question is whether the techniques used in that algorithm
can also be used to obtain faster algorithms for computing general
tree-child networks.

Most real-world inputs are multifurcating, as a result of suppressing
\leo{branches} in gene trees with low support.
Thus, it would be of great importance to obtain efficient methods for
constructing (tree-child) hybridization networks from multifurcating trees.
Our algorithm is able to do this but only if we sacrifice the FPT bound
on its running time:
the bound on the number of non-trivial cherries in
Proposition~\ref{prop:few-non-trivial-cherries}, which is the key to bounding
the branching number of our algorithm, holds only if the input trees are
binary.
It remains an open question whether there exists a fixed-parameter algorithm
for computing optimal tree-child hybridization networks for multifurcating
trees.


\bibliographystyle{plain}
\bibliography{bibliography}

\clearpage

\appendix

\section{Construction of a Tree-Child Network from a Tree-Child Cherry Picking
  Sequence}

\begin{procedure}[h]
  \caption{TreeChildNetworkFromSequence($\T,S$)\label{alg:tree-child-network-from-sequence}}
  \KwIn{A set of $X$-trees $\T$ and a tree-child cherry picking sequence $S =
    \langle(x_1, y_1), \ldots, (x_r, y_r), (x_{r+1}, -)\rangle$ for $\T$}
  \KwOut{ A tree-child phylogenetic network $N$ on $X$ that displays $\T$ and with
    reticulation number at most $w(S)$}
  \uIf{$|X| = 1$}{
    \Return{the unique network consisting of a single node labelled with the element of $X$\;}
  }
  \Else {
    $N \gets {}$the directed graph with nodes $\rho$ and $x_{r+1}$ and a single
    edge ${\rho}x_{r+1}$\;
    \For{$j \gets r$ \textbf{downto} $1$}{
      Split the parent edge of $y_j$ in $N$ by adding a node $p$\;
      \uIf{$x_j$ is a leaf of $N$}{
        \uIf{$x_j$'s parent in $N$ is a reticulation $r$}{
          $q \gets r$\;
        }
        \Else{
          Split the parent edge of $x_j$ in $N$ by adding a node $q$\;
        }
      }
      \Else{
        Add $x_j$ to $N$\;
        $q \gets x_j$\;
      }
      Add the edge $pq$ to $N$\;
    }
    \Return{$N$\;}
  }
\end{procedure}
 
 
\end{document}